\newcommand{\bigO}{\mathcal{O}}
\newcommand{\bigL}{\mathcal{L}}
\newcommand{\smt}{\operatorname{smt}}
\newcommand{\mst}{\operatorname{mst}}
\newcommand{\tsp}{\operatorname{tsp}}
\newcommand{\smtvertex}{l}
\newcommand{\term}{R}
\newcommand{\rootterm}{r_0}
\newcommand{\sources}{{\term \setminus \{ \rootterm \}}}
\newcommand{\sourcesalone}{{(\sources)}}
\newcommand{\nonneg}{\mathbb{R}_{\geq 0}}
\newcommand{\pos}{\mathbb{R}_{> 0}}
\newcommand{\back}{b}
\newcommand{\jterm}[1]{\bigL_{#1}}
\newcommand{\jplustterm}[1]{\jterm{#1}}
\newcommand{\onetree}{\bigL_{\text{1-tree}}}
\newcommand{\tspbound}{\bigL_{TSP}}
\newcommand{\dist}{d}
\newcommand{\polylog}{\operatorname{polylog}}
\newcommand{\graphminus}{-}
\newcommand{\graphplus}{+}
\newcommand{\np}{NP}
\newcommand{\distgraph}[1]{G_{#1}}
\newcommand{\numb}[1]{\num[round-mode=places,round-precision=\prec]{#1}}
\newcommand{\numbP}[1]{\num[round-mode=places,round-precision=2]{#1}}
\newcommand{\explicitqed}{\qed}
\newcommand{\subgraphname}{H}
\newcommand{\simpathwidth}{p}
\newcommand{\nosol}{ -- }
\newcommand{\notime}{timeout}
\newcommand{\nomem}{memout}
\newcommand{\testset}[3]
{
\begin{longtable}{l r r r r r}
Instance & $|V|$ & $|E|$ & $|\term|$ & Opt & Time [s] \\
\midrule \midrule \endhead
#2
\caption{Results on the testset #1. #3}
\end{longtable}
\vspace{\extraspace}
}
\newcommand{\typeinternal}[1] {Type: #1}
\newcommand{\typeVlsi}
{\typeinternal{VLSI-derived grid graphs with holes.}}
\newcommand{\typeRandom}
{\typeinternal{Random sparse graphs with random costs.}}
\newcommand{\typeGridtwoD}
{\typeinternal{2D grid graphs.}}
\newcommand{\typeGridthreeD}
{\typeinternal{3D grid graphs.}}
\newcommand{\typePUC}
{\typeinternal{Artificial instances designed to be hard for existing solvers.}}
\newcommand{\typeIncidenceCost}
{\typeinternal{Random graphs with so-called incidence costs, designed to defy preprocessing.}}
\newcommand{\typeGroup}
{\typeinternal{Group Steiner tree instances arising from VLSI design modeled as Steiner tree instances by connecting each terminal to the vertices if its group by edges of very high cost.}}
\newcommand{\typeFST}
{\typeinternal{Rectilinear instances after FST-preprocessing by GeoSteiner.}}
\newcommand{\typeObstacle}
{\typeinternal{Instances of the Obstacle-avoiding rectilinear Steiner tree problem after FST-preprocessing by ObSteiner and merging the FSTs into a single graph.}}
\newcommand{\typeViennaSimple}
{\typeinternal{Real-world telecommunication networks after a ``simple'' preprocessing routine.}}
\newcommand{\typeViennaAdvance}
{\typeinternal{Real-world telecommunication networks after an ``advanced'' preprocessing routine. We report the cost of an optimum solution in the original instance, computed as the sum of an
optimum solution in the reduced instance and the fixed cost induced by the reductions.}}
\newcommand{\typeGap}
{\typeinternal{Artificial instances arising from generalizations of Steiner tree LP gap examples.}}
\newcommand{\typePUCn}
{\typeinternal{Unweighted instances of the PUC testset, which contains artificial instances designed to be hard for existing solvers.}}
\newcommand{\typePfourE}
{\typeinternal{Complete graphs with Euclidean costs.}}
\newcommand{\typePfourZ}
{\typeinternal{Complete graphs with random costs.}}
\newcommand{\typePsixE}
{\typeinternal{Sparse graphs with Euclidean costs.}}
\newcommand{\typePsixZ}
{\typeinternal{Sparse graphs with random costs.}}
\newcommand{\typeArt}
{\typeinternal{Artificial instances.}}
\newcommand{\lowerboundalgo}[1]
{\IfEqCase{#1}{
{ (0) }{$\equiv 0$}%
{ (1) }{$\jterm{2}$}%
{ (2) }{$\jplustterm{2}$}%
{ (3) }{$\jterm{3}$}%
{ (4) }{$\jplustterm{3}$}%
{ (5) }{$\onetree$}%
{ (6) }{$\tspbound$}%
{ (7) }{$\max(\jplustterm{2},\onetree)$}%
{ (8) }{$\max(\jplustterm{2},\tspbound)$}%
{ (9) }{$\max(\jplustterm{3},\onetree)$}%
{ (10) }{$\max(\jplustterm{3},\tspbound)$}%
} & #1
}
\newcommand{\candset}{\mathcal{J}}
\newcommand{\candsubset}{{\candset'}}
\newcommand{\candelem}{J}
\newcommand{\doubleEdgeColor}[6]{
\path (#1) edge [color=#3,thick,
#4,
bend left=10] (#2);
\path (#1) edge [color=#5,thick,
#6,
bend right=10] (#2);}
\begin{document}

\title{Dijkstra meets Steiner: a fast exact goal-oriented Steiner tree algorithm}
% \subtitle{Do you have a subtitle?\\ If so, write it here}

%\titlerunning{Short form of title}        % if too long for running head

\author{Stefan Hougardy,
        Jannik Silvanus, \and
        Jens Vygen
}

% \email[R.~Campbell]{campr@@galois.psu.edu}

\institute{Research Institute for Discrete Mathematics,
University of Bonn\\
\email{\{hougardy, silvanus, vygen\}@or.uni-bonn.de}
}

\date{September 8, 2015}

\maketitle

\begin{abstract}
We present a new exact algorithm for the Steiner tree problem in edge-weighted graphs.
Our algorithm improves the classical dynamic programming approach by Dreyfus and Wagner.
We achieve a significantly better practical performance via pruning and future costs,
a generalization of a well-known concept to speed up shortest path computations.
% Other empirically fast algorithms are based on reductions, heuristics, linear programming and branching.
Our algorithm matches the best known worst-case run time and has a fast, often superior, practical performance:
on some large instances originating from VLSI design, previous best run times are improved upon by orders of magnitudes.
We are also able to solve larger instances of the $d$-dimensional rectilinear Steiner tree problem for $d \in \{3, 4, 5\}$,
% with up to 40 terminals for $d = 3$, up to 20 terminals for $d = 4$ and up to 18 terminals for $d = 5$,
whose Hanan grids contain up to several millions of edges.

\keywords{Graph algorithms; Steiner tree problem; Dynamic programming; Exact algorithm}
% \PACS{PACS code1 \and PACS code2 \and more}
% \subclass{MSC code1 \and MSC code2 \and more}
\end{abstract}

\section{Introduction}
We consider the well-known Steiner tree problem (in graphs): Given an undirected graph $G$, costs $c : E(G) \rightarrow \nonneg$ and a terminal set $\term \subseteq V(G)$, find a tree $T$ in $G$ such that $\term \subseteq V(T)$ and
$c(E(T))$ is minimum.
% Such a tree is also referred to as Steiner minimum tree.
The decision version of the Steiner tree problem is one of the classical \np-complete problems \cite{karp};
it is even \np-complete in the special case that $G$ is bipartite with $c \equiv 1$.
Furthermore, it is \np-hard to approximate the Steiner tree problem within a factor of $\frac{96}{95}$ \cite{apxhard}.
% % While for a long time the best known  approximation algorithms were combinatorial, including the $1.55$-approximation by Robins and Zelikovsky \cite{Robins05tighterbounds},
The currently best known  approximation algorithm by Byrka et al.\ \cite{byrkaln4} uses polyhedral methods to achieve a $1.39$-approximation.
The Steiner tree problem has many applications, in particular in VLSI design \cite{vlsi}, where electrical connections are realized by Steiner trees.

From now on, we will refer to $|V(G)|$ by $n$, $|E(G)|$ by $m$ and $|\term|$ by $k$.
Dreyfus and Wagner \cite{dreyfus} applied dynamic programming to the Steiner tree problem to obtain an exact algorithm with a run time of $\bigO(n(n\log n + m) +3^kn + 2^kn^2)$ if implemented using Fibonacci heaps \cite{fibheap}.
In 1987, Erickson, Monma and Veinott  \cite{veinott} improved the run time to $\bigO(3^k n + 2^k (n \log n + m))$ using a very similar approach.
In 2006, Fuchs et al.\ \cite{asc} proposed an algorithm with a run time of $\bigO((2+\delta)^k n^{\left({{\ln(\frac{1}{\delta})}/{\delta}}\right)^\zeta})$ for every sufficiently small $\delta > 0$
and $\zeta > \frac{1}{2}$, improving the exponential dependence on $k$ from $3^k$ to $(2+\delta)^k$.
Vygen \cite{vygen11} developed an algorithm with a worst-case run time of $\bigO(nk 2^{k + \log_2(k)\log_2(n)})$, which is the fastest known algorithm if $f(n) < k < g(n)$
for some $f(n) = \polylog(n)$ and $g(n) = \frac{n}{2} - \polylog(n)$. However, for $k < 4 \log n$, the run time obtained by Erickson, Monma and Veinott \cite{veinott} is still the best known. See \cite{vygen11} for a more detailed analysis of the run times mentioned above.

For graphs with treewidth $t$, one can solve the Steiner tree problem in time $\bigO(n 2^{\bigO(t)})$ \cite{bodlaender13}. An implementation of this algorithm was evaluated in \cite{bodlander13impl}.
Polzin and Vahdati Daneshmand \cite{polzin06} proposed an algorithm with a worst-case run time
of $\bigO(n 2^{\simpathwidth \log \simpathwidth + 3\simpathwidth + \log \simpathwidth})$ where $\simpathwidth$ is a parameter closely related to the pathwidth of $G$. They use this algorithm as a subroutine in their successful reduction-based Steiner tree solver \cite{disspolzin, dissvahdati}.

Except for the last mentioned algorithm, these results have played a very limited role in practice. Instead, empirically successful algorithms rely on preprocessing and reduction techniques, heuristics and branching:
First, reductions \cite{Bea84, DV89a, PV01c, UdAR99} are applied to reduce the size of the graph and the number of terminals, guaranteeing that optimum solutions of the reduced instance correspond to optimum solutions of the original instance. These reductions are not limited to simple local edge elimination tests, but may also rely on linear programming formulations and optimum solutions of partial instances.
Primal and dual \cite{dAUW01, wongdualascent} heuristics yield good upper and lower bounds, in many cases even resulting in a provably optimum solution. If these methods do not already solve the instance, enumerative algorithms are used.
To this end, various authors \cite{dAUW01, CGR92, KM98} perform branch and cut. However, the solver by Polzin and Vahdati Daneshmand \cite{disspolzin, dissvahdati}, which achieved the best results so far, uses a
% vertex-based
% combinatorial
branch and bound approach, where high effort is put into single branching nodes.

We propose a dynamic programming based algorithm with a worst-case run time of $\bigO(3^k n + 2^k (n \log n + m))$, matching the best known bound for small $k$, and which is fast in practice.
% On some large instances, previous results are even beaten by orders of magnitudes.
Good practical performance is achieved by effectively pruning partial solutions and using \emph{future cost} estimates.
The latter are motivated by the similarity of our algorithm with Dijkstra's algorithm \cite{dijkstra} and the well-known speed-up technique for Dijkstra's algorithm (first described by Hart, Nilsson and Raphael \cite{astar}) which uses reduced edge costs.
More precisely, given an instance $(G, c, s, t)$ of the shortest path problem, where we assume $G$ to be a directed graph,
we use a feasible potential $\pi$, which is a function $\pi: V(G) \rightarrow \nonneg$
with
 \begin{align}
  \pi(t) &= 0 \label{feasiblepotential:1} \\
\shortintertext{and}
  \pi(v) &\leq \pi(w) + c((v, w)) \label{feasiblepotential:2}
\intertext{for all $(v,w) \in E(G)$. Define reduced costs $c_{\pi}$ by}
c_{\pi}(e)&:=c(e)+\pi(w)-\pi(v) \geq 0 \label{reducededgecosts}
 \end{align}
for every $e=(v,w)\in E(G)$. Then, run Dijkstra's algorithm on the instance $(G, c_{\pi}, s, t)$.
The numbers $\pi(v)$ are lower bounds on the
distance from $v$ to $t$ (we also say that $\pi(v)$ estimates the \emph{future cost} at $v$).
Moreover, the cost of every $s$-$t$-path changes by the same amount when going from $c$ to $c_{\pi}$, namely $-\pi(s)$, so any shortest $s$-$t$-path
in $(G,c_{\pi})$ is a shortest $s$-$t$-path in $(G,c)$.
If the future costs are good lower bounds, only vertices close to a shortest path will be labeled by Dijkstra's algorithm before $t$ is
labeled permanently (i.e., the distance to $t$ is known) and the algorithm can be stopped. This can lead to huge speedups.

The rest of this paper is organized as follows:
In Section\nobreakspace \ref {algosection}, we generalize this future cost idea from paths to Steiner trees  and describe our algorithm.
Examples of future cost estimates are given in Section\nobreakspace \ref {seclb}.
Section\nobreakspace \ref {secpruning} introduces a pruning technique to further improve practical performance.
Section\nobreakspace \ref {secresults} contains  implementation details and computational results.

\section{The Algorithm}
\label{algosection}
Now, let $(G, c, \term)$ be an instance of the Steiner tree problem, where $G$ is an undirected graph, $c : E(G) \rightarrow \nonneg$ is a cost function on the edges and $\term$ is the set of terminals to be connected.
As usual, for a set $X \subseteq V(G)$, we denote by $\smt(X)$, short for Steiner minimal tree, the cost of an optimum Steiner tree for the terminal set $X$.
Our algorithm uses an arbitrary root terminal $\rootterm \in \term$.
We will call the terminals in $\sources$ source terminals.
The algorithms by Dreyfus and Wagner\nobreakspace \cite{dreyfus} and Erickson et al.\ \cite{veinott} as well as our algorithm use dynamic programming to compute
$\smt(\{v\} \cup I)$ for $(v,I) \in V(G) \times  2^\sources$.
Then, at termination, $\smt(\{\rootterm\} \cup \sourcesalone)$ is the cost of an optimum Steiner tree.

The former two algorithms work as follows: For each $i$ from 1 to $|\sources|$, they consider
all $I \subseteq \sources$ with $|I| = i$ one after another and then compute $\smt(\{v\} \cup I)$ for all $v \in V(G)$.
This way, it is guaranteed that when computing $\smt(\{v\} \cup I)$, the values $\smt(\{w\} \cup I^\prime)$ for all $w \in V(G)$ and $I^\prime \subset I$ are already known.
However, this leads to an exponential best case run time and memory consumption of $\Omega(2^k n)$.
In contrast, our new algorithm considers all subsets of source terminals simultaneously, using a labeling technique similar to Dijkstra's algorithm.
This way, we do not necessarily have to compute $\smt(\{v\} \cup I)$ for all pairs $(v,I)$.
% We can further improve practical performance by pruning and using future cost estimates, a generalization of a speedup technique known from Dijkstra's algorithm.

Our new algorithm labels from the source terminals towards the root $\rootterm$. More precisely, the algorithm labels elements of $V(G) \times 2^\sources$.
Each label $(v, I)$ represents an optimum Steiner tree for $\{v\} \cup I$ found so far.
As in Dijkstra's algorithm, each iteration selects one label $(v,I)$ and declares it to be permanent.
Each time a label $(v,I)$ becomes permanent, all neighbors $w$ of $v$ are checked and updated if the Steiner tree represented by $(v,I)$ plus the edge $\{v,w\}$ leads to a better solution for $(w,I)$ than previously known. This operation is well-known from Dijkstra's algorithm.
In addition, for all sets $J \subseteq \sourcesalone \setminus I$ it is checked whether the Steiner trees for $(v,I)$ and $(v,J)$ combined to a tree for $(v,I \cup J)$ lead to a better solution than previously known.

To allow a simpler presentation, we restrict ourselves to instances without edges of zero cost, as these can be contracted in a trivial preprocessing step.

Now, we introduce the notion of valid lower bounds, which are used by the algorithm to estimate the future cost of a label $(v, I)$.

\begin{definition}
Let $(G, c, \term)$ be an instance of the Steiner tree problem and $\rootterm \in \term$.
A function $\bigL : V(G) \times 2^\term \rightarrow \mathbb{R}_{\geq 0}$ is called a \emph{valid lower bound} if
 \begin{align*}
  \bigL(\rootterm,\{\rootterm\}) = 0
 \end{align*}
and
 \begin{align*}
  \bigL(v,I) \leq \bigL(w,I^\prime) + \smt((I \setminus I^\prime) \cup\{v,w\})
 \end{align*}
 for all $v,w \in V(G)$ and $\{\rootterm\} \subseteq I^\prime \subseteq I \subseteq \term$.
\end{definition}

Note that the values $\bigL(v,I)$ for $\rootterm \notin I$ do not affect whether $\bigL$ is a valid lower bound.
Also note that by choosing $I^\prime = \{\rootterm\}$ and $w = \rootterm$, we have $\bigL(v,I) \leq \smt(I \cup\{v\})$, so a valid lower bound by definition indeed is a
lower bound on the cost of an optimum Steiner tree.
Moreover, if $e = \{v,w\} \in E(G)$ is an edge, by choosing $I^\prime = I$, we have
\begin{align*}
  \bigL(v,I) \leq \bigL(w,I) + \smt(\{v,w\}) \leq \bigL(w,I) + c(e).
\end{align*}
This shows that valid lower bounds generalize feasible potentials as defined in (\ref{feasiblepotential:1}) and (\ref{feasiblepotential:2}). In fact, our algorithm applied to the case $|\term| = 2$
is identical to Dijkstra's algorithm using future costs in the very same way.

% no algorithm numbering..
\renewcommand{\algorithmcfname}{}
\renewcommand{\thealgocf}{}
\SetAlgoCaptionSeparator{}

\begin{figure}[t!]

\begin{algorithm}[H]
%\SetAlgoLined
\SetKwInOut{Input}{Input}\SetKwInOut{Output}{Output}
\SetKwFunction{backtrack}{backtrack}

\Input{A connected undirected graph $G$, costs $c : E(G) \rightarrow \pos$, a terminal set $\term \subseteq V(G)$, a root terminal $\rootterm \in \term$,
and (an oracle to compute) a valid lower bound $\bigL : V(G) \times 2^\term \rightarrow \mathbb{R}_{\geq 0}$.}

\Output{The edge set of an optimum Steiner tree for $\term$ in $G$.}
\BlankLine
$\mathrlap{\smtvertex(v,I)}\phantom{\smtvertex(s,\{s\})}
  := \infty$ for all $(v,I) \in V(G) \times 2^\sources$\label{dijkstrasteiner:firstline}\;
$\smtvertex(s,\{s\}) := 0$ for all $s \in \sources$\;
$\mathrlap{\smtvertex(v,\emptyset)}\phantom{\smtvertex(s,\{s\})}
 := 0$ for all $v \in V(G)$\;
$\mathrlap{\back(v,I)}\phantom{\smtvertex(s,\{s\})}
 := \emptyset$ for all $(v,I) \in V(G) \times 2^\sources$\;
$\mathrlap{N}\phantom{\smtvertex(s,\{s\})}
 := \{(s,\{s\})\ |\ s \in \sources \}$\;
$\mathrlap{P}\phantom{\smtvertex(s,\{s\})}
:= V(G) \times \{\emptyset\}$\label{dijkstrasteiner:afterinit}\;
\While{$(\rootterm,\sources) \notin P$}{
% Choose $(v, I) \in (V(G) \times 2^\sources) \setminus P$ with $\smtvertex(v,I) + \bigL(v, \term \setminus I) \leq \smtvertex(w, J) + \bigL(w, \term \setminus J)$
% for all $(w,J) \in (V(G) \times 2^I) \setminus P$\label{dijkstrasteiner:firstloop}\;
% Choose $(v, I) \in (V(G) \times 2^\sources) \setminus P$ with $\smtvertex(v,I)$ minimum\;
Choose $(v, I) \in N$ minimizing $\smtvertex(v,I) + \bigL(v, \term \setminus I)$\label{dijkstrasteiner:firstloop}\;
$N := N \setminus \{(v, I)\}$\;
$\mathrlap{P}\phantom{N} := P \cup \{(v, I)\}$\;
\For{\emph{\textbf{all}} \upshape{edges} $e = \{v,w\}$ \upshape{incident to} $v$}{
\If{$\smtvertex(v,I) +  c(e) < \smtvertex(w, I)$ \emph{\textbf{and}} $(w, I) \notin P$ \label{dijkstrasteiner:edge}}
{
$\mathrlap{\smtvertex(w,I)}\phantom{\back(w,I)} := \smtvertex(v, I) + c(e)$\; \label{dijkstrasteiner:update:neighbor}
$\back(w,I) := \{(v, I)\}$\;
$\mathrlap{N}\phantom{\back(w,I)}  := N \cup \{(w,I)\}$\; \label{dijkstrasteiner:afterupdate:neighbor}
}
}
\For{\emph{\textbf{all}} $\emptyset \neq J \subseteq \sourcesalone \setminus I$ \upshape{with} $(v,J) \in P$\label{dijkstrasteiner:superset}}{
\If{$\smtvertex(v,I) + \smtvertex(v, J) < \smtvertex(v, I \cup J)$ \emph{\textbf{and}} $(v, I \cup J) \notin P$}
{
$\mathrlap{\smtvertex(v,I \cup J) }\phantom{\back(v,I \cup J)}
:= \smtvertex(v, I) +\smtvertex(v, J)$\;
$\back(v,I \cup J) := \{(v, I), (v, J)\}$\;
$\mathrlap{N}\phantom{\back(v,I \cup J)}
 := N \cup \{(v,I \cup J)\}$\;\label{dijkstrasteiner:afterupdate:superset}
}
}\label{dijkstrasteiner:endloop} \label{dijkstrasteiner:endupdatesupersets}
}
 \KwRet{\backtrack{$\rootterm, \sources$}}\;
\BlankLine
\SetNlSty{phantom}{}{}
  \SetKwProg{myproc}{Procedure}{}{}
  \myproc{\backtrack{$v, I$}}{
   \setcounter{AlgoLine}{26}
\SetNlSty{textbf}{}{}
   \eIf{$\back(v,I) = \{(w,I)\}$}
    {\KwRet{$\{\{v,w\}\} \cup \text{\backtrack{$w,I$}}$}\;}
    {\KwRet{$\bigcup_{(w,I^\prime) \in \back(v,I)} \text{\backtrack{$w,I^\prime$}}$\label{dijkstrasteiner:algo:backtrack:e}}\;}
%    $V := \{v\} \cup \bigcup_{(w,I^\prime) \in \back(v,I)} V(\text{\backtrack{$w,I^\prime$}})$\label{dijkstrasteiner:algo:backtrack:v}\;
%    $V := \{v\} \cup \bigcup_{e \in E}e$ \;
   }
\caption{\textbf{Dijkstra-Steiner Algorithm}}
\label{dijkstrasteiner}
\end{algorithm}
\caption{The Dijkstra-Steiner Algorithm}
\label{dijkstrasteiner:figure}
\end{figure}

Our new algorithm is described in Figure \ref{dijkstrasteiner:figure}.
For each label $(v,I) \in V(G) \times 2^\sources$, the algorithm stores the cost $\smtvertex(v,I) \in \nonneg \cup \{ \infty \}$ of the cheapest Steiner tree for $\{v\} \cup I$ found so far as well as backtracking data $\back(v,I) \subseteq V(G) \times 2^\sources$ which is used to construct the Steiner tree represented by this label.
If $\back(v,I)$ is not empty, it will always either be of the form $\back(v,I) = \{(w, I)\}$ where $w$ is a neighbor of $v$ or of the form
$\back(v,I) = \{(v, I_1), (v, I_2)\}$ where $I_1$ and $I_2$ form a partition of $I$ (into nonempty disjoint sets).
In the first case, i.e., $\back(v,I) = \{(w, I)\}$, the Steiner tree represented by the
label $(v, I)$ contains exactly one edge incident to $v$, which is $\{v,w\}$.
In the second case, i.e., $\back(v,I) = \{(v, I_1), (v, I_2)\}$, the Steiner tree represented by $(v, I)$
% contains at least two edges incident to $v$ and thus
can be split into two Steiner trees for
the terminal sets $I_1 \cup \{v\}$ and $I_2 \cup \{v\}$, where $I_1$ and $I_2$ form a partition of $I$.
To be precise, it may happen that the subgraph of $G$ corresponding to some label $(v, I)$ contains cycles.
However, since we ruled out edges of zero cost, this can only be the case as long as the label is not permanent,
because -- as we will show -- permanent labels correspond to optimal Steiner trees.

We incorporate the valid lower bound $\bigL$ into the algorithm as follows.
First note that instead of running Dijkstra's algorithm on reduced edge costs $c_{\pi}$ as defined in (\ref{reducededgecosts}),
one can equivalently apply the following modification. For a vertex $v \in V(G)$, we denote by $\smtvertex(v)$ the cost of a label $v$ in Dijkstra's algorithm, i.e., the cost of a shortest path connecting $s$ with $v$ found so far. Then, in each iteration, instead of choosing a non-permanent vertex $v$ minimizing
$\smtvertex(v)$ to become permanent, choose a non-permanent vertex minimizing $\smtvertex(v) + \pi(v)$.
We generalize this approach by always choosing a non-permanent label minimizing $\smtvertex(v,I) + \bigL(v, \term \setminus I)$.

By $P \subseteq V(G) \times 2^\sources$ we denote the set of permanently labeled elements. We also maintain a set $N$ of non-permanent labels which are candidates to be selected.
% Without pruning, which is described in Section\nobreakspace \ref {secpruning},
The set $N$ contains exactly the labels
$(v,I) \in (V(G) \times 2^\sources) \setminus P$ with $\smtvertex(v,I) < \infty$.
Note that $\bigL \equiv 0$ is always a valid lower bound, which may serve as an example to help understanding the algorithm. Other examples of valid lower bounds will be discussed in Section\nobreakspace \ref {seclb}.

% \vspace{10pt}

\begin{theorem} \label{dijkstrasteiner:proof:correct}
The Dijkstra-Steiner algorithm works correctly:
Given an instance $(G, c, \term)$  of the Steiner tree problem, $\rootterm \in \term$, and a valid lower bound $\bigL$,
it always returns the edge set of an optimum Steiner tree for $\term$ in $G$.
\end{theorem}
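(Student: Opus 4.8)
The plan is to show two things: first, that the algorithm terminates and every permanent label $(v,I)$ satisfies $\smtvertex(v,I) = \smt(\{v\} \cup I)$; and second, that the backtracking procedure reconstructs an actual Steiner tree of exactly this cost. Combining these at the final label $(\rootterm, \sources)$ yields the claim.

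\emph{Soundness (upper bounds).} First I would establish the easy direction: at all times and for every label $(v,I)$, the value $\smtvertex(v,I)$ is an \emph{upper} bound on $\smt(\{v\} \cup I)$, i.e.\ there really is a connected subgraph (tree after cycle removal) for $\{v\} \cup I$ of cost at most $\smtvertex(v,I)$. This follows by induction on the number of updates performed, examining the two update rules: the neighbor rule extends a tree for $(w,I)$ by the edge $\{v,w\}$, and the merge rule glues trees for $(v,I_1)$ and $(v,I_2)$ at the common vertex $v$. In both cases the cost bookkeeping matches, so the invariant is preserved. This half does not use the lower bound $\bigL$ at all.

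\emph{Optimality (lower bounds).} The crux is the reverse inequality for permanent labels, which I would prove by induction on the order in which labels become permanent, exactly mirroring the correctness proof of Dijkstra's algorithm. Suppose $(v,I)$ is the label just selected in line~\ref{dijkstrasteiner:firstloop}, minimizing $\smtvertex(v,I) + \bigL(v, \term \setminus I)$ over $N$, and suppose for contradiction that $\smtvertex(v,I) > \smt(\{v\} \cup I)$. Fix an optimum Steiner tree $T$ for $\{v\} \cup I$. The key structural step is to find, along the ``boundary'' between the already-correctly-computed permanent labels and $T$, a label $(v', I')$ that is currently in $N$, whose stored value is already optimal, and such that completing it to $\{v\} \cup I$ along $T$ costs exactly $\smt((I \setminus I') \cup \{v, v'\})$. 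Concretely, decomposing $T$ at the points where it meets permanent sublabels yields such a frontier label with $\smtvertex(v',I') = \smt(\{v'\} \cup I')$ and
\begin{align*}
  \smtvertex(v,I) \;\le\; \smtvertex(v',I') + \smt\bigl((I \setminus I') \cup \{v,v'\}\bigr).
\end{align*}
Now I invoke the defining inequality of a valid lower bound with the roles $I \mapsto \term \setminus I'$, $I' \mapsto \term \setminus I$, which gives $\bigL(v', \term \setminus I') \le \bigL(v, \term \setminus I) + \smt((I \setminus I') \cup \{v, v'\})$. Adding $\smtvertex(v',I')$ and using optimality of $(v',I')$, I get $\smtvertex(v',I') + \bigL(v',\term \setminus I') \le \smtvertex(v,I) + \bigL(v,\term\setminus I)$, contradicting that $(v,I)$ was the selection minimizer (since $(v',I') \in N$ would then have been preferred). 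This forces $\smtvertex(v,I) = \smt(\{v\} \cup I)$.

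\emph{Main obstacle and conclusion.} I expect the frontier argument to be the delicate part: one must argue carefully that such a boundary label $(v',I')$ exists in $N$ with its optimal value \emph{already} recorded, which requires relating the tree decomposition of $T$ to the update rules and checking that the relevant neighbor/merge update would have placed $(v',I')$ into $N$ with the correct cost. Handling the two forms of $T$'s structure at the frontier (a subdividing vertex reached by an edge versus a branching vertex splitting the terminal set) corresponds precisely to the two update rules, so the case distinction should close cleanly. Once optimality of all permanent labels is established, termination follows because each label enters $P$ at most once, and $(\rootterm, \sources)$ is eventually reached by connectivity of $G$; finally, the soundness invariant guarantees that \texttt{backtrack} returns a subgraph of cost $\smtvertex(\rootterm, \sources) = \smt(\term)$, which after removing any cycles is an optimum Steiner tree for $\term$.
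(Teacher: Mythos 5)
Your plan follows the same route as the paper's proof: maintain that $\smtvertex$ always upper-bounds the cost of an actual connected subgraph built by the two update rules, and show that each label selected in line~\ref{dijkstrasteiner:firstloop} already carries the optimal value by locating a ``frontier'' label $(v',I')$ on an optimum tree $T$ for $\{v\}\cup I$ and combining the valid-lower-bound inequality (applied to the complemented sets, exactly as you do) with the minimality of the selected label. Two points deserve attention, though.

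First, your displayed inequality is stated in the wrong direction. What the decomposition of $T$ actually gives is $\smtvertex(v',I') + \smt((I\setminus I')\cup\{v,v'\}) = c(T') + c(T\graphminus T') = c(T) = \smt(\{v\}\cup I) \le \smtvertex(v,I)$, and it is this direction that you need in the next step when you add $\smtvertex(v',I')$ to both sides of the lower-bound inequality. As written, $\smtvertex(v,I)\le\smtvertex(v',I')+\smt((I\setminus I')\cup\{v,v'\})$ is precisely the conclusion you are trying to establish, so it cannot be an input to the argument; moreover it would make the contradiction immediate without ever invoking $\bigL$, which signals the slip.

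Second, the step you defer --- the existence of a frontier label $(v',I')\in N$ whose stored value already equals $\smt(\{v'\}\cup I')$ --- is where essentially all of the work in the paper lies, and it is not merely a two-case check. The paper calls a vertex $w$ \emph{proper} if $(w, I\cap V(T_w))$ is already permanent, takes either a non-proper leaf or a non-proper $w$ all of whose neighbors in $T_w$ are proper, partitions $I\cap V(T_w)$ according to those neighbors, and then passes to an inclusion-wise \emph{minimal} subfamily whose union does not yet give a permanent label. That minimality is what guarantees that the merge update has already fired on two permanent constituents, and an auxiliary invariant records that every non-permanent label has already absorbed all updates from permanent neighbors and permanent sub-partitions. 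Your sketch correctly identifies the edge update and the merge update as the two mechanisms, but without the minimal-subfamily device it is not clear why the relevant predecessor labels are already permanent with optimal values, which is exactly what makes the induction close. The overall architecture is right; the part you flag as delicate is the actual content of the proof and still needs to be carried out.
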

\begin{proof}
We will prove that the following invariants always hold when line\nobreakspace \ref {dijkstrasteiner:firstloop} is executed:
\begin{enumerate}[(a)]
\item For each $(v,I) \in N \cup P$:
\begin{enumerate}[({a}1)]
\item $\smtvertex(v,I) = \begin{cases}
     c(\{v,w\}) + \smtvertex(w,I) & \text{if } \back(v,I) = \{(w,I)\}, \\
     \sum_{(v,J) \in \back(v,I)} \smtvertex(v,J) & \text{otherwise},
   \end{cases}$ \label{dijkstrasteiner:proof:correct:a:1}
\item $I \cup \{v\} = \{v\} \cup \bigcup_{(w,J) \in \back(v,I)}J$, \label{dijkstrasteiner:proof:correct:a:2}
\item $\back(v,I) \subseteq P$, and
\texttt{backtrack}($v, I$) returns the edge set $F$ of a connected subgraph of $G$ containing $\{v\} \cup I$ with $c(F) \leq \smtvertex(v,I)$. \label{dijkstrasteiner:proof:correct:a:3}
\end{enumerate}
\label{dijkstrasteiner:proof:firstinv}
\item For each $(v,I) \in P$: \label{dijkstrasteiner:proof:secondinv} \label{dijkstrasteiner:proof:correct:b}
\begin{enumerate}[\phantom{({a}1)}]
\item $\smtvertex(v,I) = \smt(\{v\}\cup I)$.
\end{enumerate}

\item For each $(v,I) \in (V(G) \times 2^\sources) \setminus P$:
\begin{enumerate}[({c}1)]
\item $\smtvertex(v,I) \geq \smt(\{v\}\cup I)$, \label{dijkstrasteiner:proof:correct:c:1}
\item If $I = \{v\}$, then $\smtvertex(v,I) = 0$,  \label{dijkstrasteiner:proof:correct:c:2}
otherwise \\ $\smtvertex(v,I) \leq \min_{\{v,w\} \in E(G), (w,I) \in P}(\smtvertex(w,I) + c(\{v,w\}))$
and \\ $\smtvertex(v,I) \leq \min_{\emptyset \neq I' \subset I \text{ and }(v,I'), (v, I \setminus I') \in P}(\smtvertex(v, I') + \smtvertex(v, I \setminus I'))$.
\item If $\smtvertex(v,I) = \smt(\{v\}\cup I)$, then $(v,I) \in N$. \label{dijkstrasteiner:proof:correct:c:3}
\end{enumerate}\label{dijkstrasteiner:proof:thirdinv}
\item $N$ is not empty and $N \cap P = \emptyset$.\label{dijkstrasteiner:proof:fourthinv}\label{dijkstrasteiner:proof:lastinv}
\end{enumerate}
Assuming  (\ref{dijkstrasteiner:proof:firstinv}) --  (\ref{dijkstrasteiner:proof:lastinv}), the correctness of the algorithm directly follows:
Once we have $(\rootterm,\sources) \in P$,  (\ref{dijkstrasteiner:proof:secondinv})
implies $\smtvertex(\rootterm, \sources) = \smt(\{\rootterm\} \cup (\sources)) = \smt(\term)$. Furthermore,
(\ref{dijkstrasteiner:proof:correct:a:3}) implies that the algorithm returns the edge set $F$ of a connected subgraph of $G$ containing $\term$ with $c(F) \leq \smtvertex(\rootterm, \sources) = \smt(\term)$. Since there
are no edges of zero cost, $F$ indeed is the edge set of a tree.
Invariant (\ref{dijkstrasteiner:proof:fourthinv}) guarantees that in each iteration, a label $(v,I) \in N$ can be chosen
and that the algorithm terminates, since $|P|$ is increased in each iteration.

Clearly, after line\nobreakspace \ref {dijkstrasteiner:afterinit} these invariants hold.
We have to prove that lines\nobreakspace  \ref {dijkstrasteiner:firstloop} to\nobreakspace  \ref {dijkstrasteiner:endloop}
preserve (\ref{dijkstrasteiner:proof:firstinv}) --  (\ref{dijkstrasteiner:proof:lastinv}).
Below, we first verify that (\ref{dijkstrasteiner:proof:firstinv}) and (\ref{dijkstrasteiner:proof:thirdinv}) are preserved. Then, the main part of the proof shows that (\ref{dijkstrasteiner:proof:secondinv}) is preserved. Finally, we will see that the latter argument also shows that (\ref{dijkstrasteiner:proof:fourthinv}) is preserved.\\
Let $(v,I)$ be the label chosen in line\nobreakspace \ref {dijkstrasteiner:firstloop} in some iteration.
Clearly, lines\nobreakspace  \ref {dijkstrasteiner:firstloop} to\nobreakspace  \ref {dijkstrasteiner:endloop}
preserve (\ref{dijkstrasteiner:proof:correct:a:2}), (\ref{dijkstrasteiner:proof:correct:a:3}),\nobreakspace (\ref{dijkstrasteiner:proof:correct:c:2}) and (\ref{dijkstrasteiner:proof:correct:c:3}).
If a label $\smtvertex(w, I)$ or $\smtvertex(v, I \cup J)$ is decreased, it cannot be permanent,
so (\ref{dijkstrasteiner:proof:correct:a:1}) is maintained.

We now consider (\ref{dijkstrasteiner:proof:correct:c:1}). Since (\ref{dijkstrasteiner:proof:thirdinv}) held before the current iteration,
we have $\smtvertex(v,I)\geq \smt(\{v\} \cup I)$.
This directly implies
\begin{align*}
\smtvertex(v,I) +c(\{v,w\}) &\geq \smt(\{v\} \cup I) +c(\{v,w\}) \\ &\geq \smt(\{v,w\} \cup I) \\ &\geq \smt(\{w\} \cup I),
\end{align*}
so line \ref{dijkstrasteiner:update:neighbor} does not destroy (\ref{dijkstrasteiner:proof:correct:c:1}).
Also, if $\emptyset \neq J \subseteq \sourcesalone \setminus I$ is a set chosen in line\nobreakspace \ref {dijkstrasteiner:superset} leading to the change of $\smtvertex(v, I \cup J)$, we have
\begin{align*}
\smtvertex(v, I \cup J) &= \smtvertex(v,I) + \smtvertex(v, J) \\& \geq \smt(\{v\} \cup I) + \smt(\{v\} \cup J) \\ &\geq \smt(\{v\} \cup I \cup J),
\end{align*}
so\nobreakspace (\ref{dijkstrasteiner:proof:correct:c:1}) indeed is preserved.\\
In order to prove that (\ref{dijkstrasteiner:proof:correct:b}) is preserved, we now show that $\smtvertex(v,I) \leq \smt(\{v\} \cup I)$, which together with (\ref{dijkstrasteiner:proof:correct:c:1}) yields the desired result. Since $\smtvertex(v, \{v\}) = 0 = \smt(\{v\})$, we can assume $I \neq \{v\}$.
Let $T$ be an optimum Steiner tree for $\{v\} \cup I$ in $G$. Then, all leaves of $T$ are contained in $\{v\} \cup I$.
For a vertex $w \in V(T)$, let $T_w$ be the subtree of $T$ containing all
vertices $x$ for which the unique $x$-$v$-path in $T$ contains $w$. \\
We will now show that there is a vertex $w \in V(T)$, a nonempty terminal set $I^\prime \subseteq I \cap V(T_w)$ and a subtree $T^\prime $ of $T_w$ such that
\begin{enumerate}[(I)]
\item $(w,I^\prime) \in N$,
\item $\smtvertex(w,I^\prime) = c(T^\prime) = \smt(\{w\} \cup I')$,
\item $T^\prime$ is a subtree of $T_w$ containing $I^\prime \cup \{w\}$,
\item $T \graphminus T^\prime$ is a tree containing $(I \setminus I^\prime) \cup \{v,w\}$.
\end{enumerate}
Here, by $T \graphminus T^\prime$, we refer to the graph $\left((V(T) \setminus V(T^\prime)) \cup \{w\}, E(T) \setminus E(T^\prime)\right)$.
Assuming we have a triple $(w, I^\prime, T^\prime)$ satisfying (I) -- (IV), $\smtvertex(v,I) \leq \smt(\{v\} \cup I)$ can easily be proved:
Since $\bigL$ is a valid lower bound, we have
\begin{equation}
\begin{aligned}\label{dijkstrasteiner:proof:secondineq}
 \bigL(w, \term \setminus I^\prime) &\leq \bigL(v, \term \setminus I) + \smt((I \setminus I^\prime) \cup \{v,w\})
 \\ &\leq  \bigL(v, \term \setminus I)  + c(T \graphminus T^\prime)
 \\ &=  \bigL(v, \term \setminus I)  + c(T) - c(T^\prime).
\end{aligned}
\end{equation}
Adding (II) and (\ref{dijkstrasteiner:proof:secondineq}) yields
\begin{align*}
 \smtvertex(w, I^\prime) +  \bigL(w, \term \setminus I^\prime) \leq  c(T) + \bigL(v, \term \setminus I).
\end{align*}
By (I) and the choice of $(v,I)$ in line\nobreakspace \ref {dijkstrasteiner:firstloop} we have
\begin{align*}
 \smtvertex(v, I) +  \bigL(v, \term \setminus I) \leq  \smtvertex(w, I^\prime) + \bigL(w, \term \setminus I^\prime),
\end{align*}
so
\begin{align*}
 \smtvertex(v, I) \leq  c(T).
\end{align*}
It remains to be shown that there is such a triple $(w, I^\prime, T^\prime)$. We call $w \in V(T)$ \emph{proper} if $(w, I \cap V(T_w)) \in P$ before the execution of
line\nobreakspace \ref {dijkstrasteiner:endloop}.
If we have a leaf $w \in V(T) \setminus \{v\}$ which is not proper, we set $I^\prime = \{w\}$ and $T^\prime = (\{w\},\emptyset)$, satisfying (I) -- (IV).
Note that here (I) follows from  (\ref{dijkstrasteiner:proof:correct:c:3}).
Otherwise, since $v$ is not proper, there is a vertex $w$ in $T$ which is not proper but all neighbors $w_i$ of $w$ in $T_w$ are proper.
% Let $\candset_w = \{ \{w\}\}$ if $w \in I$, otherwise let $\candset_w = \emptyset$.
We define
\begin{align*}
\candset = \begin{cases}
  \{ I \cap V(T_{w_i}) \text{ $\vert$ $w_i$ is neighbor of $w$ in $T_w$} \} \cup \{\{w\}\} & \text{if $w \in I$,}\\
  \{ I \cap V(T_{w_i}) \text{ $\vert$ $w_i$ is neighbor of $w$ in $T_w$} \} & \text{if $w \notin I$.}\\
           \end{cases}
\end{align*}

Note that $\candset$ is a partition of $I \cap V(T_w)$.
Let $\candsubset$ be an inclusion-wise minimal subset of $\candset$ such that $\left(w, \bigcup_{\candelem \in \candsubset} \candelem \right) \notin P$.
Since $w$ is not proper and $(w, \emptyset) \in P$, $\candsubset$ exists and $\candsubset$ is not empty. For $\candelem \in \candsubset$, let $T_{\candelem}$ be the unique minimum connected subgraph of $T$
such that $(\{w\} \cup \candelem) \subseteq V(T_\candelem)$.
By optimality of $T$, $T_\candelem$ is an optimum Steiner tree for $(\{w\} \cup \candelem)$, as the only vertex in $T_\candelem$ which is incident to edges in $E(T) \setminus E(T_\candelem)$ is $w$.
We define
\begin{enumerate}[(i)]
 \item $I^\prime = \bigcup_{\candelem \in \candsubset} \candelem$,
 \item $V(T^\prime) = \bigcup_{\candelem \in \candsubset} V(T_{\candelem})$ and
 \item $E(T^\prime) =  \bigcup_{\candelem \in \candsubset} E(T_{\candelem})$.
\end{enumerate}
See Figure\nobreakspace \ref{algoproofconfigimage} for an illustration of this setting.
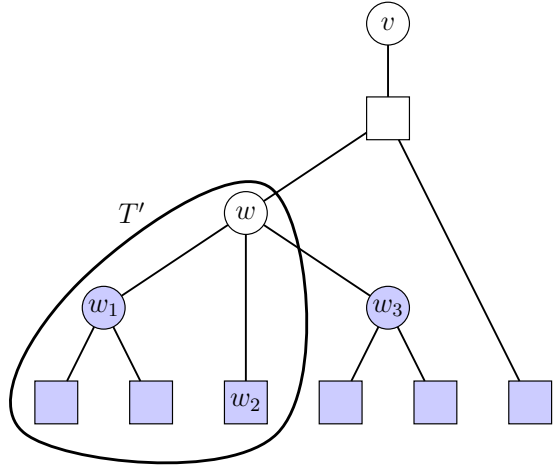
\begin{figure}[h]
\center
\scalebox{0.9}{
\begin{tikzpicture}[scale=0.7]
tikzstyle{every node}=[font=\large]
\def\propercolor{blue!20}
\SetVertexNormal[Shape=rectangle, FillColor=\propercolor]
\SetVertexNoLabel
\Vertex[x=0, y=0]{1}
\Vertex[x=2, y=0]{2}
\Vertex[x=4, y=0]{w2}
\Vertex[x=6, y=0]{3}
\Vertex[x=8, y=0]{4}
\Vertex[x=10, y=0]{5}
\SetVertexNormal[FillColor=\propercolor]
\Vertex[x=1, y=2]{w1}
\Vertex[x=7, y=2]{w3}
\SetVertexNormal
\Vertex[x=4, y=4]{w}
\SetVertexNormal[Shape=rectangle]
\Vertex[x=7, y=6]{6}
\SetVertexNormal
\Vertex[x=7, y=8]{v}
\node at (w1) {$w_1$};
\node at (w2) {$w_2$};
\node at (w3) {$w_3$};
\node at (w) {$w$};
\node at (v) {$v$};
\node (1sw) at (1.south west){};
\node (w1nw) at (w1.north west){};
\node (wne) at (w.north east){};
\node (w2se) at (w2.south east){};
\Edge(1)(w1)
\Edge(2)(w1)
\Edge(3)(w3)
\Edge(4)(w3)
\Edge(w1)(w)
\Edge(w2)(w)
\Edge(w3)(w)
\Edge(w)(6)
\Edge(5)(6)
\Edge(v)(6)

\draw [very thick] plot [smooth cycle, tension=0.7]  coordinates {(1sw.south west) (w1nw.north west)  (wne.north east)  (w2se.south east)};

\node at  (1.6,4) {$T'$};

\end{tikzpicture}}
\caption{A possible configuration with $\candsubset = \{ I \cap V(T_{w_1}), I \cap V(T_{w_2})\}$. Vertices in $I$ are drawn as squares, proper vertices are drawn in blue.}
\label{algoproofconfigimage}
\end{figure}

The conditions (III) and (IV) are satisfied by construction. We now show (II), which due do (\ref{dijkstrasteiner:proof:correct:c:3})
also implies (I).
First, we deal with the case $|\candsubset| = 1$.
If $\candsubset = \{\{w\}\}$, by (\ref{dijkstrasteiner:proof:correct:c:2}) clearly $\smtvertex(w, I^\prime) = 0 = c(T^\prime).$
If $\candsubset = \{ I \cap V(T_{w_i}) \}$ for some neighbor $w_i$ of $w$ in $T_w$, by\nobreakspace (\ref{dijkstrasteiner:proof:correct:c:2}), (\ref{dijkstrasteiner:proof:secondinv}), and the fact that $w_i$ is proper, we have
\begin{align*}
\smtvertex(w, I^\prime)
&\leq \smtvertex(w_i, I^\prime) + c(\{w_i, w\}) \\
& = \smt(\{w_i\} \cup I^\prime) + c(\{w_i, w\}) \\
& = c(T_{w_i}) + c(\{w_i, w\}) \\
& = c(T^\prime).
\end{align*}
Otherwise, i.e., $|\candsubset| > 1$, choose an element $\candelem \in \candsubset$
and
see again by\nobreakspace (\ref{dijkstrasteiner:proof:correct:c:2}), (\ref{dijkstrasteiner:proof:secondinv}) and the minimality of $\candsubset$ that
\begin{align*}
 \smtvertex(w, I^\prime)
 &\leq  \smtvertex(w, I^\prime \setminus \candelem) + \smtvertex(w, \candelem )
 \\ &= \smt(\{w\} \cup (I^\prime \setminus \candelem)) + \smt(\{w\} \cup \candelem)
 \\ &= c(T^\prime \graphminus T_{\candelem}) + c(T_{\candelem})
%  \\ &= c(T^\prime)  -c(T_{\candelem}) +c(T_{\candelem})
 \\ &= c(T^\prime).
\end{align*}
As $T$ is an optimum Steiner tree for $\{v\} \cup I$, by (IV) we get $c(T') = \smt(\{w\} \cup I')$, together with  (\ref{dijkstrasteiner:proof:correct:c:1}) showing (II).
By (\ref{dijkstrasteiner:proof:correct:c:3}), $(w,I') \in N$, so (I) holds as well, completing the proof that (\ref{dijkstrasteiner:proof:secondinv}) is preserved by lines\nobreakspace  \ref {dijkstrasteiner:firstloop} to\nobreakspace  \ref {dijkstrasteiner:endloop}.\\
To see that (\ref{dijkstrasteiner:proof:fourthinv}) is maintained, note that applying the previous argument to $(\rootterm, \sources)$
instead of $(v,I)$ always yields a label $(w,I') \in N$, so $N$ is not empty. Also, whenever a label $(v,I)$ is inserted to $P$, it is removed from $N$.
By (\ref{dijkstrasteiner:proof:secondinv}), from then on $\smtvertex(v,I)$ cannot be changed, so $(v,I)$ is never inserted again into $N$.
\explicitqed
\end{proof}

Note that in line\nobreakspace \ref {dijkstrasteiner:firstloop}, one can actually choose any label $(v, I) \in N$  with $\smtvertex(v,I) + \bigL(v, \term \setminus I) \leq \smtvertex(w, J) + \bigL(w, \term \setminus J)$
for all $(w,J) \in (V(G) \times 2^I) \cap N$. This is a generalization of the choice as specified in the algorithm. However, in our implementation, we always choose a label minimizing $\smtvertex(v,I) + \bigL(v, \term \setminus I)$.

\begin{theorem}
The Dijkstra-Steiner algorithm can be implemented to run in  $\bigO(3^k n + 2^k (n \log n + m) + 2^k n f_{\bigL})$ time, where $n = |V(G)|$, $m = |E(G)|$, $k = |\term|$, and $f_{\bigL}$ is an upper bound on the time required to evaluate $\bigL$. \label{dijkstrasteiner:proof:runtime}\end{theorem}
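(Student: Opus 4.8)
The plan is to charge the running time to the three elementary operations performed by the main loop --- selecting a cheapest label, relaxing an edge, and merging two labels at a common vertex --- after fixing the data structures: a Fibonacci heap for $N$, keyed by $\smtvertex(v,I) + \bigL(v,\term\setminus I)$, together with arrays indexed by $(v,I) \in V(G)\times 2^\sources$ holding $\smtvertex$, $\back$, a cached value of $\bigL(v,\term\setminus I)$, and a bit marking membership in $P$ (so that the tests $(w,I)\notin P$ and $(v,J)\in P$ cost $\bigO(1)$). Since $|2^\sources| = 2^{k-1}$, there are $\bigO(2^k n)$ labels, and by invariant~(d) in the proof of Theorem~\ref{dijkstrasteiner:proof:correct} no label re-enters $N$ once it becomes permanent; hence each label is extracted at most once, the loop runs $\bigO(2^k n)$ times, and the initialization costs $\bigO(2^k n)$, which is absorbed.

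First I would bound the edge-relaxation loop: when $(v,I)$ becomes permanent we scan the $\deg(v)$ incident edges, so over all permanent labels the number of scans is $\sum_{I\subseteq\sources}\sum_{v\in V(G)}\deg(v) = 2^{k-1}\cdot 2m = \bigO(2^k m)$, each triggering at most one $\bigO(1)$-amortized decrease-key or insert. Next the merge loop: for a permanent $(v,I)$ I would enumerate all submasks $J \subseteq \sources\setminus I$ and test $(v,J)\in P$ in constant time, so the per-vertex cost is $\sum_{I\subseteq\sources} 2^{(k-1)-|I|} = \sum_{i=0}^{k-1}\binom{k-1}{i}2^{(k-1)-i} = 3^{k-1}$, giving $\bigO(3^k n)$ merge steps, again with $\bigO(1)$-amortized heap updates. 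Finally, because $\bigL(v,\term\setminus I)$ is independent of $\smtvertex(v,I)$, caching it the first time a label is touched makes $\bigL$ evaluated at most once per label, contributing $\bigO(2^k n f_{\bigL})$.

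It remains to account for the extract-mins. The heap never holds more than $\bigO(2^k n)$ elements, so each of the $\bigO(2^k n)$ extractions costs $\bigO(\log(2^k n)) = \bigO(k + \log n)$ amortized, i.e.\ $\bigO(2^k n(k+\log n))$ in total. The one subtlety --- and the only place where the stated bound is not completely immediate --- is the stray factor $k$, which I would absorb by the observation that $2^k k = \bigO(3^k)$ (equivalently $(3/2)^k \geq k$ for all $k\geq 1$), so that $2^k n k = \bigO(3^k n)$. Summing the four contributions then gives $\bigO(3^k n) + \bigO(2^k m) + \bigO(3^k n) + \bigO(2^k n\log n) + \bigO(2^k n f_{\bigL}) = \bigO(3^k n + 2^k(n\log n + m) + 2^k n f_{\bigL})$, as claimed. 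I expect the $k$-factor absorption and the correct accounting of the submask enumeration (yielding $3^{k-1}$ per vertex via the binomial identity) to be the only points requiring genuine care; everything else is the standard Fibonacci-heap Dijkstra analysis lifted to labels.
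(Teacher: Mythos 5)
Your proof is correct and follows essentially the same route as the paper's: the same Fibonacci-heap setup, the same $\bigO(2^k m)$ edge-relaxation and $3^{k-1}n$ merge counts (the paper counts disjoint pairs $(I,J)$ directly by the three-choices-per-element argument, which is equivalent to your submask/binomial sum), the same caching of $\bigL$, and the same absorption of $2^k k n$ into $3^k n$. The only detail you omit is the (dominated) cost of the final \texttt{backtrack} calls, which the paper dispatches in one sentence.
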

\begin{proof}
Since $|P|$ increases in each iteration, we have at most $n 2^{k-1}$ iterations.
We use a Fibonacci heap \cite{fibheap} to store all labels $(v, I) \in N$, which allows updates in constant amortized time.
Since the heap contains at most $2^{k-1}n$ elements, each execution of  line\nobreakspace \ref {dijkstrasteiner:firstloop} takes
$\bigO(\log(2^{k-1}n)) = \bigO(k + \log(n))$ amortized time.
Line\nobreakspace \ref {dijkstrasteiner:edge} is executed at most $2^km$ times and each execution takes $O(1)$ amortized time.
Furthermore, there are exactly $3^{k-1}$ pairs $I, J \subseteq \sources$ with $I \cap J = \emptyset$, since every element in $\sources$
can either be contained in $I$, $J$ or $\sourcesalone \setminus (I \cup J)$, independently of the others. Thus, line\nobreakspace \ref {dijkstrasteiner:superset} -- \ref{dijkstrasteiner:endupdatesupersets} take $\bigO(3^{k-1}n)$ time.
By caching values of $\bigL$, we can achieve that we query $\bigL$ at most once for each label, resulting in an additional run time of $\bigO(2^k n f_{\bigL})$.
The run time of the backtracking clearly is dominated by the previous tasks, since \texttt{backtrack} is called at most $\bigO(kn)$ times and requires effort linear in the size of its output.
We get a total run time of
\begin{align*}
\bigO(2^kn(\log n + k) + 2^km + 3^kn + 2^k n f_{\bigL}) =  \bigO(3^k n + 2^k (n \log n + m) + 2^k n f_{\bigL}),
\end{align*}
since $2^kk = \bigO(3^k)$.
\explicitqed
\end{proof}
In Section \ref{seclb}, we will see that there are non-trivial valid lower bounds $\bigL$ which can be used in the algorithm while still achieving a worst-case run time of $\bigO(3^k n + 2^k (n \log n + m))$, matching the bound of \cite{veinott}.

\section{Lower Bounds}
\label{seclb}
Roughly speaking, the larger the valid lower bound $\bigL$ is, the faster our algorithm will be. Before we describe examples of valid lower bounds, we note:
\begin{proposition}\label{lbcombprop}
 Let $\bigL$ and $\bigL^\prime$ be valid lower bounds. Then, $\max(\bigL, \bigL^\prime)$ also is a valid lower bound.
\end{proposition}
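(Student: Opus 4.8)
The plan is to verify the two defining conditions of a valid lower bound directly for the pointwise maximum $M := \max(\bigL, \bigL^\prime)$, i.e.\ $M(v,I) := \max(\bigL(v,I), \bigL^\prime(v,I))$ for all $(v,I)$. First I would check the normalization $M(\rootterm, \{\rootterm\}) = 0$. Since both $\bigL$ and $\bigL^\prime$ are valid lower bounds, we have $\bigL(\rootterm, \{\rootterm\}) = \bigL^\prime(\rootterm, \{\rootterm\}) = 0$, and as these functions take values in $\nonneg$, their maximum is again $0$.

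The second condition is the inequality $M(v,I) \leq M(w,I^\prime) + \smt((I \setminus I^\prime) \cup \{v,w\})$ for all $v,w \in V(G)$ and $\{\rootterm\} \subseteq I^\prime \subseteq I \subseteq \term$. The key observation is that the left-hand side, being a pointwise maximum of two numbers, is attained by one of the two functions. So I would fix such $v, w, I^\prime, I$ and assume without loss of generality that $M(v,I) = \bigL(v,I)$ (the other case being symmetric). Applying validity of $\bigL$ and then relaxing the right-hand side back to the maximum gives
\begin{align*}
M(v,I) = \bigL(v,I) &\leq \bigL(w,I^\prime) + \smt((I \setminus I^\prime) \cup \{v,w\}) \\ &\leq M(w,I^\prime) + \smt((I \setminus I^\prime) \cup \{v,w\}),
\end{align*}
which is exactly what is required.

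I do not expect any real obstacle here. The argument works precisely because the condition is monotone in the term $\bigL(w,I^\prime)$ appearing on the right while the left-hand side is a single value: one selects the function achieving the maximum on the left, invokes its validity, and then enlarges the right-hand side to the maximum. In particular, no compatibility assumption between $\bigL$ and $\bigL^\prime$ is needed, and the entire reasoning is pointwise in the arguments $(v,I)$ and $(w,I^\prime)$.
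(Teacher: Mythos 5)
Your proof is correct and follows essentially the same route as the paper: the paper bounds both $\bigL(v,I)$ and $\bigL^\prime(v,I)$ by $\max(\bigL(w,I^\prime), \bigL^\prime(w,I^\prime)) + \smt((I \setminus I^\prime) \cup \{v,w\})$ and then takes the maximum on the left, which is the same argument as your ``pick whichever function attains the maximum'' phrasing. Your explicit check of the normalization at $(\rootterm,\{\rootterm\})$ is a harmless addition the paper leaves implicit.
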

\begin{proof}
  Let $v,w \in V(G)$ and $I^\prime \subseteq I \subseteq \term$. Then
\begin{align*}
\bigL(v,I) &\leq \bigL(w,I^\prime) + \smt((I \setminus I^\prime) \cup\{v,w\}) \\
&\leq \max(\bigL(w,I^\prime), \bigL^\prime(w,I^\prime)) + \smt((I \setminus I^\prime) \cup\{v,w\})
\end{align*}
and
\begin{align*}
\bigL^\prime(v,I) &\leq \bigL^\prime(w,I^\prime) + \smt((I \setminus I^\prime) \cup\{v,w\}) \\
&\leq \max(\bigL(w,I^\prime), \bigL^\prime(w,I^\prime)) + \smt((I \setminus I^\prime) \cup\{v,w\}),
\end{align*}
so
\begin{align*}
\max(\bigL(v,I), \bigL^\prime(v,I)) &\leq \max(\bigL(w,I^\prime), \bigL^\prime(w,I^\prime)) + \smt((I \setminus I^\prime) \cup\{v,w\}).
\end{align*}
\explicitqed
\end{proof}

Proposition\nobreakspace \ref{lbcombprop} allows the combination of arbitrary valid lower bounds.

Now, we present three types of valid lower bounds.
A simple valid lower bound can be obtained by considering terminal sets of bounded cardinality:

\begin{definition}
Let $(G, c, \term)$ be an instance of the Steiner tree problem, $\rootterm \in \term$ and let $j \geq 1$ be an integer. Then, $\jplustterm{j}$ is defined as
\begin{align*}
\jplustterm{j}(v,I) = \max_{\{\rootterm\} \subseteq J \subseteq I \cup \{v\}, |J| \leq j+1}\smt(J)
\end{align*}
for $v \in V(G)$ and $\{\rootterm\} \subseteq I \subseteq \term$. For $v \in V(G)$ and $I \subseteq \term$ with $\rootterm \notin I$, set $\jplustterm{j}(v,I) = 0$.
\end{definition}

\begin{lemma}
Let $(G, c, \term)$ be an instance of the Steiner tree problem, $\rootterm \in \term$ and let $j \geq 1$ be an integer. Then, $\jplustterm{j}$ is a valid lower bound. Furthermore, we can implement $\jplustterm{j}$ such
that after a preprocessing time of $\bigO(3^k + (2k)^{j-1}n + k^{j-1}(n \log n + m))$, we can evaluate $\jplustterm{j}(v,I)$ for every
$v \in V(G)$ and $\{\rootterm\} \subseteq I \subseteq \term$ in time $\bigO({|I|}^{j-1})$, where $n = |V(G)|$, $m = |E(G)|$ and $k = |\term|$.
\end{lemma}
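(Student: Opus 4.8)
The plan is to verify the two defining conditions of a valid lower bound directly from the definition of $\jplustterm{j}$. The equation $\jplustterm{j}(\rootterm, \{\rootterm\}) = 0$ is immediate, since the only admissible $J$ is $\{\rootterm\}$ and $\smt(\{\rootterm\}) = 0$. For the main inequality, I would fix $v, w \in V(G)$ and $\{\rootterm\} \subseteq I^\prime \subseteq I \subseteq \term$ and let $J$ be a set attaining the maximum in $\jplustterm{j}(v,I)$, so $\{\rootterm\} \subseteq J \subseteq I \cup \{v\}$, $|J| \leq j+1$, and $\jplustterm{j}(v,I) = \smt(J)$. The goal is to bound $\smt(J)$ by $\jplustterm{j}(w, I^\prime) + \smt((I \setminus I^\prime) \cup \{v,w\})$. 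The key idea is to build a connected subgraph spanning $J$ by combining an optimum Steiner tree $T_1$ for some appropriately chosen subset $J^\prime$ of $J$ (charged against $\jplustterm{j}(w,I^\prime)$) with an optimum Steiner tree $T_2$ for $(I \setminus I^\prime) \cup \{v,w\}$, joined at $w$.

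The natural choice is $J^\prime := (J \cap I^\prime) \cup \{w\}$. Then $\{\rootterm\} \subseteq J^\prime \subseteq I^\prime \cup \{w\} \subseteq I^\prime$ (using $w$'s membership or handling it as a spanned vertex) and $|J^\prime| \leq |J| \leq j+1$, so $\smt(J^\prime) \leq \jplustterm{j}(w, I^\prime)$. The remaining terminals of $J$ not covered by $J^\prime$ lie in $(J \setminus I^\prime) \cup \{v\} \subseteq (I \setminus I^\prime) \cup \{v\}$, and together with $w$ these are spanned by a tree of cost $\smt((I \setminus I^\prime) \cup \{v,w\})$. Gluing $T_1$ and $T_2$ at their common vertex $w$ yields a connected subgraph containing all of $J$, whose cost is at most $\smt(J^\prime) + \smt((I \setminus I^\prime) \cup \{v,w\})$; since $\smt(J)$ is the cost of a cheapest such connector, the inequality follows. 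The main subtlety is bookkeeping the case distinctions on whether $v$ and $w$ lie in $I$ or $I^\prime$, and confirming $|J^\prime| \leq j+1$ in each case, which is where I would spend the most care.

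**Establishing the run time.**
For the complexity claim, I would precompute $\smt(J)$ for all relevant small terminal sets and store them for $\bigO(1)$ lookup, then evaluate $\jplustterm{j}(v,I)$ by maximizing over the $\bigO(|I|^{j-1})$ candidate sets $J$. The evaluation bound $\bigO(|I|^{j-1})$ counts the sets $J$ with $\{\rootterm\} \subseteq J \subseteq I \cup \{v\}$ and $|J| \leq j+1$: after fixing $\rootterm$ and allowing $v$, there are at most $j - 1$ further elements to pick from $I$, giving $\bigO(|I|^{j-1})$ choices. For preprocessing, I expect the $\bigO(3^k)$ term to come from reusing the Dijkstra-Steiner algorithm (Theorem \ref{dijkstrasteiner:proof:correct}) with $\bigL \equiv 0$ on the restricted instances needed to obtain the $\smt$-values on sets of size up to $j+1$, while the $\bigO((2k)^{j-1} n)$ and $\bigO(k^{j-1}(n \log n + m))$ terms account for running the requisite shortest-path-style computations (one per choice of the up-to-$j-1$ free terminals) over the whole graph. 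The hard part will be organizing the precomputation so that every needed $\smt(J)$ value is available within these bounds without recomputation, and I anticipate this is precisely where the subsequent proof invests its detail.
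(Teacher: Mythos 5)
Your validity argument is essentially the paper's. The paper defines $J^\prime$ as the image of $J$ under the map sending every element outside $I^\prime$ to $w$ and fixing the rest; this automatically gives $|J^\prime| \leq |J| \leq j+1$, and the inequality $\smt(J) \leq \smt(J^\prime) + \smt((I \setminus I^\prime)\cup\{v,w\})$ is exactly your gluing-at-$w$ argument. Your explicit choice $J^\prime = (J\cap I^\prime)\cup\{w\}$ can have $|J^\prime| = |J|+1 > j+1$ precisely when $J \subseteq I^\prime$; but in that degenerate case $J$ itself is already a competitor in the maximum defining $\jplustterm{j}(w,I^\prime)$, so the bound is immediate. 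This is the case distinction you flagged, and it closes easily; I would count this half as correct.

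The run-time half has a genuine gap. Your count of candidate sets is wrong: the sets $J$ with $\{\rootterm\} \subseteq J \subseteq I\cup\{v\}$ and $|J|\leq j+1$ number $\Theta(|I|^{j})$, not $\bigO(|I|^{j-1})$, because a candidate need not contain $v$ and may then use up to $j$ elements of $I\setminus\{\rootterm\}$. A direct maximization over all candidates therefore costs $\bigO(|I|^{j})$ per query and does not give the claimed evaluation time. The paper achieves $\bigO(|I|^{j-1})$ by splitting the maximum into (i) sets containing $v$, of which there are only $\bigO(|I|^{j-1})$ and whose values $\smt(J\cup\{v,\rootterm\})$ with $|J|\leq j-1$ are precomputed for all $v$ by a rootless, bound-free variant of the Dijkstra--Steiner algorithm processing terminal sets in order of increasing cardinality (this is the source of the $(2k)^{j-1}n + k^{j-1}(n\log n + m)$ terms, via $\bigO(k^{j-1})$ sets, at most $n$ heap entries at a time, and $\bigO((2k)^{j-1}n)$ superset updates), and (ii) sets not containing $v$, whose maximum is independent of $v$ and is tabulated once for every $I\subseteq\term$ within the $\bigO(3^k)$ preprocessing. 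Your sketch attributes the $3^k$ term to computing the small $\smt$-values themselves and never introduces the $v$-independent table, so the stated $\bigO(|I|^{j-1})$ evaluation bound is left unsupported.
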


\begin{proof}
Let $v,w \in V(G)$ and $\{\rootterm\} \subseteq I^\prime \subseteq I \subseteq \term$.
To prove that $\jplustterm{j}$ is a valid lower bound, we have to show
\begin{align*}
 \max_{\substack{\{\rootterm\} \subseteq J \subseteq I \cup \{v\}\\ |J| \leq j + 1}}\smt(J) &\leq  \max_{\substack{\{\rootterm\} \subseteq J^\prime \subseteq I^\prime \cup \{w\} \\ |J^\prime| \leq j + 1}}\smt(J^\prime)
 &+ \smt((I \setminus I^\prime) \cup\{v,w\}).
\end{align*}
Consider the map $f : I \cup  \{v\} \rightarrow I^\prime \cup \{w\}$ with $f(x) = w$ for $x \notin I^\prime$ and $f(x) = x$ otherwise.
Let $J$ be a set with $\{\rootterm\} \subseteq J \subseteq I \cup \{v\}$ and $|J| \leq j + 1$. Set $J^\prime = \{f(x) : x \in J\}$. Then, clearly
$\{\rootterm\} \subseteq J^\prime \subseteq I^\prime \cup \{w\}$ and $|J^\prime| \leq |J| \leq j + 1$. Moreover,
\begin{align*}
 \smt(J)
 \leq \smt(J^\prime) + \smt((I \setminus I^\prime) \cup\{v,w\}).
\end{align*}
To achieve the given run time, we first compute $\smt(\{v\} \cup I)$ for all $v \in V(G)$ and $I \subseteq \term$ with $|I \setminus \{\rootterm\}| \leq j - 1$ in time $\bigO((2k)^{j-1}n + k^{j-1}(n \log n + m))$
using a modified variant of the Dijkstra-Steiner algorithm. More precisely, we do not use a lower bound, do not use a root terminal and consider terminal sets of increasing cardinality, very
similar to \cite{veinott}. There are $\bigO(k^{j-1})$ sets $I$ with $I \subseteq \term, |I \setminus \{\rootterm\}| \leq j - 1$. Since we consider terminal sets of increasing cardinality one after another,
we always have at most $n$ labels in the Fibonacci heap. As a set of cardinality $j$ has $2^j$ subsets,
there are $\bigO((2k)^{j-1}n)$ updates of supersets. \\
Then, to evaluate $\jplustterm{j}(v,I)$, we exploit
\begin{align*}
 \jplustterm{j}(v,I) &= \max_{\{\rootterm\} \subseteq J \subseteq I \cup \{v\}, |J| \leq j+1}\smt(J) \\
 &= \max \biggl( \max_{J \subseteq I, |J| \leq j-1}\smt(J \cup \{v,\rootterm\}), \max_{\{\rootterm\} \subseteq J \subseteq I, |J| \leq j+1}\smt(J) \biggr),
\end{align*}
where the first expression can be computed in $\bigO(|I|^{j-1})$ time using the precomputed values $\smt(\{v, \rootterm\} \cup J)$ and the second expression does not depend on $v$ and can be computed in advance for every $I \subseteq \term$ in $\bigO(3^k)$ time.
% , again using the precomputed values of $\smtvertex$.
\explicitqed
\end{proof}

Thus, for small $j$, e.g., $j \leq 3$, $\jplustterm{j}$ can be efficiently computed. In experiments without pruning, this lower bound is useful on low-dimensional instances like planar rectilinear grid graphs. However, pruning has a much
larger impact and eliminates this effect.

We now present a more effective lower bound.
We will use the notion of \emph{1-trees}, which have long been studied \cite{onetree} as a lower bound for the traveling salesman problem.
Given a complete graph $H$ with metric edge costs and a special vertex $v \in V(H)$, a 1-tree for $v$ and $H$ is a spanning tree on $H - v$ together with two additional edges connecting $v$ with the tree.
A tour in $H$ is a connected 2-regular subgraph of $H$.
Since every tour consists of a path, which is a spanning tree, and a vertex connected to the endpoints of the path, every tour is a 1-tree.
Thus, a 1-tree of minimum cost is a lower bound on the cost of a tour of minimum cost. Since such a tour of minimum cost is at most twice as expensive as a minimum Steiner tree,
we can use 1-trees to get a lower bound for the Steiner tree problem.

For $v,w \in V(G)$, we denote by $\dist(v,w)$ the cost of a shortest path connecting $v$ and $w$ in $G$. Furthermore, for a set of vertices $X \subseteq V$, we
denote by $\distgraph{X}$ the \emph{distance graph} of $X$,
which is the subgraph of the metric closure of $G$ induced by $X$. Note that since the edge costs in $\distgraph{X}$ are the costs of shortest paths with respect to positive edge costs in $G$,
the edge costs in $\distgraph{X}$ are always metric.
% which is the undirected complete graph on $X$ with edge costs $\dist: E(\distgraph{X}) \rightarrow \pos$ with $\dist(\{i,j\}) = \dist(i,j)$ for $i \neq j \in X$.
Moreover, for $X \subseteq V(G)$, we denote by $\mst(X)$ the cost of a minimum spanning tree in $\distgraph{X}$.

\begin{definition}
Let $(G, c, \term)$ be an instance of the Steiner tree problem and $\rootterm \in \term$. Then, the \emph{1-tree bound $\onetree$} is defined as
\begin{align*}
\onetree(v,I) =\min_{\substack{i, j \in I : i \neq j \vee |I| = 1}} \frac{d(v,i) + d(v,j)}{2} + \frac{\mst(I)}{2}
\end{align*}
for $v \in V(G)$ and $\{\rootterm\} \subseteq I \subseteq \term$. For $v \in V(G)$ and $ I \subseteq \term \setminus \{\rootterm\}$, we set $\onetree(v,I) = 0$.
\end{definition}
\begin{lemma} \label{1treefeasible}
Let $(G, c, \term)$ be an instance of the Steiner tree problem and $\rootterm \in \term$. Then, $\onetree$ is a valid lower bound.
\end{lemma}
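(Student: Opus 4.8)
The plan is to verify the two defining properties of a valid lower bound. The equation $\onetree(\rootterm,\{\rootterm\})=0$ is immediate: with $I=\{\rootterm\}$ the minimum is attained at $i=j=\rootterm$ (permitted since $|I|=1$), giving $\tfrac{\dist(\rootterm,\rootterm)+\dist(\rootterm,\rootterm)}{2}+\tfrac{\mst(\{\rootterm\})}{2}=0$. For the second property the relevant case is $\{\rootterm\}\subseteq I'\subseteq I\subseteq\term$, so both sides use the nonzero branch, and after multiplying by $2$ it suffices to show $2\onetree(v,I)=\mst(I)+\mu(v,I)\le\mst(I')+\mu(w,I')+2\smt(S)=2\onetree(w,I')+2\smt(S)$, where $S:=(I\setminus I')\cup\{v,w\}$ and $\mu(v,I):=\min_{i,j}(\dist(v,i)+\dist(v,j))$ is twice the first summand of $\onetree$. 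When $v\notin I$ the quantity $\mst(I)+\mu(v,I)$ is exactly the cost of a minimum $1$-tree for $v$ in $\distgraph{\{v\}\cup I}$, so it is enough to exhibit one spanning tree of $\distgraph{I}$ together with two edges from $v$ into $I$ whose total cost is at most the right-hand side.

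To build these objects I would use two ingredients. First, a minimum $1$-tree $B$ for $w$ in $\distgraph{\{w\}\cup I'}$, consisting of a minimum spanning tree $M'$ of $\distgraph{I'}$ of cost $\mst(I')$ plus the two cheapest edges $\{w,a\},\{w,b\}$ from $w$ into $I'$, ordered so that $\dist(w,a)\le\dist(w,b)$; its cost is $\mst(I')+\mu(w,I')$. Second, from an optimum Steiner tree $T$ for $S$ I would extract a Hamiltonian $v$-$w$ path. Letting $\rho$ be the $v$-$w$ path inside $T$ and doubling every edge of $T$ \emph{not} on $\rho$ makes $v$ and $w$ the only odd-degree vertices; an Euler walk from $v$ to $w$, shortcut to the terminals of $S$ by the triangle inequality of $\distgraph{S}$, yields a $v$-$w$ path $\Pi$ through $I\setminus I'$ with $c(\Pi)\le 2\,c(T)-c(\rho)=2\smt(S)-c(\rho)$. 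The slack $c(\rho)$ is deliberately reserved here.

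I would then assemble $F:=\Pi_0\cup\{y,a\}\cup M'$, where $\Pi_0:=\Pi\graphminus\{v,w\}$ spans $I\setminus I'$, $x$ is the neighbor of $v$ and $y$ the neighbor of $w$ on $\Pi$, and the bridge $\{y,a\}$ joins the two parts; this $F$ is a spanning tree of $\distgraph{I}$. As the two $v$-edges I take $\{v,x\}$ and $\{v,a\}$, whose endpoints are distinct because $x\in I\setminus I'$ and $a\in I'$. Using $\dist(y,a)\le\dist(y,w)+\dist(w,a)$ and $\dist(v,a)\le\dist(v,w)+\dist(w,a)\le c(\rho)+\dist(w,a)$ and substituting $c(\Pi_0)=c(\Pi)-\dist(v,x)-\dist(y,w)$, the terms $\dist(v,x)$ and $\dist(y,w)$ cancel and the reserved slack $c(\rho)$ absorbs $\dist(v,w)$, leaving a total of at most $2\smt(S)+\mst(I')+2\dist(w,a)$. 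The ordering $\dist(w,a)\le\dist(w,b)$ finally replaces $2\dist(w,a)$ by $\dist(w,a)+\dist(w,b)=\mu(w,I')$, yielding precisely the bound $2\smt(S)+\mst(I')+\mu(w,I')$.

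The main obstacle is the degree bookkeeping needed to keep the assembly a genuine $1$-tree, together with the degenerate cases. The decisive structural point is to route through the $v$-$w$ path so that the bridge vertex $w$ becomes an \emph{endpoint} of $\Pi$: an interior $w$ would acquire degree three once the edge into $I'$ is attached and could not be suppressed without breaking the cost balance, which is exactly why the Eulerian $v$-$w$-path construction (rather than a plain doubled-tree tour) is used and why its slack $c(\rho)\ge\dist(v,w)$ is what pays for connecting $v$ to the $I'$-side. The remaining cases are easier and I would dispatch them by the same estimates with the appropriate summands collapsed: when $I=I'$ the claim reduces to $\mu(v,I)\le\mu(w,I)+2\dist(v,w)$, which holds termwise by the triangle inequality, and the boundary situations $|I'|=1$, $|I\setminus I'|=1$, $v\in I$, $w\in I'$ or $v=w$ are handled analogously.
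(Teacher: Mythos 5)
Your proof is correct, but it takes a genuinely different route from the paper's. The paper turns the doubled Steiner tree for $S=(I\setminus I')\cup\{v,w\}$ into a closed tour, splits it into \emph{two} $v$-$w$-paths $P_1,P_2$, appends the two $1$-tree edges $\{w,j_1\},\{w,j_2\}$ of the $(w,I')$-bound to their $w$-ends, truncates each path at the first terminal $j_i'$ of $I\setminus I'$ it meets, attaches the two resulting subpaths to the minimum spanning tree of $\distgraph{I'}$, and routes the new $v$-edges to $j_1'$ and $j_2'$ (with a small implication chain to ensure $j_1'\neq j_2'$ unless $|I|=1$). You instead use the Hamiltonian-path refinement of the double-tree bound -- doubling only the edges off the $v$-$w$-path $\rho$ of the Steiner tree so that the Euler walk is an open $v$-$w$-walk of cost $2\smt(S)-c(\rho)$ -- then strip $v$ and $w$ and reattach with a \emph{single} bridge edge into the spanning tree of $\distgraph{I'}$; the reserved slack $c(\rho)\ge\dist(v,w)$ exactly pays for rerouting the second $v$-edge to the nearest $I'$-vertex $a$ of $w$, and distinctness of the two $v$-edge endpoints is automatic since $x\in I\setminus I'$ and $a\in I'$. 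Both constructions spend precisely the budget $2\smt(S)$ plus the two cheapest $w$-to-$I'$ edges, and I verified your arithmetic closes: the terms $\dist(v,x)$ and $\dist(y,w)$ cancel as you claim and $2\dist(w,a)\le\mu(w,I')$ finishes it. What the paper's two-path symmetry buys is that a single uniform argument covers all configurations at the price of the $j_1'=j_2'$ analysis; what yours buys is automatic distinctness and a slightly sharper intermediate estimate, at the price of a longer list of deferred boundary cases ($I=I'$, $v\in I$, $v=w$, $|I'|=1$, $|I\setminus I'|=1$) -- these do all go through by the estimates you indicate, but in a polished write-up you should spell out at least $v\in I\setminus I'$ (where $v$ must not be stripped from $\Pi$, lest $F$ fail to span $I$) and $I=I'$ (where no Hamiltonian path exists and the claim degenerates to the triangle inequality).
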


\begin{proof}
Let $v,w \in V(G)$ and $\{\rootterm\} \subseteq I^\prime \subseteq I \subseteq \term$. We will show
\begin{align*}
  2\onetree(v,I) \leq 2\onetree(w,I^\prime) + 2\smt((I \setminus I^\prime) \cup\{v,w\}),
\end{align*}
which translates to
\begin{align*}
 &\min_{i, j \in I: i \neq j \vee |I| = 1} (d(v,i) + d(v,j)) + \mst(I) \\
 &\leq \min_{i, j \in I': i \neq j \vee |I'| = 1}(d(w,i) + d(w,j)) + \mst(I') + 2\smt((I \setminus I^\prime) \cup\{v,w\}).
\end{align*}

Consider a minimum spanning tree $T_1$ in $\distgraph{I'}$ and a  Steiner tree $T_2$ for $(I \setminus I^\prime) \cup\{v,w\}$.
Furthermore, let $j_1,j_2 \in I'$ with $j_1 \neq j_2 \vee |I'| = 1$. This setting is illustrated in Figure\nobreakspace \ref {onetree:proof:firstimage}. \\
First, we construct a tour $C$ in $\distgraph{(I \setminus I^\prime) \cup\{v,w\}}$ of cost at most $2c(T_2)$ using the standard double tree argument: If we double each edge in $T_2$, the graph gets
Eulerian and we can find a Eulerian cycle. If we visit the vertices in the order the Eulerian cycle visits them and skip already visited vertices, we obtain a tour of at most the same cost exploiting that the edge costs in the distance graph are metric.\\
We can decompose the tour into two paths $P_1$ and $P_2$ in $\distgraph{I \cup \{v,w\}}$ with endpoints $v$ and $w$ such
that $(I \setminus I^\prime) \cup\{v,w\} = V(P_1) \cup V(P_2)$ and $c(P_1) + c(P_2) = c(C)$. Now, for $i \in \{1,2\}$, we define $P_i' = (V(P_i) \cup \{j_i\}, E(P_i) \cup \{\{w,j_i\}\})$,
which is the path obtained by appending the edge $\{w, j_i\}$ to $P_i$. \\
If $I_1 := (I\setminus I') \cap V(P_1)$ is empty, let $j_1'$ be $j_1$, else, let
$j_1'$ be the first terminal in $I_1$ when traversing $P_1$ from $v$ to $w$. \\
Similarly, if $I_2 := (I\setminus I') \setminus V(P_1) \subseteq V(P_2)$ is empty, set $j_2' = j_2$, else let $j_2'$ be the first terminal in $I_2$ when traversing $P_2$ from $v$ to $w$. \\
Since $I_1$ and $I_2$ are disjoint and do not contain $j_1, j_2 \in I'$, we have
\begin{align*}
j_1' = j_2' \implies  (j_1 = j_1' = j_2' = j_2 \wedge I\setminus I' = \emptyset),
\end{align*}
which together with
\begin{align*}
j_1 = j_2 \implies |I'| = 1
\end{align*}
implies that
\begin{align*}
j_1' = j_2' \implies |I| = 1.
\end{align*}
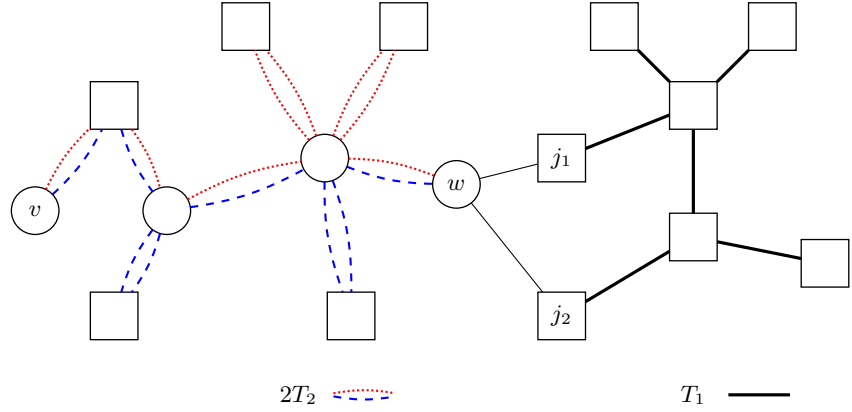
\begin{figure}[h]
\center
\scalebox{1}{
\begin{tikzpicture}[scale=0.7]

\def\firstcolor{red}
\def\secondcolor{blue}

\SetVertexNoLabel
\Vertex[x=0,y=0]{v}
\Vertex[x=2.5, y=0]{2}
\Vertex[x=5.5, y=1]{5}
\Vertex[x=8, y=0.5]{w}

\SetVertexNormal[Shape=rectangle]
\Vertex[x=6, y=-2]{7}
\Vertex[x=1.5, y=2]{1}
\Vertex[x=1.5, y=-2]{3}
\Vertex[x=4, y=3.5]{4}
\Vertex[x=7, y=3.5]{9}
\Vertex[x=10, y=1]{j1}
\Vertex[x=10, y=-2]{j2}
\Vertex[x=11, y=3.5]{10}
\Vertex[x=14, y=3.5]{14}
\Vertex[x=12.5, y=-0.5]{12}
\Vertex[x=15, y=-1]{13}
\Vertex[x=12.5, y=2]{11}

\node at (v) {$v$};
\node at (w) {$w$};
\node at (j1) {$j_1$};
\node at (j2) {$j_2$};
\node (leftcaption) at (5, -3.5) {2$T_2$};
\node (leftcaptionr) at (5.5, -3.5) {};
\node (leftcaptionrr) at (7, -3.5) {};
% \node at [x=5, y=-3.5] {2$T_2$};
\doubleEdgeColor{leftcaptionr}{leftcaptionrr}{\firstcolor}{densely dotted}{\secondcolor}{dashed}

\doubleEdgeColor{v}{1}{\firstcolor}{densely dotted}{\secondcolor}{dashed}
\doubleEdgeColor{1}{2}{\firstcolor}{densely dotted}{\secondcolor}{dashed}
\doubleEdgeColor{2}{3}{\secondcolor}{dashed}{\secondcolor}{dashed}
\doubleEdgeColor{2}{5}{\firstcolor}{densely dotted}{\secondcolor}{dashed}
\doubleEdgeColor{5}{4}{\firstcolor}{densely dotted}{\firstcolor}{densely dotted}
\doubleEdgeColor{5}{9}{\firstcolor}{densely dotted}{\firstcolor}{densely dotted}
\doubleEdgeColor{5}{7}{\secondcolor}{dashed}{\secondcolor}{dashed}
\doubleEdgeColor{5}{w}{\firstcolor}{densely dotted}{\secondcolor}{dashed}

%\draw [very thick] plot [smooth cycle, tension=0.7]  coordinates {(v.west) (1.north west)  (4.north west)  (9.north east) (w.east) (7.south east) (3.south west)};

\tikzset{EdgeStyle/.style = {thin}}
\Edge(w)(j1)
\Edge(w)(j2)
\tikzset{EdgeStyle/.style = {very thick}}
\Edge(j1)(11)
\Edge(10)(11)
\Edge(11)(14)
\Edge(11)(12)
\Edge(13)(12)
\Edge(j2)(12)

\node (rightcaption) at (12.5,-3.5) {$T_1$};
\node (rightcaptionr) at (13,-3.5) {};
\node (rightcaptionrr) at (14.5,-3.5) {};
\Edge (rightcaptionr)(rightcaptionrr)

\end{tikzpicture}}
\caption{The minimum spanning tree $T_1$, the double Steiner tree $2T_2$ and the edges $\{w,j_1\}$ and $\{w, j_2\}$. Edges in $2 T_2$ contributing to $P_1$
are colored red and edges contributing to $P_2$ are colored blue. Vertices in $I$ are drawn as squares.}
\label{onetree:proof:firstimage}
\end{figure}

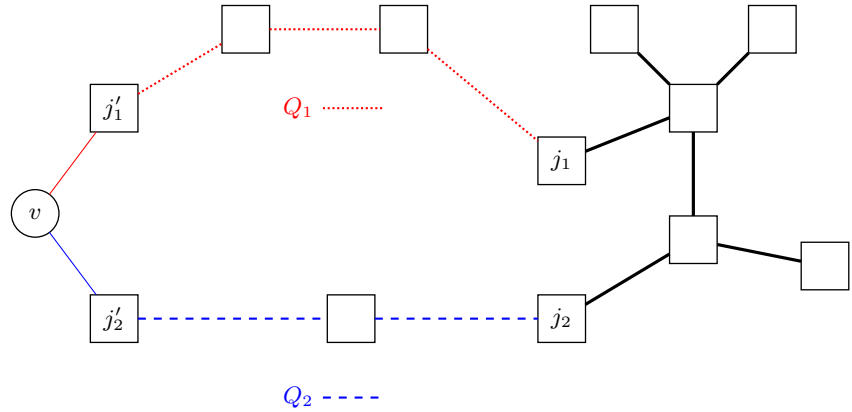
\begin{figure}[h]
\center
\scalebox{1}{
\begin{tikzpicture}[scale=0.7]

\def\firstcolor{red}
\def\secondcolor{blue}

\SetVertexNoLabel
\Vertex[x=0,y=0]{v}
\SetVertexNormal[Shape=rectangle]
\Vertex[x=6, y=-2]{7}
\Vertex[x=1.5, y=2]{1}
\Vertex[x=1.5, y=-2]{3}
\Vertex[x=4, y=3.5]{4}
\Vertex[x=7, y=3.5]{9}
\Vertex[x=10, y=1]{j1}
\Vertex[x=10, y=-2]{j2}
\Vertex[x=11, y=3.5]{10}
\Vertex[x=14, y=3.5]{14}
\Vertex[x=12.5, y=-0.5]{12}
\Vertex[x=15, y=-1]{13}
\Vertex[x=12.5, y=2]{11}

\node at (v) {$v$};
\node at (j1) {$j_1$};
\node at (j2) {$j_2$};
\node at (1) {$j_1'$};
\node at (3) {$j_2'$};
\node [color=\firstcolor] (q1) at (5, 2) {$Q_1$};
\node (q1r) at (5.3, 2) {};
\node (q1rr) at (6.8, 2) {};
\draw [thick, color=\firstcolor, densely dotted] (q1r) to (q1rr);
\node [color=\secondcolor] (q2) at (5, -3.5) {$Q_2$};
\node (q2r) at (5.3, -3.5) {};
\node (q2rr) at (6.8, -3.5) {};
\draw [thick, color=\secondcolor, dashed] (q2r) to (q2rr);

\tikzset{EdgeStyle/.style = {very thick}}
\Edge(j1)(11)
\Edge(10)(11)
\Edge(11)(14)
\Edge(11)(12)
\Edge(13)(12)
\Edge(j2)(12)

\tikzset{EdgeStyle/.style = {thin}}
% \Edge(v)(1)
% \Edge(v)(3)

\draw [thin, color=\firstcolor] (v) to (1);
\draw [thin, color=\secondcolor] (v) to (3);

\draw [thick, color=\firstcolor, densely dotted] (1) to (4) to (9) to (j1);
\draw [thick, color=\secondcolor, dashed] (3) to (7) to (j2);

\end{tikzpicture}}
\caption{$j_1'$, $j_2'$, $Q_1$ and $Q_2$ in the same setting as in Figure\nobreakspace \ref {onetree:proof:firstimage}.}
\label{onetree:proof:secondimage}
\end{figure}

Then, let $Q_i$ be the path in $\distgraph{I}$ obtained from
%${P_i'}_{[j_i', j_i]}$
the subpath of $P_i'$ from $j_i'$ to $j_i$
by skipping $w$.
This is illustrated in Figure\nobreakspace \ref {onetree:proof:secondimage}.
We have
\begin{align*}
\dist(v,j_1') + \dist(v,j_2') + d(Q_1) + d(Q_2) \leq \dist(w,j_1) + \dist(w,j_2) + 2 c(T_2).
\end{align*}
Also, we have $(I \setminus I^\prime) \cup \{j_1, j_2\} \subseteq V(Q_1) \cup V(Q_2)$ and clearly, we can find a spanning tree in $\distgraph{I}$ with cost at most $d(Q_1) + d(Q_2) + d(T_1)$ by attaching $Q_1$ and
$Q_2$ to $T_1$. \\
Therefore,
\begin{align*}
 &\min_{i, j \in I: i \neq j \vee |I| = 1} (d(v,i) + d(v,j)) + \mst(I)
 \\&\leq \dist(v,j_1') + \dist(v,j_2') + d(Q_1) + d(Q_2) + d(T_1)
 \\&\leq \dist(w,j_1) + \dist(w,j_2) + d(T_1) + 2 c(T_2).
\end{align*}
\explicitqed
\end{proof}

Note that we could also use half the cost of a minimum spanning tree for $I \cup \{v\}$ to obtain a valid lower bound.
However, the 1-tree lower bound is always larger and hence leads to better run times.

\begin{lemma}
Let $(G, c, \term)$ be an instance of the Steiner tree problem and $\rootterm \in \term$. Then, we can implement $\onetree$ such
that after a preprocessing time of $\bigO(k(n \log n + m) + 2^kk^2)$, we can evaluate $\onetree$ for every
$v \in V(G)$ and $\{\rootterm\} \subseteq I \subseteq \term$ in time $\bigO({|I|})$, where $n = |V(G)|$, $m = |E(G)|$ and $k = |\term|$.
\end{lemma}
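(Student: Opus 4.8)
The plan is to observe that evaluating $\onetree(v,I)$ requires only two ingredients that separate cleanly: the quantity $\mst(I)/2$, which does not depend on $v$, and the term $\min_{i,j \in I : i\neq j \vee |I|=1}\tfrac{\dist(v,i)+\dist(v,j)}{2}$, which is simply half the sum of the two cheapest shortest-path distances from $v$ to terminals in $I$. I would precompute each ingredient so that the online evaluation reduces to a table lookup plus a single linear scan over $I$.

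First I would compute, for every terminal $t \in \term$, the shortest-path distances $\dist(v,t)$ from $t$ to all $v \in V(G)$ by running Dijkstra's algorithm $k$ times, once per terminal, using Fibonacci heaps. Each run costs $\bigO(n\log n + m)$, so this step takes $\bigO(k(n\log n + m))$ and yields a table of all values $\dist(v,t)$ of size $\bigO(kn)$; in particular it gives all pairwise terminal distances $\dist(i,j)$, which are exactly the edge costs of the distance graph $\distgraph{\term}$.

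Next I would precompute $\mst(I)$ for every subset $\{\rootterm\} \subseteq I \subseteq \term$ and store the values in a table indexed by $I$. For a fixed $I$, a minimum spanning tree of the complete graph $\distgraph{I}$ can be computed by Prim's algorithm in $\bigO(|I|^2)$ time, since all its edge costs are already available from the table above. Summing over all subsets, the total cost is $\sum_{I \subseteq \term} \bigO(|I|^2) = \bigO\bigl(\sum_{p=0}^{k}\binom{k}{p} p^2\bigr) = \bigO(2^k k^2)$. Together, the two precomputation phases stay within the claimed bound $\bigO(k(n\log n + m) + 2^k k^2)$.

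Finally, to evaluate $\onetree(v,I)$ online: the term $\mst(I)/2$ is a single table lookup in $\bigO(1)$; the remaining term is obtained by scanning the $|I|$ values $\dist(v,t)$ for $t \in I$ and keeping track of the two smallest (or, if $|I|=1$, taking twice the single distance, which the constraint $i\neq j \vee |I|=1$ permits). This scan runs in $\bigO(|I|)$, so the overall evaluation time is $\bigO(|I|)$ as claimed. The only step needing real care is verifying that computing all $2^k$ minimum spanning trees fits in the budget; this rests entirely on the counting identity $\sum_{p}\binom{k}{p} p^2 = \bigO(2^k k^2)$, which shows that the naive per-subset Prim computation already suffices and that no shared-work optimization across subsets is needed.
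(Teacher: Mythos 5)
Your proposal is correct and follows essentially the same route as the paper: $k$ Dijkstra runs with Fibonacci heaps for all terminal--vertex distances, Prim's algorithm in $\bigO(|I|^2)$ per subset for the $\mst(I)$ table (with the sum over subsets bounded by $\bigO(2^k k^2)$), and an $\bigO(|I|)$ scan for the two smallest distances at evaluation time. The only difference is that you spell out the binomial counting identity, which the paper leaves implicit.
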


\begin{proof}
First, for every terminal $s \in \term$, we compute $\dist(v,s)$ for all $v \in V(G)$ in $\bigO(n \log n + m)$ time using Dijkstra's algorithm implemented with
a Fibonacci heap \cite{fibheap}.
For every $I \subseteq \term$, we compute $\mst(I)$ in $\bigO(|I|^2)$ time using Prim's algorithm \cite{prim}.
This results in a total preprocessing time of $\bigO(k(n \log n + m) + 2^kk^2)$.
Clearly,
\begin{align*}
 \min_{i, j \in I: i \neq j \vee |I| = 1} (\dist(v,i) + \dist(v,j))
\end{align*}
can be evaluated in $\bigO(|I|)$ time if $\dist(v,i)$ is known for all $v \in V(G)$ and $i \in I$.
\explicitqed
\end{proof}

Of course, in practice we do not compute minimum spanning trees for all sets of terminals in advance, but compute them dynamically when needed.

\begin{theorem}
Let $j \in \mathbb{N}$ be a constant.
Let $(G, c, \term)$ be an instance of the Steiner tree problem and $\rootterm \in \term$. Then, we can compute $\smt(\term)$ in time $\bigO(3^kn + 2^k(n \log n + m))$ using the Dijkstra-Steiner algorithm
with $\bigL = \max(\jplustterm{j}, \onetree)$.
\qed
\end{theorem}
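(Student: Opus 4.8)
The plan is to assemble three facts already in hand: the generic run-time bound of Theorem~\ref{dijkstrasteiner:proof:runtime}, the evaluation/preprocessing costs of the two component lower bounds, and the elementary estimate that a fixed-degree polynomial in $k$ is dominated by the gap between $2^k$ and $3^k$. Correctness comes essentially for free: since both $\jplustterm{j}$ and $\onetree$ are valid lower bounds (the latter by Lemma~\ref{1treefeasible}), Proposition~\ref{lbcombprop} shows that $\bigL = \max(\jplustterm{j}, \onetree)$ is a valid lower bound, so Theorem~\ref{dijkstrasteiner:proof:correct} guarantees that the Dijkstra-Steiner algorithm with this $\bigL$ returns an optimum Steiner tree and in particular computes $\smt(\term)$.

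For the run time, I would first bound the per-label evaluation cost $f_{\bigL}$. By caching we query $\bigL(v, \term \setminus I)$ at most once per label, and each such query reduces to one evaluation of $\jplustterm{j}$ and one of $\onetree$, costing $\bigO(|\term \setminus I|^{j-1})$ and $\bigO(|\term \setminus I|)$ respectively. As $|\term \setminus I| \leq k$ and $j$ is a constant, a single evaluation takes $f_{\bigL} = \bigO(k^{j-1} + k)$, a fixed power of $k$. Substituting into Theorem~\ref{dijkstrasteiner:proof:runtime} gives a run time of $\bigO(3^kn + 2^k(n \log n + m) + 2^k n k^{j-1})$, to which I must add the preprocessing for the two bounds.

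The one step that needs an argument is absorbing all extra terms into $\bigO(3^k n + 2^k(n\log n + m))$, and this rests entirely on the estimate that every polynomial in $k$ of constant degree is $\bigO((3/2)^k)$, hence $2^k k^{c} = \bigO(3^k)$, and is also $\bigO(2^k)$. Concretely, the evaluation overhead satisfies $2^k n k^{j-1} = \bigO(3^k n)$. The $\jplustterm{j}$-preprocessing $\bigO(3^k + (2k)^{j-1}n + k^{j-1}(n\log n + m))$ splits into $3^k = \bigO(3^k n)$ and, using $(2k)^{j-1} = \bigO(2^k)$ and $k^{j-1} = \bigO(2^k)$ together with $n = \bigO(n\log n + m)$, into terms that are $\bigO(2^k(n\log n+m))$. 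Likewise the $\onetree$-preprocessing $\bigO(k(n\log n + m) + 2^kk^2)$ splits into $k(n\log n+m) = \bigO(2^k(n\log n+m))$ and $2^kk^2 = \bigO(3^k) = \bigO(3^k n)$. Summing preprocessing and algorithm run time then collapses everything to the claimed $\bigO(3^kn + 2^k(n \log n + m))$. I anticipate no real obstacle: the entire proof is bookkeeping organized around the bound $2^k\cdot\text{(polynomial in }k) = \bigO(3^k)$, which is precisely what makes these nontrivial lower bounds asymptotically free in the worst case.
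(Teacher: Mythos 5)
Your proposal is correct and follows exactly the route the paper intends: the theorem is stated without proof as an immediate corollary of Theorem~\ref{dijkstrasteiner:proof:runtime}, Proposition~\ref{lbcombprop}, and the two lemmas giving the preprocessing and evaluation costs of $\jplustterm{j}$ and $\onetree$, with all extra terms absorbed via $2^k\cdot\mathrm{poly}(k)=\bigO(3^k)$. Your write-up just makes this bookkeeping explicit; nothing is missing.
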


% \subsection{The TSP Bound}
The 1-tree lower bound exploits the fact that 1-trees can be used to compute lower bounds on the minimum cost of a tour, which in turn is at most twice as expensive as a minimum cost Steiner tree.
Using more preprocessing and evaluation time, we can eliminate the loss of approximating tours by 1-trees by using \emph{optimum tours} to get lower bounds.
While it may sound unreasonable to use optimum solutions for an \np-hard problem to speed up another algorithm, it turns out we can compute optimum tours for the union
of sets of terminals and at most one vertex quite fast if there are only few terminals. This is due to the fact that the cost of an optimum tour in $\distgraph{I \cup \{v\}}$ only depends on the distances from terminals to $v$ and shortest Hamiltonian paths with given endpoints in $\distgraph{I}$, which can be computed in advance.
For a set of vertices $X \subseteq V(G)$, we denote by $\tsp(X)$ the minimum cost of a Hamiltonian circuit in $\distgraph{X}$.
\begin{definition}
Let $(G, c, \term)$ be an instance of the Steiner tree problem and $\rootterm \in \term$. Then, the \emph{TSP bound $\tspbound$} is defined as
\begin{align*}
\tspbound(v,I) =\frac{\tsp(I \cup \{v\})}{2}
\end{align*}
for $v \in V(G)$ and $ I \subseteq \term$.
\end{definition}

\begin{lemma}
Let $(G, c, \term)$ be an instance of the Steiner tree problem and $\rootterm \in \term$. Then, $\tspbound$ is a valid lower bound. Moreover, after a preprocessing time of $\bigO(k(n \log n + m) + 2^kk^3)$, we can evaluate $\tspbound(v,I)$ in time
$\bigO(|I|^2)$ for all $v\in V(G)$ and $I \subseteq \term$.
\end{lemma}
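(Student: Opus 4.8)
The statement has two parts—that $\tspbound$ is a valid lower bound, and that it admits the claimed preprocessing and evaluation times—so I would treat them separately. For validity, first I would dispose of the base condition: a minimum Hamiltonian circuit on the single vertex $\rootterm$ has cost $0$, so $\tspbound(\rootterm,\{\rootterm\}) = \tsp(\{\rootterm\})/2 = 0$. For the main inequality, fix $v,w \in V(G)$ and $\{\rootterm\} \subseteq I' \subseteq I \subseteq \term$; after clearing the factor $\tfrac12$, it suffices to prove
\[
\tsp(I \cup \{v\}) \le \tsp(I' \cup \{w\}) + 2\,\smt((I \setminus I') \cup \{v,w\}).
\]
This parallels the $\onetree$ proof of Lemma~\ref{1treefeasible}, but I would use an \emph{optimum} tour rather than a 1-tree, which makes the construction cleaner: the plan is to exhibit a single Hamiltonian circuit in $\distgraph{I \cup \{v\}}$ whose cost is bounded by the right-hand side.

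To build it I would take an optimum tour $C'$ in $\distgraph{I'\cup\{w\}}$ and an optimum Steiner tree $T_2$ for $(I\setminus I')\cup\{v,w\}$ in $G$. Doubling the edges of $T_2$, following an Eulerian walk, and shortcutting (using that the distance-graph costs $\dist(\cdot,\cdot)$ are metric) yields, by the standard double-tree argument already used in the excerpt, a tour $C$ in $\distgraph{(I\setminus I')\cup\{v,w\}}$ of cost at most $2c(T_2)$. Now I would form the multigraph $C \cup C'$: both tours are connected and share at least $w$, so the union is connected, and every vertex has even degree (degree $2$ in each tour it belongs to). Hence $C\cup C'$ has an Eulerian circuit of cost $c(C)+c(C')$ visiting exactly the union $I \cup \{v,w\}$ of the two vertex sets (using $I'\subseteq I$), and shortcutting it—skipping repeated visits, and skipping $w$ itself when $w\notin I$—gives a Hamiltonian circuit in $\distgraph{I\cup\{v\}}$ of cost at most $c(C)+c(C') \le 2\,\smt((I\setminus I')\cup\{v,w\}) + \tsp(I'\cup\{w\})$. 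Dividing by $2$ gives the inequality. This argument needs no assumption that $C$ and $C'$ meet in only one vertex, so it is robust to the case where $v$ itself happens to be a terminal lying in $I'$.

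For the running time I would exploit the observation that a Hamiltonian circuit on $I\cup\{v\}$, after deleting $v$, becomes a Hamiltonian path on $I\setminus\{v\}$ whose endpoints are the two neighbours of $v$, and conversely any such path plus $v$ closes up to a circuit. Writing $\ham_{i,j}(S)$ for the cost of a shortest Hamiltonian path in $\distgraph{S}$ with endpoints $i,j$, this gives
\[
\tsp(I \cup \{v\}) = \min_{i,j \in I\setminus\{v\}} \bigl(\dist(v,i) + \dist(v,j) + \ham_{i,j}(I\setminus\{v\})\bigr),
\]
where for $|I\setminus\{v\}| = 1$ one allows $i=j$ with $\ham_{i,i}=0$. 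In preprocessing I would (i) run Dijkstra from each of the $k$ terminals to obtain $\dist(v,s)$ for all $v\in V(G)$, $s\in\term$ in $\bigO(k(n\log n + m))$ time, and (ii) compute all values $\ham_{i,j}(S)$ by a Held--Karp dynamic program: for each fixed start $i$ this is the usual $\bigO(2^k k^2)$ recurrence over states $(S,j)$, and running it from all $k$ starts costs $\bigO(2^k k^3)$. Given $v$ and $I$, the displayed minimum ranges over $\bigO(|I|^2)$ pairs, each a constant-time table lookup, so evaluation is $\bigO(|I|^2)$.

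The steps demanding the most care are the degenerate cases—tours on one or two vertices, the endpoint-collision case $i=j$, and especially the possibility that $v\in I$ (so that $I\cup\{v\}=I$), which I handle uniformly by phrasing everything in terms of $I\setminus\{v\}$—together with verifying that each shortcutting step is legitimate, i.e.\ that all edges involved carry shortest-path costs $\dist$ in $G$ and hence obey the triangle inequality. The combinatorial core (double tree, Eulerian circuit, shortcut, and merging two tours at a shared vertex) is routine once these conventions are fixed, so I expect the bookkeeping around the small cases to be the only real obstacle.
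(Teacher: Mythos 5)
Your proposal follows essentially the same route as the paper: the validity argument merges an optimum tour on $I'\cup\{w\}$ with the doubled-Steiner-tree tour on $(I\setminus I')\cup\{v,w\}$ at the shared vertex $w$ and shortcuts (your Eulerian-circuit-of-the-union phrasing is just a more explicit version of the paper's ``insert $C_2$ into $C_1$ after $w$''), and the run-time argument is the same Held--Karp precomputation of shortest Hamiltonian paths with fixed endpoints followed by an $\bigO(|I|^2)$ enumeration of the two neighbours of $v$. The proof is correct; you merely spell out the degenerate cases and the $\bigO(2^kk^3)$ accounting in more detail than the paper does.
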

\begin{proof}
Let $v,w \in V(G)$ and $\{\rootterm\} \subseteq I^\prime \subseteq I \subseteq \term$. We will show
\begin{align*}
  2\tspbound(v,I) \leq 2\tspbound(w,I^\prime) + 2\smt((I \setminus I^\prime) \cup\{v,w\}),
\end{align*}
which is equivalent to
\begin{align*}
  \tsp(I \cup \{v\}) \leq \tsp(I' \cup \{w\}) + 2\smt((I \setminus I^\prime) \cup\{v,w\}).
\end{align*}
First, we choose an optimal tour $C_1$ in $\distgraph{I' \cup \{w\}}$. Then, we construct a tour $C_2$ in $\distgraph{(I \setminus I^\prime) \cup\{v,w\}}$ of cost at most
$2\smt((I \setminus I^\prime) \cup\{v,w\})$ by doubling the edges of an optimum Steiner tree, finding a Eulerian walk and taking shortcuts. We have
$I \cup \{v\} = V(C_1) \cup V(C_2)$ and $w \in V(C_1) \cap V(C_2)$, so we can construct a tour in $\distgraph{I \cup \{v\}}$ by inserting $C_2$ into $C_1$ after $w$ and taking shortcuts,
which results in a tour of cost of at most
\begin{align*}
\tsp(I' \cup \{w\}) + 2\smt((I \setminus I^\prime) \cup\{v,w\}).
\end{align*}
We achieve the given run time using a dynamic programming approach very similar
to the TSP algorithm by Held and Karp \cite{heldkarptsp}. The idea is to compute shortest Hamiltonian paths in the distance graph of the terminals for all possible pairs of endpoints.
Then, one can evaluate $\tspbound(v,I)$ in $\bigO(|I|^2)$ time by enumerating all possible pairs of neighbors of $v$ in the tour.
\explicitqed
\end{proof}
% While in most cases the exponential preprocessing time does not pay off,
% the TSP bound does improve the performance of our algorithm on instances with few terminals ($< 12$) and random underlying graphs.

\section{Pruning}
\label{secpruning}
In this section, we present techniques to speed up the algorithm further by discarding labels $(v,I)$ for which we can prove that they cannot contribute
to an optimum solution. This affects the number of iterations, since these labels then are not chosen in line\nobreakspace \ref {dijkstrasteiner:firstloop} of
the algorithm. Also, it speeds up the execution of line\nobreakspace \ref {dijkstrasteiner:superset}, since we only have to consider existing labels in the merge step.
First, we show how to identify labels that cannot contribute to an optimum solution. Then we show that we can indeed safely discard them in our algorithm.

\begin{definition} \label{subtreedef}
Let $(G, c, \term)$ be an instance of the Steiner tree problem, $(v,I) \in V(G) \times 2^\term$ and $T$ be a Steiner tree for $\term$. A tree $T_1$ is
said to be a \emph{$(v,I)$-subtree of $T$} if there exists a tree $T_2$ such that
\begin{enumerate}[(i)]
 \item $V(T_1) \cup V(T_2) = V(T)$,
 \item $V(T_1) \cap V(T_2) = \{v\}$,
 \item $T_1$ is a subtree of $T$ containing $\{v\} \cup I$ and
 \item $T_2$ is a subtree of $T$ containing $\{v\} \cup (\term \setminus I)$.
\end{enumerate}
\end{definition}
For a tree $T$ and a $(v,I)$-subtree $T_1$ of $T$, we will also refer to the corresponding subtree $T_2$ by $T - T_1$.

\begin{lemma} \label{boundprunelemma}
Let $(G, c, \term)$ be an instance of the Steiner tree problem and $\rootterm \in \term$. Let $\bigL$ be a valid lower bound and $U \geq \smt(\term)$.
Furthermore, let $v \in V(G)$, $I \subseteq \term \setminus \{\rootterm\}$ and $T_1$ be a tree in $G$ containing $\{v\} \cup I$ with
\begin{align*}
c(T_1) + \bigL(v, \term \setminus I) > U.
\end{align*}
Then, there is no optimum Steiner tree for $\term$ containing $T_1$ as a $(v,I)$-subtree.
\end{lemma}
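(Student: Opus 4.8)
The plan is to argue by contradiction, using the fact, noted right after the definition of valid lower bounds, that $\bigL$ is itself a lower bound on Steiner minimal tree costs. Suppose some optimum Steiner tree $T$ for $\term$ contains $T_1$ as a $(v,I)$-subtree. By Definition\nobreakspace\ref{subtreedef}, there is then a companion tree $T_2 = T \graphminus T_1$ with $V(T_1) \cup V(T_2) = V(T)$, $V(T_1) \cap V(T_2) = \{v\}$, and $T_2$ a subtree of $T$ containing $\{v\} \cup (\term \setminus I)$.

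First I would establish the additivity of the cost. Since $T_1$ and $T_2$ share only the single vertex $v$ and together cover all of $V(T)$, their edge sets partition $E(T)$, so $c(T) = c(T_1) + c(T_2)$. Next, because $T_2$ is a tree containing $\{v\} \cup (\term \setminus I)$, it is in particular a Steiner tree for this set, giving $c(T_2) \geq \smt(\{v\} \cup (\term \setminus I))$.

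The key observation is that the lower-bound property applies to the argument $\term \setminus I$. Since $I \subseteq \term \setminus \{\rootterm\}$, we have $\rootterm \in \term \setminus I$, so the remark that $\bigL(u,J) \leq \smt(J \cup \{u\})$ for $\{\rootterm\} \subseteq J$ yields $\bigL(v, \term \setminus I) \leq \smt(\{v\} \cup (\term \setminus I))$. Combining these,
\begin{align*}
 c(T) = c(T_1) + c(T_2) \geq c(T_1) + \smt(\{v\} \cup (\term \setminus I)) \geq c(T_1) + \bigL(v, \term \setminus I) > U \geq \smt(\term).
\end{align*}
But $T$ is an optimum Steiner tree for $\term$, so $c(T) = \smt(\term)$, a contradiction.

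I do not expect any serious obstacle here: the argument is essentially a bookkeeping step once the companion tree $T_2$ from the $(v,I)$-subtree definition is in hand. The only point that needs a moment's care is verifying that $\rootterm \in \term \setminus I$, which is exactly why the hypothesis restricts to $I \subseteq \term \setminus \{\rootterm\}$; without it the inequality $\bigL(v, \term \setminus I) \leq \smt(\{v\} \cup (\term \setminus I))$ would not follow from the valid-lower-bound definition, since that definition only constrains values $\bigL(u,J)$ with $\rootterm \in J$.
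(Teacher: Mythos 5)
Your proof is correct and follows essentially the same route as the paper's: decompose $c(T) = c(T_1) + c(T \graphminus T_1)$, bound $c(T \graphminus T_1)$ below by $\smt(\{v\} \cup (\term \setminus I))$, and then by $\bigL(v, \term \setminus I)$ using the valid-lower-bound property with $w = \rootterm$ and $I' = \{\rootterm\}$ (which is exactly where your observation that $\rootterm \in \term \setminus I$ is needed). The only cosmetic difference is that you phrase it as a contradiction while the paper states the same inequality chain directly to conclude $c(T) > \smt(\term)$ for any such $T$.
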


\begin{proof}
Let $T$ be a Steiner tree for $\term$ such that $T_1$ is a $(v,I)$-subtree of $T$. Then,
\begin{align*}
c(T) &= c(T_1) + c(T \graphminus T_1) \\
       &\geq c(T_1) + \smt(\{v\} \cup (\term \setminus I)) \\
       &= c(T_1) + \smt(\{v,\rootterm\} \cup ((\term \setminus I) \setminus \{\rootterm\})) + \bigL(\rootterm,\{\rootterm\}) \\
       &\geq c(T_1) + \bigL(v,\term \setminus I) \\
       &> U \\
       &\geq \smt(\term).
\end{align*}
\explicitqed
\end{proof}

Lemma\nobreakspace \ref{boundprunelemma} is a trivial exploitation of the lower bound $\bigL$. Its effect on the run time of the algorithm is rather limited, since we only discard labels that would never have
been labeled permanently anyway. In contrast, the following lemma allows significant run time improvements of our algorithm, in particular on geometric instances. An application is illustrated in Figure\nobreakspace \ref{pruning:image}.

\begin{lemma} \label{setboundprunelemma}
Let $(G, c, \term)$ be an instance of the Steiner tree problem, $v \in V(G)$, $I \subset \term$ and $\emptyset \neq S \subseteq \term \setminus I$. Furthermore, let $T_1$ be a Steiner tree for
$\{v\} \cup I$ and $\subgraphname$ be a subgraph of $G$ with
\begin{enumerate}[(i)]
 \item $(I \cup S) \subseteq V(\subgraphname)$,
 \item each connected component of $\subgraphname$ contains a terminal in $S$ and
 \item $c(\subgraphname) < c(T_1)$.
\end{enumerate} Then, there is no optimum Steiner tree for $\term$ in $G$
containing $T_1$ as a $(v,I)$-subtree.
\end{lemma}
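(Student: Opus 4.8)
The plan is to argue by contradiction: suppose $T$ is an optimum Steiner tree for $\term$ that contains $T_1$ as a $(v,I)$-subtree, and construct a strictly cheaper Steiner tree for $\term$, contradicting optimality. By Definition~\ref{subtreedef}, $T$ splits into $T_1$ and $T_2 = T \graphminus T_1$ with $V(T_1) \cap V(T_2) = \{v\}$, where $T_1$ contains $\{v\} \cup I$ and $T_2$ contains $\{v\} \cup (\term \setminus I)$. The idea is to excise $T_1$ from $T$ and replace it, together with the given cheap subgraph $\subgraphname$, by something that still connects all terminals but costs less.

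First I would form the graph $T_2 \cup \subgraphname$ (taking the union of vertex and edge sets). Since $T_2$ is connected and contains $\term \setminus I \supseteq S$, every connected component of $\subgraphname$ is attached to $T_2$ through its terminal in $S$ (guaranteed by hypothesis (ii)), so $T_2 \cup \subgraphname$ is connected. By hypothesis (i), $\subgraphname$ contains all of $I \cup S$, so $T_2 \cup \subgraphname$ contains $(\term \setminus I) \cup I = \term$; hence it is a connected subgraph of $G$ spanning $\term$, and it contains a spanning tree $T'$ with $\term \subseteq V(T')$. The key step is the cost estimate: using that $V(T_1) \cap V(T_2) = \{v\}$ so $T_1$ and $T_2$ share no edges, we have $c(T) = c(T_1) + c(T_2)$, while
\begin{align*}
c(T') \leq c(T_2 \cup \subgraphname) \leq c(T_2) + c(\subgraphname) < c(T_2) + c(T_1) = c(T),
\end{align*}
where the strict inequality is exactly hypothesis (iii). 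Thus $T'$ is a Steiner tree for $\term$ strictly cheaper than $T$, contradicting the optimality of $T$.

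I expect the main obstacle to be the bookkeeping in the connectivity argument rather than the cost bound. Specifically, one must be careful that $\subgraphname$ need not be connected on its own (only its components carry a terminal of $S$), and that attaching it to $T_2$ genuinely yields a single connected subgraph spanning all of $\term$: each component of $\subgraphname$ meets $T_2$ precisely because it contains a vertex of $S \subseteq \term \setminus I \subseteq V(T_2)$, and the components of $\subgraphname$ together cover $I \cup S$, so no terminal of $I$ is lost when $T_1$ is discarded. The cost inequality $c(T_2 \cup \subgraphname) \le c(T_2) + c(\subgraphname)$ holds because the union can only share (not add) edges, and extracting a spanning tree does not increase cost since all edge costs are nonnegative.
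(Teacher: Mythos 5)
Your proof is correct and follows essentially the same route as the paper: form $T_2 \cup \subgraphname$, observe it is connected (via the terminals of $S$ lying in $V(T_2)$) and spans $\term$, and conclude $\smt(\term) \leq c(T_2) + c(\subgraphname) < c(T_2) + c(T_1) = c(T)$. The only cosmetic difference is that the paper states this directly as $\smt(\term) < c(T)$ for any $T$ containing $T_1$ as a $(v,I)$-subtree, rather than phrasing it as a contradiction with an assumed optimum $T$.
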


\begin{proof}
Let $T$ be a Steiner tree for $\term$ in $G$ containing $T_1$ as a $(v,I)$-subtree. Then, there exists a tree $T_2$ containing $\{v\} \cup (\term \setminus I)$ with $c(T) = c(T_1) + c(T_2)$. We construct a subgraph $T'$ of $G$ containing $\term$ by $T' = T_2 \graphplus \subgraphname$.
As $\subgraphname$ contains a path from every vertex in $\subgraphname$ to some vertex in $S$, $T_2$ is connected and $S \subseteq \term \setminus I \subseteq V(T_2)$, $T'$ is connected.
Thus,
\begin{align*}
 \smt(\term) &\leq c(T')
 \\ &\leq c(T_2) + c(\subgraphname)
 \\ &= c(T) - c(T_1) + c(\subgraphname)
 \\ &< c(T).
\end{align*}
\explicitqed
\end{proof}

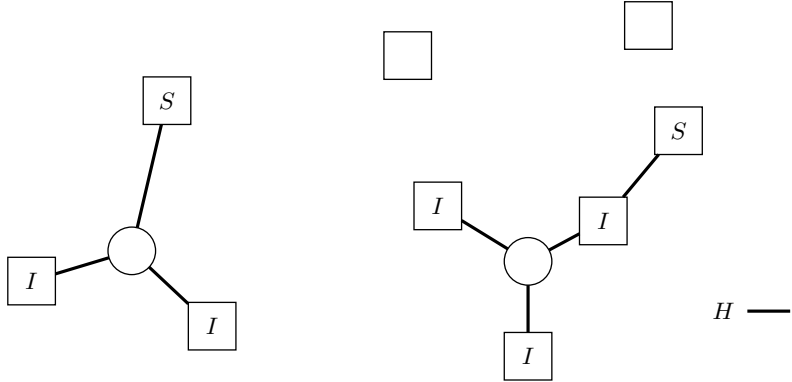
\begin{figure}[h]
\center
\scalebox{1}{
\begin{tikzpicture}[scale=2]

\def\firstcolor{red}
\def\secondcolor{blue}

\SetVertexNoLabel
\Vertex[x=-0.934,y=1.201]{10}
\Vertex[x=1.7,y=1.131]{11}
\SetVertexNormal[Shape=rectangle]
\Vertex[x=-1.6, y=1]{0}
\Vertex[x=-0.4, y=0.7]{1}
\Vertex[x=-0.7, y=2.2]{2}
\Vertex[x=1.1, y=1.5]{3}
\Vertex[x=1.7, y=0.5]{4}
\Vertex[x=2.2, y=1.4]{5}
\Vertex[x=2.7, y=2]{6}
\Vertex[x=2.5, y=2.7]{7}
% \Vertex[x=-1.4, y=2.9]{8}
\Vertex[x=0.9, y=2.5]{9}

\node at (0) {$I$};
\node at (1) {$I$};
\node at (2) {$S$};
\node at (3) {$I$};
\node at (4) {$I$};
\node at (5) {$I$};
\node at (6) {$S$};

\node (q1) at (3, 0.8) {$\subgraphname$};
\node (q1r) at (3.1, 0.8) {};
\node (q1rr) at (3.5, 0.8) {};
\draw [very thick] (q1r) to (q1rr);

\tikzset{EdgeStyle/.style = {very thick}}
\Edge(0)(10)
\Edge(1)(10)
\Edge(2)(10)
\Edge(3)(11)
\Edge(4)(11)
\Edge(5)(11)
\Edge(5)(6)

\tikzset{EdgeStyle/.style = {thin}}
% \Edge(v)(1)
% \Edge(v)(3)

\end{tikzpicture}}
\caption{By Lemma\nobreakspace \ref{setboundprunelemma}, no label for the set $I$ with cost strictly larger than $c(\subgraphname)$ can be part of an optimum solution. Terminals are drawn as squares, elements of $I$ and $S$ are labeled with the respective set.}
\label{pruning:image}
\end{figure}

In Section \ref{secresults}, we will explain how suitable graphs $\subgraphname$ can be found.
Lemmata\nobreakspace \ref {boundprunelemma} and\nobreakspace \ref{setboundprunelemma} allow us to identify labels that cannot contribute to an optimum solution.
Theorem\nobreakspace \ref {dijkstrasteiner:prunethm} shows that we can discard these labels without affecting the correctness of the algorithm. First, we prove an auxiliary lemma used in the proof of Theorem\nobreakspace \ref {dijkstrasteiner:prunethm}:

\begin{lemma} \label{replacelemma}
Let $(G, c, \term)$ be an instance of the Steiner tree problem and $\rootterm \in \term$. Let $O \subseteq V(G) \times 2^\sources$ be the set of pairs $(v,I)$ with the property that there is an optimum Steiner tree for $\term $ containing a $(v,I)$-subtree. Furthermore, let $(v,I) \in O$ and $T_1$ be a tree containing $\{v\} \cup I$ with $c(T_1) = \smt(\{v\} \cup I)$. Then, there
is an optimum Steiner tree $T$ for $\term$ containing $T_1$ as a $(v,I)$-subtree.
\end{lemma}
\begin{proof}
Since $(v,I) \in O$, there is an optimum Steiner tree $T'$ for $\term$ and a tree $T_1'$ which is a $(v,I)$-subtree of $T$. Set $T = (T' \graphminus T_1') \graphplus T_1$. Then,
since $T' \graphminus T_1'$ is a tree containing $\{v\} \cup (\term \setminus I)$ and $T_1$ is a tree containing $\{v\} \cup I$, $T$ is a connected subgraph of $G$ containing $\{v\} \cup \term$. Furthermore, we have
\begin{align*}
\smt(\term) \leq c(T) &\leq c(T' \graphminus T_1') + c(T_1) \\
&= c(T' \graphminus T_1') + \smt(\{v\} \cup I) \\
&= c(T') - c(T_1') + \smt(\{v\} \cup I) \\
&\leq c(T') \\
&= \smt(\term).
\end{align*}
Since we do not have edges of zero cost, this shows $T$ is an optimum Steiner tree and $T_1$ is a $(v,I)$-subtree of $T$.
\explicitqed
\end{proof}
We now formalize a general method of pruning:
\vspace{5pt}

\begin{procedure}[H]
\SetNlSty{phantom}{}{}
\newcommand{\mycapsty}[1]{\texttt{{#1}}}\SetAlCapNameSty{mycapsty}
%\SetAlgoLined
%\SetKwFunction{prune}{prune}
 %\SetKwProg{myproc}{Procedure}{}{}
 %\myproc{prune}{
\SetNlSty{textbf}{}{}   \setcounter{AlgoLine}{0}
\If{we can prove that there is no optimum Steiner tree $T$ for $\term$
  such that  {\tt backtrack}$(v,I)$ is the edge set of a $(v,I)$-subtree of $T$}
{
$N := N \setminus \{(v,I)\}$\;
}
\caption{{prune}($v, I$)}
% \label{prune}
\end{procedure}

\vspace{5pt}

Note that when considering a not permanently labeled element $(v,I) \in N$, we cannot guarantee that $\texttt{backtrack}(v,I)$ is the edge set of a tree, since it may contain cycles.
However, if it is not a tree, we can prune $(v,I)$ obviously.

\begin{theorem}
The Dijkstra-Steiner algorithm still works correctly if we modify it to execute \textup{prune($w, I)$} after line \ref{dijkstrasteiner:afterupdate:neighbor} and \textup{prune($v, I \cup J$)} after line \ref{dijkstrasteiner:afterupdate:superset}.
\label{dijkstrasteiner:prunethm}
\end{theorem}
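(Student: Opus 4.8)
The plan is to show that the pruning modification preserves the correctness argument from Theorem~\ref{dijkstrasteiner:proof:correct}. The key observation is that the original correctness proof rests entirely on the four invariants (a)--(d), and among these, only invariant~(\ref{dijkstrasteiner:proof:fourthinv}) (that $N$ is nonempty) could plausibly be destroyed by removing labels from $N$. Indeed, invariants (\ref{dijkstrasteiner:proof:firstinv}), (\ref{dijkstrasteiner:proof:secondinv}) and (\ref{dijkstrasteiner:proof:thirdinv}) only constrain labels in $N \cup P$ or all labels, and deleting a label from $N$ can never falsify a universally quantified upper/lower bound statement about the remaining labels. So the entire burden of the proof is to re-establish that $N$ is never emptied prematurely, i.e.\ that a choosable label survives in $N$ whenever $(\rootterm, \sources) \notin P$.

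First I would recall the core of the original proof: applied to the target label $(\rootterm, \sources)$, the argument producing the triple $(w, I', T')$ satisfying (I)--(IV) always exhibits a label $(w, I') \in N$ with $\smtvertex(w,I') = \smt(\{w\}\cup I')$ lying on an optimum Steiner tree $T$ for $\term$. The crucial strengthening I need is that this witness label $(w, I')$ is \emph{never pruned}. This is exactly where Lemma~\ref{replacelemma} enters. Since $(w, I')$ arises as a subtree configuration of an optimum Steiner tree $T$ for $\term$, the pair $(w, I')$ lies in the set $O$ of Lemma~\ref{replacelemma}, and its label cost equals $\smt(\{w\}\cup I')$. By Lemma~\ref{replacelemma}, there is an optimum Steiner tree for $\term$ containing \emph{the tree computed by} $\texttt{backtrack}(w,I')$ as a $(w,I')$-subtree, because by invariant~(\ref{dijkstrasteiner:proof:correct:a:3}) that tree has cost at most $\smtvertex(w,I') = \smt(\{w\}\cup I')$, hence exactly $\smt(\{w\}\cup I')$. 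Therefore the condition in \texttt{prune}$(w,I')$ can never be proven, so \texttt{prune} leaves $(w,I')$ in $N$.

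The key steps in order are: (1) observe that invariants (\ref{dijkstrasteiner:proof:firstinv}), (\ref{dijkstrasteiner:proof:secondinv}), (\ref{dijkstrasteiner:proof:thirdinv}) are preserved verbatim, since removing labels from $N$ weakens no upper bound and the permanent set $P$ is untouched by \texttt{prune}; (2) re-run the witness construction of Theorem~\ref{dijkstrasteiner:proof:correct} applied to $(\rootterm, \sources)$ to obtain a candidate label $(w,I')\in N$ on an optimum tree with $\smtvertex(w,I') = \smt(\{w\}\cup I')$; (3) invoke Lemma~\ref{replacelemma} to conclude that $(w,I')$ lies in $O$ and that the backtracked tree for $(w,I')$ \emph{is} a $(w,I')$-subtree of some optimum Steiner tree, so the pruning test fails and $(w,I')$ is never removed; hence invariant~(\ref{dijkstrasteiner:proof:fourthinv}) still holds and the algorithm terminates correctly.

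The main obstacle I anticipate is the subtle interplay between the timing of pruning and the permanence of the witness label. The \texttt{prune} calls occur immediately after a label is \emph{inserted or decreased} in $N$ (after lines~\ref{dijkstrasteiner:afterupdate:neighbor} and~\ref{dijkstrasteiner:afterupdate:superset}), not in line~\ref{dijkstrasteiner:firstloop}, so I must argue that once the witness label $(w,I')$ has attained cost $\smt(\{w\}\cup I')$, no subsequent \texttt{prune} call on it can succeed. This requires that the \texttt{prune} test is evaluated against the \emph{current} \texttt{backtrack}$(w,I')$, whose cost equals $\smt(\{w\}\cup I')$ by invariant~(\ref{dijkstrasteiner:proof:correct:a:3}) combined with (II) from the witness construction; Lemma~\ref{replacelemma} then certifies that this minimum-cost tree genuinely extends to an optimum Steiner tree, making the pruning hypothesis unprovable. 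Care is also needed to confirm that pruning a label never wrongly removes $(\rootterm,\sources)$ from reachability, but this follows because the witness argument always regenerates a surviving predecessor label in $N$, preserving a path of progress toward permanently labeling $(\rootterm,\sources)$.
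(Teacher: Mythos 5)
There is a genuine gap in step (1) of your plan. You claim that invariants (a)--(c) are ``preserved verbatim'' because deleting a label from $N$ cannot falsify a bound statement. But invariant (c3) is not a bound statement: it asserts that every non-permanent label $(v,I)$ with $\smtvertex(v,I) = \smt(\{v\}\cup I)$ \emph{belongs to $N$}. Pruning removes labels from $N$ without adding them to $P$, and it can legitimately remove a label whose cost is already optimal for its own subproblem $\{v\}\cup I$ (for instance via Lemma \ref{setboundprunelemma}, when $v$ is a poor attachment point so that no optimum Steiner tree for $\term$ uses that subtree). Hence (c3) fails in its unrestricted form. This is not a cosmetic issue: the preservation of invariant (b) is proved via the witness construction, which needs (c3) to conclude that the witness $(w,I')$ lies in $N$; and your own step (2), the witness construction for $(\rootterm,\sources)$, needs invariant (b) for the permanent labels at the proper vertices it encounters. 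So the inductive scaffolding you rely on is exactly what pruning threatens, and asserting it survives ``verbatim'' begs the question.

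The paper's proof repairs this by restricting invariants (a)--(c) to the set $O$ of pairs $(v,I)$ admitting a $(v,I)$-subtree of some optimum Steiner tree for $\term$, leaving (d) unchanged. Lemma \ref{replacelemma} is then used precisely to show that (c3) restricted to $O$ is preserved: a label in $O$ whose cost equals $\smt(\{v\}\cup I)$ has a backtracked tree that genuinely extends to an optimum Steiner tree, so the pruning test can never succeed on it. The witness construction then goes through because whenever $(v,I)\in O$, the witness $(w,I')$ it produces is again in $O$, so the restricted invariants suffice; and since $(\rootterm,\sources)\in O$, correctness follows. You have correctly identified Lemma \ref{replacelemma} as the key tool and you apply it correctly to the witness of $(\rootterm,\sources)$, but this single application is not enough: the same argument must be maintained inductively for \emph{every} label in $O$ throughout the execution, which forces the restriction of all three invariants to $O$ rather than only patching invariant (d).
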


\begin{proof}
Let $O \subseteq V(G) \times 2^\sources$ be the set of pairs $(v,I)$ with the property that there is an optimum Steiner tree  for $\term $ containing a $(v,I)$-subtree.
It suffices that the modified algorithm is correct on $O$, which we will now prove.\\
To this end, we modify invariants (\ref{dijkstrasteiner:proof:firstinv}) -- (\ref{dijkstrasteiner:proof:lastinv}) as defined in the proof of Theorem
\ref{dijkstrasteiner:proof:correct} by restricting (\ref{dijkstrasteiner:proof:firstinv}) -- (\ref{dijkstrasteiner:proof:thirdinv}) to labels
$(v,I) \in O$ and not changing (\ref{dijkstrasteiner:proof:fourthinv}).
We call these new invariants (\ref{dijkstrasteiner:proof:firstinv}') -- (\ref{dijkstrasteiner:proof:lastinv}'). \\
Since $(\rootterm,\sources) \in O$, the algorithm is correct assuming that these invariants hold. They clearly hold after the initialization.\\
Moreover, (\ref{dijkstrasteiner:proof:firstinv}'), (\ref{dijkstrasteiner:proof:correct:c:1}') and (\ref{dijkstrasteiner:proof:correct:c:2}') are clearly preserved. To see that (\ref{dijkstrasteiner:proof:correct:c:3}') is preserved, recall that prune only removes labels $(v,I)$ from $N$ if there is no
optimum Steiner tree $T$ for $\term$ such that  $\texttt{backtrack}(v,I)$ is the edge set of a $(v,I)$-subtree of $T$.
However, if $(v,I) \in O$ and $\smtvertex(v,I) = \smt(\{v\} \cup I)$, by (\ref{dijkstrasteiner:proof:correct:a:3}') and Lemma \ref{replacelemma} the label $(v,I)$ cannot be pruned, so (\ref{dijkstrasteiner:proof:correct:c:3}') is preserved as well.\\
The argument showing that (\ref{dijkstrasteiner:proof:secondinv}) is preserved remains unchanged: (\ref{dijkstrasteiner:proof:thirdinv}') can be applied to $(w,I')$, because if $(v,I) \in O$, then $(w, I')$ is in $O$ as well. This directly implies that the same argument as in the proof of Theorem \ref{dijkstrasteiner:proof:correct} can be used to prove that (\ref{dijkstrasteiner:proof:fourthinv}') is preserved, since $(\rootterm, \sources) \in O$.
\explicitqed
\end{proof}

Of course, in practice we just avoid the creation of such labels instead of removing them immediately after creation.
Moreover, whenever a label $(v,I)$ is selected in line \ref {dijkstrasteiner:firstloop}, we also try to prune it. This is not redundant, since after $\smtvertex(v,I)$ was updated the last time, bounds used to prune may have improved. Moreover, we could have pruned $(v,I)$ immediately after the last update of $\smtvertex(v,I)$ with an equivalent impact on the execution of the algorithm, so the algorithm still works correctly.

\section{Implementation and Results}
\label{secresults}
We implemented the algorithm using the C++ programming language.
In our implementation, we use a binary heap instead of a Fibonacci heap.
% , since binary heaps usually perform better in practice.
We represent terminal sets by bitsets using the canonical
bijection $2^\term \rightarrow \{0, \ldots, 2^{|\term|}-1\}$. For each vertex $v \in V(G)$, we maintain an array containing the labels $(v,I)$
with $\smtvertex(v,I) < \infty$ and a hash table storing for each label its index in the array, if it exists. This enables us to access labels
very quickly and traverse over the existing labels in linear time, which is important for an efficient implementation of the merge step:

To implement line\nobreakspace \ref {dijkstrasteiner:superset}, we have two options. Either we explicitly enumerate all sets $J \subseteq \sourcesalone \setminus I$ and check whether the label $(v,J)$ exists, or we traverse over all existing labels at $v$
and omit the labels $(v,J)$ with $J \cap I \neq \emptyset$. We always choose the option resulting in fewer sets to be considered.

We implement the pruning rule of Lemma\nobreakspace \ref {boundprunelemma} using a shortest-paths Steiner tree heuristic similar to Prim's algorithm \cite{prim}, maintaining and extending one component at a time. This takes $\bigO(k(n \log n + m))$ time. Then, we use the cost of that Steiner tree as an upper bound and apply Lemma\nobreakspace \ref {boundprunelemma} each time we create a new label.

To implement Lemma\nobreakspace \ref {setboundprunelemma}, we maintain an upper bound $U(I)$ on the cost of labels for each set $I \subseteq \sources$ of terminals, which is initially set to infinity. For each occurring set $I \subseteq \sources$, we compute the distance
$\dist(I, \term \setminus I) = \min_{x \in I, y \in \term \setminus I}\dist(x,y)$. Then, each time we extract a label $(v,I)$ from the heap, we update $U(I)$ by
\begin{align*}
U(I) := \min\left(U(I),\ \smtvertex(v,I) + \min(\dist(I, \term \setminus I), \dist(v, \term \setminus I))\right).
\end{align*}
Also, we keep track of the set $S(I)$ that was used to generate the currently best upper bound for the set $I$. In the routine described above, we always have $|S| = 1$. However, when merging two sets $I_1$ and $I_2$, we can use the sum of their upper bounds as an upper bound for the set $I_1 \cup I_2$ if $S(I_1) \cap I_2 = \emptyset$ or $S(I_2) \cap I_1 = \emptyset$, resulting in $S(I_1 \cup I_2) = (S(I_1) \cup S(I_2)) \setminus (I_1 \cup I_2)$.

Furthermore, we use the 1-tree bound as a lower bound. Of course, we do not compute minimum spanning trees for all subsets of terminals in advance. Instead, each time we consider a set $I$ for the first time, we compute $\mst(\term\setminus I)$.

Lacking a good selection strategy for general instances, we always choose the last terminal of the instance w.r.t. the order in the instance file as root terminal.
Note that our implementation is limited to instances with less than 64 terminals.
%%%%%
%%% new part

We now present computational results of our algorithm on instances of the 11th DIMACS implementation challenge \cite{dimacschallenge}.

\def\prec{2}
\setlength{\tabcolsep}{12pt}

\begin{table}[p]
\begin{tabular}{l r r r r r r}
Instance & $|V|$ & $|E|$ & $|\term|$ & Opt & Time [s] & Time PV [s] \\
\midrule \midrule \phantomsection \label{table::sub::vlsi}\\
\multicolumn{6}{l}{(a) VLSI-derived grid graphs with holes} & \\
diw0779    & 11821 & 22516   & 50    & 4440       & \numb{1.60282}       & \numbP{1.26} \\
diw0819    & 10553 & 20066   & 32    & 3399       & \numb{0.29050}       & \numbP{0.52} \\
diw0820    & 11749 & 22384   & 37    & 4167       & \numb{1.50418}       & \numbP{1.06} \\
lin23      & 3716  & 6750    & 52    & 17560      & \numb{11.07980}      & \numbP{0.54} \\
lin24      & 7998  & 14734   & 16    & 15076      & \numb{0.10387}       & \numbP{1.73} \\
lin30      & 19091 & 35644   & 31    & 27684      & \numb{0.68134}       & \numbP{14.74} \\
lin32      & 19112 & 35665   & 53    & 39832      & \numb{150.20080}     & \numbP{816.51} \\
lin34      & 38282 & 71521   & 34    & 45018      & \numb{12.92992}      & \numbP{1848.24} \\
lin35      & 38294 & 71533   & 45    & 50559      & \numb{26.90421}      & \numbP{1911.09} \\
lin36      & 38307 & 71546   & 58    & 55608      & \numb{47.58989}      &39931.77 \\ \\
\midrule \phantomsection \label{table::sub::obstacle}\\
\multicolumn{7}{l}{(b) Rectilinear obstacle-avoiding instances preprocessed by ObSteiner } \\
ind5       & 114   & 228     & 33    & 1341       & \numb{0.02059}      & \numbP{0.01} \\
rc03       & 109   & 202     & 50    & 54160      & \numb{0.15333}      & \numbP{0.00} \\
rt02       & 788   & 1938    & 50    & 45852      & \numb{0.70289}      & \numbP{1.99} \\
\\ \midrule \phantomsection \label{table::sub::group} \\
\multicolumn{6}{l}{(c) Group Steiner tree instances } & \\
wrp3-14    & 128   & 247     & 14    & 1400250    & \numb{3.77965}       & \numbP{0.01} \\
wrp3-15    & 138   & 257     & 15    & 1500422    & \numb{51.15627}      & \numbP{0.01} \\
wrp3-16    & 204   & 374     & 16    & 1600208    & \numb{11.62201}      & \numbP{0.03} \\
wrp3-17    & 177   & 354     & 17    & 1700442    & \numb{422.07924}     & \numbP{0.02} \\
wrp3-19    & 189   & 353     & 19    & 1900439    & \numb{1765.72262}    & \numbP{0.03} \\
\\ \midrule
\phantomsection \label{table::sub::inc} \\
\multicolumn{6}{l}{(d) Random graphs with so-called incidence costs } & \\
i160-141   & 160   & 2544    & 12    & 2549       & \numb{3.39813}     & \numbP{0.01}   \\
i320-111   & 320   & 1845    & 17    & 4273       & \numb{1706.67394}  & \numbP{0.03}   \\
i640-022   & 640   & 204480  & 9     & 1756       & \numb{4.03080}     & \numbP{0.52}   \\
i640-031   & 640   & 1280    & 9     & 3278       & \numb{0.06546}     & \numbP{0.00}   \\
i640-043   & 640   & 40896   & 9     & 1931       & \numb{0.88873}     & \numbP{0.13}   \\
\\ \midrule
\phantomsection \label{table::sub::puc} \\
\multicolumn{7}{l}{(e) Artificial instances from the hard PUC testset} \\
cc3-4p     & 64    & 288     & 8     & 2338       & \numb{0.00565}     & \numbP{1.99}   \\*
cc3-4u     & 64    & 288     & 8     & 23         & \numb{0.00819}     & \numbP{1.37}   \\*
cc3-5p     & 125   & 750     & 13    & 3661       & \numb{3.45007}     & \numbP{87.98}  \\
cc3-5u     & 125   & 750     & 13    & 36         & \numb{4.63731}     & \numbP{115.83} \\
cc6-2p     & 64    & 192     & 12    & 3271       & \numb{0.16628}     & \numbP{0.40}   \\*
cc6-2u     & 64    & 192     & 12    & 32         & \numb{0.27873}     & \numbP{0.90}   \\*
% \midrule

\end{tabular}
\caption{Results on various instance types.}
\label{sampletable}
\end{table}

Our results were achieved single-threaded on a computer with \SI{3.33}{\giga \hertz} Intel Xeon W5590 CPUs, which produced a score of 391 for the DIMACS benchmark.
We compare our results with those obtained by the state-of-the-art algorithm by Polzin and Vahdati Daneshmand \cite{polzindimacs}, which were obtained using one thread on a computer with a \SI{2.66}{\giga \hertz} Intel i7 920 CPU. This computer produced a score of 307 in the DIMACS benchmark.
The algorithm by Polzin and Vahdati Daneshmand successively performs various optimality-preserving reductions combined with
a branch and bound approach.

In Table\nobreakspace\ref{sampletable}, we show results on multiple instance classes \hyperref[table::sub::vlsi]{(a)} -- \hyperref[table::sub::puc]{(e)}.
For each instance, we give its name, the number of vertices, edges and terminals. Then, we state the cost of an optimum solution as reported by our algorithm and the run time in seconds.
Moreover, for each instance, we give the run time reported by Polzin and Vahdati Daneshmand. For the lin testset, Polzin and Vahdati Daneshmand improved run times by modifying their algorithm to use stronger reductions. With default settings, their algorithm did not solve lin36 within a time limit of 24 hours.

On instances \hyperref[table::sub::vlsi]{(a)} and \hyperref[table::sub::obstacle]{(b)}, both arising from rectilinear VLSI problems,
our algorithm is much faster than the worst-case bound tells.
This is primarily caused by the high impact of our pruning method.
In particular, on instances with large underlying graphs, our algorithm performs very well, beating the reduction-based solver.

In contrast, on group Steiner tree instances \hyperref[table::sub::group]{(c)}, our pruning implementation has no effect. Although these instances are based on VLSI-derived grid graphs with holes as well, they have been modified to
model the groups as terminals: For each group of the group Steiner tree instance, a new terminal is added to the graph and connected to the elements of the group by edges of very high cost.
By choosing the cost of these new edges sufficiently large, one can guarantee that an optimum Steiner tree in the new instance corresponds to an optimum group Steiner tree in the original instance and vice versa, since each terminal will be a leaf of the Steiner tree.
To prune a label $(v,I) \in V(G) \times 2^{\sources}$, our implementation needs to connect the terminal set $I$ to at least one additional terminal $s \in \term \setminus I$ with cost strictly less than the cost of the label $(v,I)$. However, on these instances, connecting to an additional terminal is always much more expensive than any path in the original graph.

On incidence cost instances \hyperref[table::sub::inc]{(d)}, where edges incident to terminals are assigned larger costs, a similar effect can be observed.
On other random graph instances, the 1-tree lower bound is very effective.

Although neither pruning nor future cost estimates do have a noticeable effect on instances from the hard PUC testset \hyperref[table::sub::puc]{(e)}, our algorithm performs very well on instances with few terminals.
This is caused by the strong worst-case run time guarantee, which, albeit exponential in the number of terminals, is quasilinear in the size of the graph.
Detailed further computational results can be found in Appendix \ref{appendix:graph}.

%% rectilinear instances
We also applied our algorithm to the $d$-dimensional rectilinear Steiner tree problem for $d \in \{3, 4, 5\}$. For $d = 2$, instances with thousands of terminals can be solved by the GeoSteiner algorithm \cite{warme2000exact}, which works by generating a candidate set of full Steiner trees for subsets of terminals and then concatenating a subset of these candidates to form an optimum Steiner tree. A full Steiner tree is a Steiner tree where each terminal has degree 1. The concatenation phase works by either solving a Steiner tree problem in graphs or a minimum spanning tree problem in hypergraphs.
Exploiting a result by Hwang \cite{hwang1976steiner}, the GeoSteiner algorithm only has to consider full Steiner trees following a special structure, which allows a significant reduction of the number of generated candidates.
In contrast, for higher dimensions, eliminating possible full Steiner trees is much harder, as Hwang's result does not apply and complicated full Steiner trees have to be considered \cite{mt_wulff}. An implementation of the GeoSteiner algorithm for higher dimension is only able to solve instances with up to around 15 terminals in dimension 3 and up to around 10 terminals in dimensions 4, 5 and 6 \cite{mt_wulff}. The situation is similar in the Euclidean case, where huge instances can be solved for $d=2$ using the GeoSteiner algorithm, but only instances with up to around 17 terminals for $d \geq 3$, using very different algorithms \cite{fonseca2014faster}.

It is well-known that to compute an optimum rectilinear Steiner tree in dimension 2, it suffices to compute an optimum Steiner tree in the so-called Hanan grid \cite{hanan1966steiner}.
The Hanan grid is the graph obtained by drawing axis-parallel lines through each terminal, taking intersections of these lines as vertex set and segments between intersections as edges.
This result was later generalized to arbitrary dimension by Snyder \cite{snyder1992exact}, leading to a grid graph with $\bigO(k^d)$ vertices and
$\bigO(d k^d)$ edges for a $d$-dimensional instance with $k$ terminals.

Using this reduction, we ran our algorithm on instances from the CARIOCA \cite{dimacschallenge} testset, which contains randomly generated instances for $d \in \{3, 4, 5\}$ with between 11 and 20 terminals.
To test our algorithm on larger instances, we generated new random instances (using the prefix ``bonn''), as other available testsets do not contain sufficiently many instances which are neither too small nor too large.
For these new instances, coordinates were chosen uniformly at random from $\{0, 1, \ldots, 999\}$.
In particular for the larger instances, coordinates may appear multiple times, leading to grid graphs with slightly less than $k^d$ vertices.
Coordinates of instances from the CARIOCA testset were scaled by $10^8$ to obtain integral coordinates.
For these instances, we chose a terminal as close as possible to the center of gravity of all terminals as root terminal, improving results compared to a random choice.

\begin{table}[H]
\begin{tabular}{l r r r r r r}
Instance & $d$ & $|V|$ & $|E|$ & $|\term|$ & Opt & Time [s] \\
   \midrule \midrule
carioca\_3\_11\_01 & 3 & 1331  & 3630    & 11    & 311221222  & \numb{0.02015}       \\
carioca\_3\_11\_02 & 3 & 1331  & 3630    & 11    & 466149453  & \numb{0.01888}       \\
% carioca\_3\_11\_03 & 3 & 1331  & 3630    & 11    & 439391117  & \numb{0.05784}       \\
carioca\_3\_20\_01 & 3 & 8000  & 22800   & 20    & 638376617  & \numb{1.61296}       \\
carioca\_3\_20\_02 & 3 & 8000  & 22800   & 20    & 477950448  & \numb{0.14770}       \\
% carioca\_3\_20\_03 & 3 & 8000  & 22800   & 20    & 746979341  & \numb{8.89011}       \\
bonn\_3\_40\_1 & 3 & 64000 & 187200  & 40    & 9024       & \numb{25.43975}      \\
bonn\_3\_40\_2 & 3 & 57798 & 168909  & 40    & 9633       & \numb{710.09162}     \\
bonn\_3\_55\_3 & 3 & 154548 & 455004 & 55    & 12138      & \numb{6201.10443}    \\
% bonn\_3\_40\_3 & 3 & 59280 & 173278  & 40    & 10696      & \numb{1183.49634}    \\
\midrule
carioca\_4\_11\_01 & 4 & 14641 & 53240   & 11    & 627022001  & \numb{0.42525}       \\
carioca\_4\_11\_02 & 4 & 14641 & 53240   & 11    & 636772154  & \numb{0.22314}       \\
% carioca\_4\_11\_03 & 4 & 14641 & 53240   & 11    & 607879790  & \numb{1.77909}       \\
carioca\_4\_20\_01 & 4 & 160000 & 608000  & 20    & 889180827  & \numb{82.72426}      \\
carioca\_4\_20\_02 & 4 & 160000 & 608000  & 20    & 822698792  & \numb{101.10654}      \\
% carioca\_4\_20\_03 & 4 & 160000 & 608000  & 20    & 884633836  & \numb{125.99493}     \\
\midrule
carioca\_5\_11\_01 & 5 & 161051 & 732050  & 11    & 925163690  & \numb{34.54747}      \\
carioca\_5\_11\_02 & 5 & 161051 & 732050  & 11    & 844673618  & \numb{13.01897}      \\
% carioca\_5\_11\_03 & 5 & 161051 & 732050  & 11    & 867510918  & \numb{6.29731}       \\
carioca\_5\_15\_01 & 5 & 759375 & 3543750 & 15    & 1011895745 & \numb{1046.36109}    \\
carioca\_5\_15\_02 & 5 & 759375 & 3543750 & 15    & 1067623193 & \numb{888.80817}    \\
% carioca\_5\_18\_01 & 5 & 1889568 & 8922960 & 18    & \nosol{}   & \nomem{}            \\
% carioca\_5\_18\_02 & 5 & 1889568 & 8922960 & 18    & \nosol{}   & \nomem{}            \\
carioca\_5\_18\_03 & 5 & 1889568 & 8922960 & 18    & 1177091608 & \numb{1081.01490}    \\
\end{tabular}
\caption{Results on $d$-dimensional rectilinear instances.}
\label{l1table}
\end{table}

Excerpts of experimental results are given in Table \ref{l1table}, the full results can be found in Appendix \ref{appendix:rect}.
% The experiments were performed on a machine with \SI{2.90}{\giga \hertz} Intel Xeon E5-4617 CPUs and \SI{512}{\giga \byte} main memory, which produced a score of 416 using the DIMACS benchmark code.

In dimension 3,
we are able to solve all tested instances with up to 34 terminals. Many of the larger instances with up to 40 terminals are solved as well, and additionally one instance with 55 terminals.
In dimension 4, we solve all instances of the CARIOCA testset as well. Experiments with larger instances are not reported here, since we were only able to solve instances with up to 22 terminals. In dimension 5, all instances with up to 15 terminals are solved. The largest solved instance carioca\_5\_18\_03 has 18 terminals and nearly 9 million edges. Solving it required approximately \SI{20}{\giga \byte} of memory.
% Note that for these experiments, the time limit was set to \SI{7200}{\second}.

\section{Discussion}
Note that due to the dynamic programming nature of our algorithm, it can also be used to compute all optimum Steiner trees or even all Steiner trees up to a given cost. If we enumerate all Steiner trees up to a cost of $\text{Opt} + \Delta$, we have to relax
the pruning implementations by $\Delta$ and continue labeling until all labels $(v,I)$ with $\smtvertex(v,I) \leq \text{Opt} + \Delta$ are permanent. Also, we have to save all predecessors instead of only one optimum predecessor. Then, we can recursively combine Steiner trees for subsets of terminals. In practice, the additional effort is linear in the size of the output, allowing the enumeration of millions of near-optimum Steiner trees in seconds. See \cite{mt_silvanus} for details.

% In this paper, we have presented a new algorithm based on dynamic programming to solve the Steiner tree problem in graphs to optimality.
% The algorithm combines a fast theoretical worst-case run time with competitive results on real-world instances.
The dynamic programming idea by Dreyfus and Wagner has been used extensively to obtain Steiner tree algorithms with fast theoretical worst-case behavior. However, in the field of practical solving, it has rather been disregarded prior to this work.
Compared to other exact algorithms, our algorithm depends much less on effective preprocessing and performs well on large graphs. Our approach is very general and not limited to the lower bounds and pruning strategies proposed in this paper.

%\begin{acknowledgements}
%If you'd like to thank anyone, place your comments here
%and remove the percent signs.
%\end{acknowledgements}

% BibTeX users please use one of
%\bibliographystyle{spbasic}      % basic style, author-year citations
\bibliographystyle{spmpsci}      % mathematics and physical sciences
\bibliography{steineralgo}   % name your BibTeX data base

% Non-BibTeX users please use

% \end{document}
% end of file template.tex

% \clearpage

\clearpage

\appendix

% \section{}\label{appendix:graph}
% \section{}\label{appendix:rect}
% (Appendix is provided in a separate file intended for electronic publication only)
% \end{document}

\section{Results on Graphic DIMACS Instances}\label{appendix:graph}

We present detailed computational results on DIMACS testsets. Our implementation is limited to instances with less than 64 terminals, so we exclude instances with more terminals.

The implementation of our algorithm is written in the C++ programming language and compiled using the GCC 4.8.2 compiler.

The experiments were performed single-threaded on a machine with \SI{3.33}{\giga \hertz} Intel Xeon W5590 CPUs and \SI{144}{\giga \byte} main memory which produced a score of 391.372724 using the DIMACS benchmark code.

For these experiments, we limited the memory consumption of the algorithm to \SI{100}{\giga \byte} and
set the time limit to \SI{7200}{\second}.
% On \SI{95}{\percent} of the instances, the algorithm either took less than \SI{180}{\second} to terminate or exceeded the run time limit.
The reported run times do not include the time to read the instance file from disk.

% The instances of the testset \emph{vienna-i-advanced} contain a presolve section. These instances were created by performing reductions on . For these instances, we report the cost of an optimum Steiner tree in the reduced instance plus the fixed cost induced by reductions
% performed before as given in the presolve section.
\label{dimacsdata}
% \vspace{-12pt}
% \vspace{-5pt}
\def\prec{3}
\setlength{\tabcolsep}{12pt}
%% VLSI INSTANCES

\def\extraspace{-8pt}

\testset{ALUE}{
alue2087   & 1244  & 1971    & 34    & 1049       & \numb{1.07535}       \\
alue2105   & 1220  & 1858    & 34    & 1032       & \numb{0.20420}       \\
alue7066   & 6405  & 10454   & 16    & 2256       & \numb{0.06756}       \\
alue7229   & 940   & 1474    & 34    & 824        & \numb{0.03575}       \\
}{\typeVlsi}
% new data

\testset{ALUT}{
alut0787   & 1160  & 2089    & 34    & 982        & \numb{1.76855}       \\
alut0805   & 966   & 1666    & 34    & 958        & \numb{0.25796}       \\
alut2764   & 387   & 626     & 34    & 640        & \numb{0.04196}       \\
}{\typeVlsi}

\testset{DIW}{
diw0234    & 5349  & 10086   & 25    & 1996       & \numb{0.09591}       \\
diw0250    & 353   & 608     & 11    & 350        & \numb{0.00198}       \\
diw0260    & 539   & 985     & 12    & 468        & \numb{0.00282}       \\
diw0313    & 468   & 822     & 14    & 397        & \numb{0.00283}       \\
diw0393    & 212   & 381     & 11    & 302        & \numb{0.00186}       \\
diw0445    & 1804  & 3311    & 33    & 1363       & \numb{0.06092}       \\
diw0459    & 3636  & 6789    & 25    & 1362       & \numb{0.04201}       \\
diw0460    & 339   & 579     & 13    & 345        & \numb{0.00364}       \\
diw0473    & 2213  & 4135    & 25    & 1098       & \numb{0.06537}       \\
diw0487    & 2414  & 4386    & 25    & 1424       & \numb{0.37314}       \\
diw0495    & 938   & 1655    & 10    & 616        & \numb{0.00648}       \\
diw0513    & 918   & 1684    & 10    & 604        & \numb{0.00647}       \\
diw0523    & 1080  & 2015    & 10    & 561        & \numb{0.00498}       \\
diw0540    & 286   & 465     & 10    & 374        & \numb{0.00175}       \\
diw0559    & 3738  & 7013    & 18    & 1570       & \numb{0.08933}       \\
diw0778    & 7231  & 13727   & 24    & 2173       & \numb{0.13894}       \\
diw0779    & 11821 & 22516   & 50    & 4440       & \numb{1.60282}       \\
diw0795    & 3221  & 5938    & 10    & 1550       & \numb{0.02380}       \\
diw0801    & 3023  & 5575    & 10    & 1587       & \numb{0.02423}       \\
diw0819    & 10553 & 20066   & 32    & 3399       & \numb{0.29050}       \\
diw0820    & 11749 & 22384   & 37    & 4167       & \numb{1.50418}       \\
}{\typeVlsi}

\testset{DMXA}{
dmxa0296   & 233   & 386     & 12    & 344        & \numb{0.00182}       \\
dmxa0368   & 2050  & 3676    & 18    & 1017       & \numb{0.01931}       \\
dmxa0454   & 1848  & 3286    & 16    & 914        & \numb{0.01237}       \\
dmxa0628   & 169   & 280     & 10    & 275        & \numb{0.00187}       \\
dmxa0734   & 663   & 1154    & 11    & 506        & \numb{0.00606}       \\
dmxa0848   & 499   & 861     & 16    & 594        & \numb{0.04061}       \\
dmxa0903   & 632   & 1087    & 10    & 580        & \numb{0.00836}       \\
dmxa1010   & 3983  & 7108    & 23    & 1488       & \numb{0.10421}       \\
dmxa1109   & 343   & 559     & 17    & 454        & \numb{0.00721}       \\
dmxa1200   & 770   & 1383    & 21    & 750        & \numb{0.06897}       \\
dmxa1304   & 298   & 503     & 10    & 311        & \numb{0.00206}       \\
dmxa1516   & 720   & 1269    & 11    & 508        & \numb{0.00333}       \\
dmxa1721   & 1005  & 1731    & 18    & 780        & \numb{0.01183}       \\
dmxa1801   & 2333  & 4137    & 17    & 1365       & \numb{0.04938}       \\
}{\typeVlsi}

\testset{GAP}{
gap1307    & 342   & 552     & 17    & 549        & \numb{0.05669}       \\
gap1413    & 541   & 906     & 10    & 457        & \numb{0.00655}       \\
gap1500    & 220   & 374     & 17    & 254        & \numb{0.00276}       \\
gap1810    & 429   & 702     & 17    & 482        & \numb{0.01304}       \\
gap1904    & 735   & 1256    & 21    & 763        & \numb{0.04172}       \\
gap2007    & 2039  & 3548    & 17    & 1104       & \numb{0.05121}       \\
gap2119    & 1724  & 2975    & 29    & 1244       & \numb{0.45977}       \\
gap2740    & 1196  & 2084    & 14    & 745        & \numb{0.01059}       \\
gap2800    & 386   & 653     & 12    & 386        & \numb{0.00262}       \\
gap2975    & 179   & 293     & 10    & 245        & \numb{0.00110}       \\
gap3036    & 346   & 583     & 13    & 457        & \numb{0.01294}       \\
gap3100    & 921   & 1558    & 11    & 640        & \numb{0.00660}       \\
}{\typeVlsi}

\def\extraspace{-5pt}

\testset{LIN}{
lin01      & 53    & 80      & 4     & 503        & \numb{0.00028}       \\
lin02      & 55    & 82      & 6     & 557        & \numb{0.00037}       \\
lin03      & 57    & 84      & 8     & 926        & \numb{0.00055}       \\
lin04      & 157   & 266     & 6     & 1239       & \numb{0.00097}       \\
lin05      & 160   & 269     & 9     & 1703       & \numb{0.00220}       \\
lin06      & 165   & 274     & 14    & 1348       & \numb{0.00382}       \\
lin07      & 307   & 526     & 6     & 1885       & \numb{0.00200}       \\
lin08      & 311   & 530     & 10    & 2248       & \numb{0.00232}       \\
lin09      & 313   & 532     & 12    & 2752       & \numb{0.00406}       \\
lin10      & 321   & 540     & 20    & 4132       & \numb{0.01713}       \\
lin11      & 816   & 1460    & 10    & 4280       & \numb{0.01250}       \\
lin12      & 818   & 1462    & 12    & 5250       & \numb{0.01713}       \\
lin13      & 822   & 1466    & 16    & 4609       & \numb{0.01719}       \\
lin14      & 828   & 1472    & 22    & 5824       & \numb{0.02261}       \\
lin15      & 840   & 1484    & 34    & 7145       & \numb{0.08825}       \\
lin16      & 1981  & 3633    & 12    & 6618       & \numb{0.03198}       \\
lin17      & 1989  & 3641    & 20    & 8405       & \numb{0.05116}       \\
lin18      & 1994  & 3646    & 25    & 9714       & \numb{0.47534}       \\
lin19      & 2010  & 3662    & 41    & 13268      & \numb{10.96578}      \\
lin20      & 3675  & 6709    & 11    & 6673       & \numb{0.02922}       \\
lin21      & 3683  & 6717    & 20    & 9143       & \numb{0.06244}       \\
lin22      & 3692  & 6726    & 28    & 10519      & \numb{0.12900}       \\
lin23      & 3716  & 6750    & 52    & 17560      & \numb{11.07980}      \\
lin24      & 7998  & 14734   & 16    & 15076      & \numb{0.10387}       \\
lin25      & 8007  & 14743   & 24    & 17803      & \numb{0.29050}       \\
lin26      & 8013  & 14749   & 30    & 21757      & \numb{0.31898}       \\
lin27      & 8017  & 14753   & 36    & 20678      & \numb{1.57743}       \\
lin29      & 19083 & 35636   & 24    & 23765      & \numb{1.74601}       \\
lin30      & 19091 & 35644   & 31    & 27684      & \numb{0.68134}       \\
lin31      & 19100 & 35653   & 40    & 31696      & \numb{34.06105}      \\
lin32      & 19112 & 35665   & 53    & 39832      & \numb{150.20080}     \\
lin34      & 38282 & 71521   & 34    & 45018      & \numb{12.92992}      \\
lin35      & 38294 & 71533   & 45    & 50559      & \numb{26.90421}      \\
lin36      & 38307 & 71546   & 58    & 55608      & \numb{47.58989}      \\
}{\typeVlsi}

% \def\extraspace{10pt}
% \vspace{\extraspace}

\testset{MSM}{
msm0580    & 338   & 541     & 11    & 467        & \numb{0.00555}       \\
msm0654    & 1290  & 2270    & 10    & 823        & \numb{0.00693}       \\
msm0709    & 1442  & 2403    & 16    & 884        & \numb{0.00968}       \\
msm0920    & 752   & 1264    & 26    & 806        & \numb{0.04091}       \\
msm1008    & 402   & 695     & 11    & 494        & \numb{0.00790}       \\
msm1234    & 933   & 1632    & 13    & 550        & \numb{0.00664}       \\
msm1477    & 1199  & 2078    & 31    & 1068       & \numb{0.25943}       \\
msm1707    & 278   & 478     & 11    & 564        & \numb{0.00121}       \\
msm1844    & 90    & 135     & 10    & 188        & \numb{0.00082}       \\
msm1931    & 875   & 1522    & 10    & 604        & \numb{0.00341}       \\
msm2000    & 898   & 1562    & 10    & 594        & \numb{0.00357}       \\
msm2152    & 2132  & 3702    & 37    & 1590       & \numb{0.26882}       \\
msm2326    & 418   & 723     & 14    & 399        & \numb{0.00299}       \\
msm2492    & 4045  & 7094    & 12    & 1459       & \numb{0.03485}       \\
msm2525    & 3031  & 5239    & 12    & 1290       & \numb{0.01567}       \\
msm2601    & 2961  & 5100    & 16    & 1440       & \numb{0.04217}       \\
msm2705    & 1359  & 2458    & 13    & 714        & \numb{0.01737}       \\
msm2802    & 1709  & 2963    & 18    & 926        & \numb{0.02756}       \\
msm3277    & 1704  & 2991    & 12    & 869        & \numb{0.02142}       \\
msm3676    & 957   & 1554    & 10    & 607        & \numb{0.00556}       \\
msm3727    & 4640  & 8255    & 21    & 1376       & \numb{0.07875}       \\
msm3829    & 4221  & 7255    & 12    & 1571       & \numb{0.04156}       \\
msm4038    & 237   & 390     & 11    & 353        & \numb{0.00253}       \\
msm4114    & 402   & 690     & 16    & 393        & \numb{0.00363}       \\
msm4190    & 391   & 666     & 16    & 381        & \numb{0.00487}       \\
msm4224    & 191   & 302     & 11    & 311        & \numb{0.00232}       \\
msm4312    & 5181  & 8893    & 10    & 2016       & \numb{0.04230}       \\
msm4414    & 317   & 476     & 11    & 408        & \numb{0.00263}       \\
msm4515    & 777   & 1358    & 13    & 630        & \numb{0.01328}       \\
}{\typeVlsi}

% \vspace{\extraspace}

\testset{TAQ}{
taq0023    & 572   & 963     & 11    & 621        & \numb{0.00676}       \\
taq0365    & 4186  & 7074    & 22    & 1914       & \numb{0.07876}       \\
taq0431    & 1128  & 1905    & 13    & 897        & \numb{0.01661}       \\
taq0631    & 609   & 932     & 10    & 581        & \numb{0.00823}       \\
taq0739    & 837   & 1438    & 16    & 848        & \numb{0.01852}       \\
taq0741    & 712   & 1217    & 16    & 847        & \numb{0.02359}       \\
taq0751    & 1051  & 1791    & 16    & 939        & \numb{0.03151}       \\
taq0891    & 331   & 560     & 10    & 319        & \numb{0.00314}       \\
taq0910    & 310   & 514     & 17    & 370        & \numb{0.00968}       \\
taq0920    & 122   & 194     & 17    & 210        & \numb{0.00261}       \\
taq0978    & 777   & 1239    & 10    & 566        & \numb{0.00471}       \\
}{\typeVlsi}

% \vspace{\extraspace}

%% GRID INSTANCES

\testset{1R}{
1r111      & 1250  & 4704    & 6     & 28000      & \numb{0.00593}       \\*
1r112      & 1250  & 4704    & 6     & 28000      & \numb{0.00481}       \\*
1r113      & 1250  & 4704    & 6     & 26000      & \numb{0.00475}       \\*
1r121      & 1250  & 4704    & 6     & 36000      & \numb{0.00422}       \\
1r122      & 1250  & 4704    & 6     & 45000      & \numb{0.00566}       \\
1r123      & 1250  & 4704    & 6     & 40000      & \numb{0.00394}       \\
1r131      & 1250  & 4704    & 6     & 43000      & \numb{0.00509}       \\
1r132      & 1250  & 4704    & 6     & 37000      & \numb{0.00491}       \\
1r133      & 1250  & 4704    & 6     & 36000      & \numb{0.00436}       \\
1r211      & 1250  & 4704    & 31    & 77000      & \numb{0.34379}       \\
1r212      & 1250  & 4704    & 30    & 81000      & \numb{0.06563}       \\
1r213      & 1250  & 4704    & 29    & 70000      & \numb{0.72025}       \\
1r221      & 1250  & 4704    & 31    & 79000      & \numb{0.14724}       \\
1r222      & 1250  & 4704    & 31    & 68000      & \numb{0.05912}       \\
1r223      & 1250  & 4704    & 31    & 77000      & \numb{0.09650}       \\
1r231      & 1250  & 4704    & 30    & 80000      & \numb{0.14585}       \\
1r232      & 1250  & 4704    & 29    & 86000      & \numb{0.30656}       \\
1r233      & 1250  & 4704    & 31    & 71000      & \numb{1.55050}       \\
1r311      & 1250  & 4704    & 56    & \nosol{}   & \notime{}            \\
1r312      & 1250  & 4704    & 60    & 113000     & \numb{1276.76979}    \\
1r313      & 1250  & 4704    & 58    & 106000     & \numb{496.54942}     \\
1r321      & 1250  & 4704    & 59    & \nosol{}   & \notime{}            \\
1r322      & 1250  & 4704    & 60    & 113000     & \numb{1612.44318}    \\
1r323      & 1250  & 4704    & 60    & \nosol{}   & \notime{}            \\
1r331      & 1250  & 4704    & 58    & 103000     & \numb{1.10688}       \\
1r332      & 1250  & 4704    & 58    & 109000     & \numb{50.35087}      \\
1r333      & 1250  & 4704    & 58    & 113000     & \numb{1708.42476}    \\
}{\typeGridtwoD}

\def\extraspace{-5pt}

\testset{2R}{
2r111      & 2000  & 11600   & 9     & 28000      & \numb{0.01468}       \\
2r112      & 2000  & 11600   & 9     & 32000      & \numb{0.01335}       \\
2r113      & 2000  & 11600   & 9     & 28000      & \numb{0.01072}       \\
2r121      & 2000  & 11600   & 9     & 28000      & \numb{0.01131}       \\
2r122      & 2000  & 11600   & 9     & 29000      & \numb{0.01060}       \\
2r123      & 2000  & 11600   & 9     & 25000      & \numb{0.00945}       \\
2r131      & 2000  & 11600   & 9     & 27000      & \numb{0.01242}       \\
2r132      & 2000  & 11600   & 9     & 33000      & \numb{0.01588}       \\
2r133      & 2000  & 11600   & 9     & 29000      & \numb{0.01078}       \\
2r211      & 2000  & 11600   & 50    & \nosol{}   & \nomem{}             \\
2r212      & 2000  & 11600   & 49    & 80000      & \numb{2819.25943}    \\
2r213      & 2000  & 11600   & 48    & 76000      & \numb{3152.19345}    \\
2r221      & 2000  & 11600   & 50    & \nosol{}   & \notime{}            \\
2r222      & 2000  & 11600   & 50    & \nosol{}   & \notime{}            \\
2r223      & 2000  & 11600   & 49    & \nosol{}   & \notime{}            \\
2r231      & 2000  & 11600   & 50    & \nosol{}   & \notime{}            \\
2r232      & 2000  & 11600   & 49    & \nosol{}   & \notime{}            \\
2r233      & 2000  & 11600   & 47    & \nosol{}   & \notime{}            \\
}{\typeGridthreeD}

%%%% FST INSTANCES

\testset{ES10FST}{
es10fst01  & 18    & 20      & 10    & 22920745   & \numb{0.00042}       \\
es10fst02  & 14    & 13      & 10    & 19134104   & \numb{0.00024}       \\
es10fst03  & 17    & 20      & 10    & 26003678   & \numb{0.00046}       \\
es10fst04  & 18    & 20      & 10    & 20461116   & \numb{0.00015}       \\
es10fst05  & 12    & 11      & 10    & 18818916   & \numb{0.00029}       \\
es10fst06  & 17    & 20      & 10    & 26540768   & \numb{0.00045}       \\
es10fst07  & 14    & 13      & 10    & 26025072   & \numb{0.00024}       \\
es10fst08  & 21    & 28      & 10    & 25056214   & \numb{0.00046}       \\
es10fst09  & 21    & 29      & 10    & 22062355   & \numb{0.00041}       \\
es10fst10  & 18    & 21      & 10    & 23936095   & \numb{0.00030}       \\
es10fst11  & 14    & 13      & 10    & 22239535   & \numb{0.00026}       \\
es10fst12  & 13    & 12      & 10    & 19626318   & \numb{0.00031}       \\
es10fst13  & 18    & 21      & 10    & 19483914   & \numb{0.00029}       \\
es10fst14  & 24    & 32      & 10    & 21856128   & \numb{0.00042}       \\
es10fst15  & 16    & 18      & 10    & 18641924   & \numb{0.00021}       \\
}{\typeFST}

\testset{ES20FST}{
es20fst01  & 29    & 28      & 20    & 33703886   & \numb{0.00414}       \\
es20fst02  & 29    & 28      & 20    & 32639486   & \numb{0.00138}       \\
es20fst03  & 27    & 26      & 20    & 27847417   & \numb{0.00143}       \\
es20fst04  & 57    & 83      & 20    & 27624394   & \numb{0.00316}       \\
es20fst05  & 54    & 77      & 20    & 34033163   & \numb{0.00165}       \\
es20fst06  & 29    & 28      & 20    & 36014241   & \numb{0.00119}       \\
es20fst07  & 45    & 59      & 20    & 34934874   & \numb{0.00179}       \\
es20fst08  & 52    & 74      & 20    & 38016346   & \numb{0.01355}       \\
es20fst09  & 36    & 42      & 20    & 36739939   & \numb{0.00252}       \\
es20fst10  & 49    & 67      & 20    & 34024740   & \numb{0.00246}       \\
es20fst11  & 33    & 36      & 20    & 27123908   & \numb{0.00117}       \\
es20fst12  & 33    & 36      & 20    & 30451397   & \numb{0.00466}       \\
es20fst13  & 35    & 40      & 20    & 34438673   & \numb{0.00309}       \\*
es20fst14  & 36    & 44      & 20    & 34062374   & \numb{0.00914}       \\*
es20fst15  & 37    & 43      & 20    & 32303746   & \numb{0.00275}       \\*
}{\typeFST}

\def\extraspace{0pt}

\testset{ES30FST}{
es30fst01  & 79    & 115     & 30    & 40692993   & \numb{0.03714}       \\
es30fst02  & 71    & 97      & 30    & 40900061   & \numb{0.02883}       \\
es30fst03  & 83    & 120     & 30    & 43120444   & \numb{0.01903}       \\
es30fst04  & 80    & 115     & 30    & 42150958   & \numb{0.00983}       \\
es30fst05  & 58    & 71      & 30    & 41739748   & \numb{0.00713}       \\
es30fst06  & 83    & 119     & 30    & 39955139   & \numb{0.05290}       \\
es30fst07  & 53    & 64      & 30    & 43761391   & \numb{0.00734}       \\
es30fst08  & 69    & 93      & 30    & 41691217   & \numb{0.00788}       \\
es30fst09  & 43    & 44      & 30    & 37133658   & \numb{0.00967}       \\
es30fst10  & 48    & 52      & 30    & 42686610   & \numb{0.00854}       \\
es30fst11  & 79    & 112     & 30    & 41647993   & \numb{0.00706}       \\
es30fst12  & 46    & 48      & 30    & 38416720   & \numb{0.01326}       \\
es30fst13  & 65    & 84      & 30    & 37406646   & \numb{0.00512}       \\
es30fst14  & 53    & 58      & 30    & 42897025   & \numb{0.02083}       \\
es30fst15  & 118   & 188     & 30    & 43035576   & \numb{0.06959}       \\
}{\typeFST}

% \def\extraspace{6pt}
% \vspace{\extraspace}

\testset{ES40FST}{
es40fst01  & 93    & 127     & 40    & 44841522   & \numb{0.03561}       \\
es40fst02  & 82    & 105     & 40    & 46811310   & \numb{0.01641}       \\
es40fst03  & 87    & 116     & 40    & 49974157   & \numb{0.09310}       \\
es40fst04  & 55    & 55      & 40    & 45289864   & \numb{0.01348}       \\
es40fst05  & 121   & 180     & 40    & 51940413   & \numb{0.13277}       \\
es40fst06  & 92    & 123     & 40    & 49753385   & \numb{0.04592}       \\
es40fst07  & 77    & 95      & 40    & 45639009   & \numb{0.10935}       \\
es40fst08  & 98    & 137     & 40    & 48745996   & \numb{0.01618}       \\
es40fst09  & 107   & 153     & 40    & 51761789   & \numb{0.03959}       \\
es40fst10  & 107   & 152     & 40    & 57136852   & \numb{0.11675}       \\
es40fst11  & 97    & 135     & 40    & 46734214   & \numb{0.02724}       \\
es40fst12  & 67    & 75      & 40    & 43843378   & \numb{0.02872}       \\
es40fst13  & 78    & 95      & 40    & 51884545   & \numb{0.02955}       \\
es40fst14  & 98    & 134     & 40    & 49166952   & \numb{0.03014}       \\
es40fst15  & 93    & 129     & 40    & 50828067   & \numb{0.05000}       \\
}{\typeFST}

\testset{ES50FST}{
es50fst01  & 118   & 160     & 50    & 54948660   & \numb{0.05299}       \\
es50fst02  & 125   & 177     & 50    & 55484245   & \numb{0.47418}       \\
es50fst03  & 128   & 182     & 50    & 54691035   & \numb{0.07526}       \\
es50fst04  & 106   & 138     & 50    & 51535766   & \numb{0.34787}       \\
es50fst05  & 104   & 135     & 50    & 55186015   & \numb{0.28013}       \\
es50fst06  & 126   & 182     & 50    & 55804287   & \numb{0.53858}       \\
es50fst07  & 143   & 211     & 50    & 49961178   & \numb{0.04894}       \\
es50fst08  & 83    & 96      & 50    & 53754708   & \numb{0.08934}       \\
es50fst09  & 139   & 202     & 50    & 53456773   & \numb{0.53768}       \\
es50fst10  & 139   & 207     & 50    & 54037963   & \numb{3.00560}       \\
es50fst11  & 100   & 131     & 50    & 52532923   & \numb{0.01905}       \\
es50fst12  & 110   & 149     & 50    & 53409291   & \numb{0.17231}       \\
es50fst13  & 92    & 116     & 50    & 53891019   & \numb{0.03237}       \\
es50fst14  & 120   & 167     & 50    & 53551419   & \numb{0.08207}       \\
es50fst15  & 112   & 147     & 50    & 52180862   & \numb{0.10462}       \\
}{\typeFST}

\testset{ES60FST}{
es60fst01  & 123   & 159     & 60    & 53761423   & \numb{0.43599}       \\
es60fst02  & 186   & 280     & 60    & 55367804   & \numb{3.03250}       \\
es60fst03  & 113   & 142     & 60    & 56566797   & \numb{0.06604}       \\
es60fst04  & 162   & 238     & 60    & 55371042   & \numb{0.11766}       \\
es60fst05  & 119   & 148     & 60    & 54704991   & \numb{0.05909}       \\
es60fst06  & 130   & 174     & 60    & 60421961   & \numb{1.06408}       \\
es60fst07  & 188   & 280     & 60    & 58978041   & \numb{0.07281}       \\
es60fst08  & 109   & 133     & 60    & 58138178   & \numb{0.36580}       \\
es60fst09  & 151   & 216     & 60    & 55877112   & \numb{0.49299}       \\
es60fst10  & 133   & 177     & 60    & 57624488   & \numb{0.03867}       \\
es60fst11  & 121   & 154     & 60    & 56141666   & \numb{0.52339}       \\
es60fst12  & 176   & 257     & 60    & 59791362   & \numb{3.28890}       \\
es60fst13  & 157   & 226     & 60    & 61213533   & \numb{0.09080}       \\
es60fst14  & 118   & 149     & 60    & 56035528   & \numb{0.05842}       \\
es60fst15  & 117   & 151     & 60    & 56622581   & \numb{0.10420}       \\
}{\typeFST}

% \vspace{\extraspace}

\testset{TSPFST}{
att48fst   & 139   & 202     & 48    & 30236      & \numb{0.45271}       \\
berlin52fst & 89    & 104    & 52    & 6760       & \numb{33.10841}      \\
eil51fst   & 181   & 289     & 51    & 409        & \numb{513.68985}     \\
}{\typeFST}

% \vspace{\extraspace}
\def\extraspace{10pt}
%%% OBSTACLE AVOIDING INSTANCES

\testset{Copenhagen14}{
ind1       & 18    & 31      & 10    & 604        & \numb{0.00060}       \\
ind2       & 31    & 57      & 10    & 9500       & \numb{0.00092}       \\
ind3       & 16    & 23      & 10    & 600        & \numb{0.00020}       \\
ind4       & 74    & 146     & 25    & 1086       & \numb{0.02335}       \\
ind5       & 114   & 228     & 33    & 1341       & \numb{0.02059}       \\
rc01       & 21    & 35      & 10    & 25980      & \numb{0.00028}       \\
rc02       & 87    & 176     & 30    & 41350      & \numb{0.02867}       \\
rc03       & 109   & 202     & 50    & 54160      & \numb{0.15333}       \\
rt01       & 262   & 740     & 10    & 2146       & \numb{0.00247}       \\
rt02       & 788   & 1938    & 50    & 45852      & \numb{0.70289}       \\
}{\typeObstacle}
% NEW DATA ABOVE HERE

%% GROUP STEINER TREE INSTANCES

\testset{WRP3}{
wrp3-11    & 128   & 227     & 11    & 1100361    & \numb{0.09948}       \\
wrp3-12    & 84    & 149     & 12    & 1200237    & \numb{0.12333}       \\
wrp3-13    & 311   & 613     & 13    & 1300497    & \numb{20.01383}      \\
wrp3-14    & 128   & 247     & 14    & 1400250    & \numb{3.77965}       \\
wrp3-15    & 138   & 257     & 15    & 1500422    & \numb{51.15627}      \\
wrp3-16    & 204   & 374     & 16    & 1600208    & \numb{11.62201}      \\
wrp3-17    & 177   & 354     & 17    & 1700442    & \numb{422.07924}     \\
wrp3-19    & 189   & 353     & 19    & 1900439    & \numb{1765.72262}    \\
wrp3-20    & 245   & 454     & 20    & \nosol{}   & \notime{}            \\
wrp3-21    & 237   & 444     & 21    & \nosol{}   & \notime{}            \\
wrp3-22    & 233   & 431     & 22    & \nosol{}   & \notime{}            \\
wrp3-23    & 132   & 230     & 23    & \nosol{}   & \notime{}            \\
wrp3-24    & 262   & 487     & 24    & \nosol{}   & \notime{}            \\
wrp3-25    & 246   & 468     & 25    & \nosol{}   & \nomem{}             \\
wrp3-26    & 402   & 780     & 26    & \nosol{}   & \nomem{}             \\
wrp3-27    & 370   & 721     & 27    & \nosol{}   & \notime{}            \\
wrp3-28    & 307   & 559     & 28    & \nosol{}   & \nomem{}             \\
wrp3-29    & 245   & 436     & 29    & \nosol{}   & \nomem{}             \\
wrp3-30    & 467   & 896     & 30    & \nosol{}   & \nomem{}             \\
wrp3-31    & 323   & 592     & 31    & \nosol{}   & \notime{}            \\
wrp3-33    & 437   & 838     & 33    & \nosol{}   & \nomem{}             \\
wrp3-34    & 1244  & 2474    & 34    & \nosol{}   & \nomem{}             \\
wrp3-36    & 435   & 818     & 36    & \nosol{}   & \nomem{}             \\
wrp3-37    & 1011  & 2010    & 37    & \nosol{}   & \nomem{}             \\
wrp3-38    & 603   & 1207    & 38    & \nosol{}   & \notime{}            \\
wrp3-39    & 703   & 1616    & 39    & \nosol{}   & \nomem{}             \\
wrp3-41    & 178   & 307     & 41    & \nosol{}   & \nomem{}             \\
wrp3-42    & 705   & 1373    & 42    & \nosol{}   & \nomem{}             \\
wrp3-43    & 173   & 298     & 43    & \nosol{}   & \nomem{}             \\
wrp3-45    & 1414  & 2813    & 45    & \nosol{}   & \nomem{}             \\
wrp3-48    & 925   & 1738    & 48    & \nosol{}   & \nomem{}             \\
wrp3-49    & 886   & 1800    & 49    & \nosol{}   & \nomem{}             \\
wrp3-50    & 1119  & 2251    & 50    & \nosol{}   & \nomem{}             \\
wrp3-52    & 701   & 1352    & 52    & \nosol{}   & \nomem{}             \\
wrp3-53    & 775   & 1471    & 53    & \nosol{}   & \nomem{}             \\
wrp3-55    & 1645  & 3186    & 55    & \nosol{}   & \nomem{}             \\
wrp3-56    & 853   & 1590    & 56    & \nosol{}   & \nomem{}             \\
wrp3-60    & 838   & 1763    & 60    & \nosol{}   & \nomem{}             \\
wrp3-62    & 670   & 1316    & 62    & \nosol{}   & \nomem{}             \\
}{\typeGroup}

\testset{WRP4}{
wrp4-11    & 123   & 233     & 11    & 1100179    & \numb{0.24070}       \\
wrp4-13    & 110   & 188     & 13    & 1300798    & \numb{0.01944}       \\
wrp4-14    & 145   & 283     & 14    & 1400290    & \numb{2.79831}       \\
wrp4-15    & 193   & 369     & 15    & 1500405    & \numb{8.19263}       \\
wrp4-16    & 311   & 579     & 16    & 1601190    & \numb{149.54248}     \\
wrp4-17    & 223   & 404     & 17    & 1700525    & \numb{69.52406}      \\
wrp4-18    & 211   & 380     & 18    & 1801464    & \numb{1212.77670}    \\
wrp4-19    & 119   & 206     & 19    & 1901446    & \numb{0.92547}       \\
wrp4-21    & 529   & 1032    & 21    & \nosol{}   & \notime{}            \\
wrp4-22    & 294   & 568     & 22    & \nosol{}   & \notime{}            \\
wrp4-23    & 257   & 515     & 23    & \nosol{}   & \nomem{}             \\
wrp4-24    & 493   & 963     & 24    & \nosol{}   & \nomem{}             \\
wrp4-25    & 422   & 808     & 25    & \nosol{}   & \nomem{}             \\
wrp4-26    & 396   & 781     & 26    & \nosol{}   & \nomem{}             \\
wrp4-27    & 243   & 497     & 27    & \nosol{}   & \nomem{}             \\
wrp4-28    & 272   & 545     & 28    & \nosol{}   & \nomem{}             \\
wrp4-29    & 247   & 505     & 29    & \nosol{}   & \nomem{}             \\
wrp4-30    & 361   & 724     & 30    & \nosol{}   & \nomem{}             \\
wrp4-31    & 390   & 786     & 31    & \nosol{}   & \nomem{}             \\
wrp4-32    & 311   & 632     & 32    & \nosol{}   & \nomem{}             \\
wrp4-33    & 304   & 571     & 33    & \nosol{}   & \notime{}            \\
wrp4-34    & 314   & 650     & 34    & \nosol{}   & \nomem{}             \\
wrp4-35    & 471   & 954     & 35    & \nosol{}   & \nomem{}             \\
wrp4-36    & 363   & 750     & 36    & \nosol{}   & \nomem{}             \\
wrp4-37    & 522   & 1054    & 37    & \nosol{}   & \nomem{}             \\
wrp4-38    & 294   & 618     & 38    & \nosol{}   & \nomem{}             \\
wrp4-39    & 802   & 1553    & 39    & \nosol{}   & \nomem{}             \\
wrp4-40    & 538   & 1088    & 40    & \nosol{}   & \nomem{}             \\
wrp4-41    & 465   & 955     & 41    & \nosol{}   & \nomem{}             \\
wrp4-42    & 552   & 1131    & 42    & \nosol{}   & \nomem{}             \\
wrp4-43    & 596   & 1148    & 43    & \nosol{}   & \notime{}            \\
wrp4-44    & 398   & 788     & 44    & \nosol{}   & \nomem{}             \\
wrp4-45    & 388   & 815     & 45    & \nosol{}   & \nomem{}             \\
wrp4-46    & 632   & 1287    & 46    & \nosol{}   & \nomem{}             \\
wrp4-47    & 555   & 1098    & 47    & \nosol{}   & \nomem{}             \\
wrp4-48    & 451   & 825     & 48    & \nosol{}   & \notime{}            \\
wrp4-49    & 557   & 1080    & 49    & \nosol{}   & \nomem{}             \\
wrp4-50    & 564   & 1112    & 50    & \nosol{}   & \nomem{}             \\
wrp4-51    & 668   & 1306    & 51    & \nosol{}   & \nomem{}             \\
wrp4-52    & 547   & 1115    & 52    & \nosol{}   & \nomem{}             \\
wrp4-53    & 615   & 1232    & 53    & \nosol{}   & \nomem{}             \\
wrp4-54    & 688   & 1388    & 54    & \nosol{}   & \nomem{}             \\
wrp4-55    & 610   & 1201    & 55    & \nosol{}   & \nomem{}             \\
wrp4-56    & 839   & 1617    & 56    & \nosol{}   & \nomem{}             \\
wrp4-58    & 757   & 1493    & 58    & \nosol{}   & \nomem{}             \\
wrp4-59    & 904   & 1806    & 59    & \nosol{}   & \nomem{}             \\
wrp4-60    & 693   & 1370    & 60    & \nosol{}   & \nomem{}             \\
wrp4-61    & 775   & 1538    & 61    & \nosol{}   & \notime{}            \\
wrp4-62    & 1283  & 2493    & 62    & \nosol{}   & \nomem{}             \\
wrp4-63    & 1121  & 2227    & 63    & \nosol{}   & \nomem{}             \\
}
{\typeGroup}

%% VIENNA telecommunication instances

\testset{vienna-i-advanced}{
I052a      & 160   & 237     & 23    & 13309487   & \numb{0.01293}      \\
I054a      & 540   & 817     & 25    & 15841596   & \numb{0.02124}      \\
I056a      & 290   & 439     & 34    & 14171206   & \numb{0.08587}      \\
}
{\typeViennaAdvance}

\def\extraspace{-5pt}

\testset{vienna-i-simple}{
I052       & 2363  & 3761    & 40    & \nosol{}   & \notime{}           \\
I054       & 3803  & 6213    & 38    & \nosol{}   & \notime{}           \\
I056       & 1991  & 3176    & 51    & \nosol{}   & \notime{}           \\
}
{\typeViennaSimple}

% \def\extraspace{4pt}

%% L2 / RANDOM INSTANCES

\testset{X}{
berlin52   & 52    & 1326    & 16    & 1044       & \numb{0.01292}       \\
brasil58   & 58    & 1653    & 25    & 13655      & \numb{0.00515}       \\
}
{\typePfourE}

% \def\extraspace{7pt}
% \vspace{\extraspace}

\testset{P4E}{
p455       & 100   & 4950    & 5     & 1138       & \numb{0.00118}       \\
p456       & 100   & 4950    & 5     & 1228       & \numb{0.00110}       \\
p457       & 100   & 4950    & 10    & 1609       & \numb{0.00132}       \\
p458       & 100   & 4950    & 10    & 1868       & \numb{0.00169}       \\
p459       & 100   & 4950    & 20    & 2345       & \numb{0.00308}       \\
p460       & 100   & 4950    & 20    & 2959       & \numb{0.00492}       \\
p461       & 100   & 4950    & 50    & 4474       & \numb{0.03041}       \\
p463       & 200   & 19900   & 10    & 1510       & \numb{0.00471}       \\
p464       & 200   & 19900   & 20    & 2545       & \numb{0.01233}       \\
p465       & 200   & 19900   & 40    & 3853       & \numb{0.06634}       \\
}{\typePfourE \label{p4elabel}}

% \vspace{\extraspace}

\testset{P4Z}{
p401       & 100   & 4950    & 5     & 155        & \numb{0.00087}       \\
p402       & 100   & 4950    & 5     & 116        & \numb{0.00080}       \\
p403       & 100   & 4950    & 5     & 179        & \numb{0.00145}       \\
p404       & 100   & 4950    & 10    & 270        & \numb{0.00136}       \\
p405       & 100   & 4950    & 10    & 270        & \numb{0.00225}       \\
p406       & 100   & 4950    & 10    & 290        & \numb{0.00298}       \\
p407       & 100   & 4950    & 20    & 590        & \numb{0.25733}       \\
p408       & 100   & 4950    & 20    & 542        & \numb{0.15382}       \\
p409       & 100   & 4950    & 50    & 963        & \numb{4350.84904}    \\
p410       & 100   & 4950    & 50    & 1010       & \numb{245.30424}     \\
}
{\typePfourZ}

% \vspace{\extraspace}

\testset{P6E}{
p619       & 100   & 180     & 5     & 7485       & \numb{0.00055}      \\
p620       & 100   & 180     & 5     & 8746       & \numb{0.00056}      \\
p621       & 100   & 180     & 5     & 8688       & \numb{0.00055}      \\
p622       & 100   & 180     & 10    & 15972      & \numb{0.00103}      \\
p623       & 100   & 180     & 10    & 19496      & \numb{0.00156}      \\
p624       & 100   & 180     & 20    & 20246      & \numb{0.00327}      \\
p625       & 100   & 180     & 20    & 23078      & \numb{0.00301}      \\
p626       & 100   & 180     & 20    & 22346      & \numb{0.01082}      \\
p627       & 100   & 180     & 50    & 40647      & \numb{0.02342}      \\
p628       & 100   & 180     & 50    & 40008      & \numb{0.05235}      \\
p629       & 100   & 180     & 50    & 43287      & \numb{0.04251}      \\
p630       & 200   & 370     & 10    & 26125      & \numb{0.00115}      \\
p631       & 200   & 370     & 20    & 39067      & \numb{0.00968}      \\
p632       & 200   & 370     & 40    & 56217      & \numb{0.03812}      \\
}
{\typePsixE}

% \vspace{\extraspace}
% NEW DATA ABOVE HERE
\testset{P6Z}{
p602       & 100   & 180     & 5     & 8083       & \numb{0.00064}       \\
p603       & 100   & 180     & 5     & 5022       & \numb{0.00055}       \\
p604       & 100   & 180     & 10    & 11397      & \numb{0.00103}       \\
p605       & 100   & 180     & 10    & 10355      & \numb{0.00105}       \\
p606       & 100   & 180     & 11    & 13048      & \numb{0.00110}       \\
p607       & 100   & 180     & 21    & 15358      & \numb{0.00265}       \\
p608       & 100   & 180     & 21    & 14439      & \numb{0.00232}       \\
p609       & 100   & 180     & 20    & 18263      & \numb{0.00437}       \\
p610       & 100   & 180     & 50    & 30161      & \numb{0.03407}       \\
p611       & 100   & 180     & 50    & 26903      & \numb{0.09018}       \\
p612       & 100   & 180     & 50    & 30258      & \numb{0.10735}       \\
p613       & 200   & 370     & 10    & 18429      & \numb{0.00097}       \\
p614       & 200   & 370     & 20    & 27276      & \numb{0.01686}       \\
p615       & 200   & 370     & 40    & 42474      & \numb{0.04866}       \\
}
{\typePsixZ}

% \vspace{\extraspace}

%% RANDOM INSTANCES

\testset{B}{
b01        & 50    & 63      & 9     & 82         & \numb{0.00097}       \\
b02        & 50    & 63      & 13    & 83         & \numb{0.00282}       \\
b03        & 50    & 63      & 25    & 138        & \numb{0.13586}       \\
b04        & 50    & 100     & 9     & 59         & \numb{0.00061}       \\
b05        & 50    & 100     & 13    & 61         & \numb{0.00758}       \\
b06        & 50    & 100     & 25    & 122        & \numb{0.25593}       \\
b07        & 75    & 94      & 13    & 111        & \numb{0.00462}       \\
b08        & 75    & 94      & 19    & 104        & \numb{0.00417}       \\
b09        & 75    & 94      & 38    & 220        & \numb{1858.38164}    \\
b10        & 75    & 150     & 13    & 86         & \numb{0.00586}       \\
b11        & 75    & 150     & 19    & 88         & \numb{0.01927}       \\
b12        & 75    & 150     & 38    & 174        & \numb{98.16452}      \\
b13        & 100   & 125     & 17    & 165        & \numb{0.11100}       \\
b14        & 100   & 125     & 25    & 235        & \numb{1.70413}       \\
b15        & 100   & 125     & 50    & 318        & \numb{176.19179}     \\
b16        & 100   & 200     & 17    & 127        & \numb{0.00757}       \\
b17        & 100   & 200     & 25    & 131        & \numb{266.81382}     \\
b18        & 100   & 200     & 50    & 218        & \numb{3228.99124}    \\
}{\typeRandom}

\testset{C}{
c01        & 500   & 625     & 5     & 85         & \numb{0.00217}       \\
c02        & 500   & 625     & 10    & 144        & \numb{0.00459}       \\
c06        & 500   & 1000    & 5     & 55         & \numb{0.00283}       \\
c07        & 500   & 1000    & 10    & 102        & \numb{0.02380}       \\
c11        & 500   & 2500    & 5     & 32         & \numb{0.00380}       \\
c12        & 500   & 2500    & 10    & 46         & \numb{0.01180}       \\
c16        & 500   & 12500   & 5     & 11         & \numb{0.00516}       \\
c17        & 500   & 12500   & 10    & 18         & \numb{0.01123}       \\
}{\typeRandom}

\testset{D}{
d01        & 1000  & 1250    & 5     & 106        & \numb{0.00380}       \\
d02        & 1000  & 1250    & 10    & 220        & \numb{0.05223}       \\
d06        & 1000  & 2000    & 5     & 67         & \numb{0.00438}       \\
d07        & 1000  & 2000    & 10    & 103        & \numb{0.00979}       \\
d11        & 1000  & 5000    & 5     & 29         & \numb{0.00688}       \\
d12        & 1000  & 5000    & 10    & 42         & \numb{0.01630}       \\
d16        & 1000  & 25000   & 5     & 13         & \numb{0.01088}       \\
d17        & 1000  & 25000   & 10    & 23         & \numb{0.07248}       \\
}{\typeRandom}

\def\extraspace{6pt}

\testset{E}{
e01        & 2500  & 3125    & 5     & 111        & \numb{0.00689}       \\*
e02        & 2500  & 3125    & 10    & 214        & \numb{0.04494}       \\*
e06        & 2500  & 5000    & 5     & 73         & \numb{0.00806}       \\
e07        & 2500  & 5000    & 10    & 145        & \numb{0.23007}       \\
e11        & 2500  & 12500   & 5     & 34         & \numb{0.00947}       \\
e12        & 2500  & 12500   & 10    & 67         & \numb{0.16077}       \\
e16        & 2500  & 62500   & 5     & 15         & \numb{0.01602}       \\*
e17        & 2500  & 62500   & 10    & 25         & \numb{0.12417}       \\*
}{\typeRandom}

%% RANDOM INSTANCES WITH INC COSTS

\testset{I080}{
i080-001   & 80    & 120     & 6     & 1787       & \numb{0.00056}       \\
i080-002   & 80    & 120     & 6     & 1607       & \numb{0.00033}       \\
i080-003   & 80    & 120     & 6     & 1713       & \numb{0.00039}       \\
i080-004   & 80    & 120     & 6     & 1866       & \numb{0.00040}       \\
i080-005   & 80    & 120     & 6     & 1790       & \numb{0.00051}       \\
i080-011   & 80    & 350     & 6     & 1479       & \numb{0.00100}       \\
i080-012   & 80    & 350     & 6     & 1484       & \numb{0.00099}       \\
i080-013   & 80    & 350     & 6     & 1381       & \numb{0.00072}       \\
i080-014   & 80    & 350     & 6     & 1397       & \numb{0.00086}       \\
i080-015   & 80    & 350     & 6     & 1495       & \numb{0.00097}       \\
i080-021   & 80    & 3160    & 6     & 1175       & \numb{0.00378}       \\
i080-022   & 80    & 3160    & 6     & 1178       & \numb{0.00381}       \\
i080-023   & 80    & 3160    & 6     & 1174       & \numb{0.00376}       \\
i080-024   & 80    & 3160    & 6     & 1161       & \numb{0.00369}       \\
i080-025   & 80    & 3160    & 6     & 1162       & \numb{0.00374}       \\
i080-031   & 80    & 160     & 6     & 1570       & \numb{0.00081}       \\
i080-032   & 80    & 160     & 6     & 2088       & \numb{0.00081}       \\
i080-033   & 80    & 160     & 6     & 1794       & \numb{0.00062}       \\
i080-034   & 80    & 160     & 6     & 1688       & \numb{0.00056}       \\
i080-035   & 80    & 160     & 6     & 1862       & \numb{0.00068}       \\
i080-041   & 80    & 632     & 6     & 1276       & \numb{0.00096}       \\
i080-042   & 80    & 632     & 6     & 1287       & \numb{0.00118}       \\
i080-043   & 80    & 632     & 6     & 1295       & \numb{0.00115}       \\
i080-044   & 80    & 632     & 6     & 1366       & \numb{0.00135}       \\
i080-045   & 80    & 632     & 6     & 1310       & \numb{0.00100}       \\
i080-101   & 80    & 120     & 8     & 2608       & \numb{0.00069}       \\
i080-102   & 80    & 120     & 8     & 2403       & \numb{0.00142}       \\
i080-103   & 80    & 120     & 8     & 2603       & \numb{0.00180}       \\
i080-104   & 80    & 120     & 8     & 2486       & \numb{0.00218}       \\
i080-105   & 80    & 120     & 8     & 2203       & \numb{0.00057}       \\
i080-111   & 80    & 350     & 8     & 2051       & \numb{0.00796}       \\
i080-112   & 80    & 350     & 8     & 1885       & \numb{0.00631}       \\
i080-113   & 80    & 350     & 8     & 1884       & \numb{0.00440}       \\
i080-114   & 80    & 350     & 8     & 1895       & \numb{0.00481}       \\
i080-115   & 80    & 350     & 8     & 1868       & \numb{0.00366}       \\
i080-121   & 80    & 3160    & 8     & 1561       & \numb{0.02292}       \\
i080-122   & 80    & 3160    & 8     & 1561       & \numb{0.02286}       \\
i080-123   & 80    & 3160    & 8     & 1569       & \numb{0.02331}       \\
i080-124   & 80    & 3160    & 8     & 1555       & \numb{0.02245}       \\
i080-125   & 80    & 3160    & 8     & 1572       & \numb{0.02351}       \\
i080-131   & 80    & 160     & 8     & 2284       & \numb{0.00136}       \\
i080-132   & 80    & 160     & 8     & 2180       & \numb{0.00261}       \\
i080-133   & 80    & 160     & 8     & 2261       & \numb{0.00179}       \\
i080-134   & 80    & 160     & 8     & 2070       & \numb{0.00219}       \\
i080-135   & 80    & 160     & 8     & 2102       & \numb{0.00119}       \\
i080-141   & 80    & 632     & 8     & 1788       & \numb{0.00907}       \\
i080-142   & 80    & 632     & 8     & 1708       & \numb{0.00932}       \\
i080-143   & 80    & 632     & 8     & 1767       & \numb{0.01552}       \\
i080-144   & 80    & 632     & 8     & 1772       & \numb{0.01520}       \\
i080-145   & 80    & 632     & 8     & 1762       & \numb{0.01301}       \\
i080-201   & 80    & 120     & 16    & 4760       & \numb{0.12836}       \\
i080-202   & 80    & 120     & 16    & 4650       & \numb{0.11127}       \\
i080-203   & 80    & 120     & 16    & 4599       & \numb{0.89110}       \\
i080-204   & 80    & 120     & 16    & 4492       & \numb{8.02466}       \\
i080-205   & 80    & 120     & 16    & 4564       & \numb{0.61529}       \\
i080-211   & 80    & 350     & 16    & 3631       & \numb{108.65709}     \\
i080-212   & 80    & 350     & 16    & 3677       & \numb{104.79866}     \\
i080-213   & 80    & 350     & 16    & 3678       & \numb{106.88719}     \\
i080-214   & 80    & 350     & 16    & 3734       & \numb{107.15394}     \\
i080-215   & 80    & 350     & 16    & 3681       & \numb{107.42398}     \\
i080-221   & 80    & 3160    & 16    & 3158       & \numb{111.56453}     \\
i080-222   & 80    & 3160    & 16    & 3141       & \numb{113.02559}     \\
i080-223   & 80    & 3160    & 16    & 3156       & \numb{112.40187}     \\
i080-224   & 80    & 3160    & 16    & 3159       & \numb{114.26396}     \\
i080-225   & 80    & 3160    & 16    & 3150       & \numb{114.48878}     \\
i080-231   & 80    & 160     & 16    & 4354       & \numb{32.82894}      \\
i080-232   & 80    & 160     & 16    & 4199       & \numb{26.64471}      \\
i080-233   & 80    & 160     & 16    & 4118       & \numb{16.69140}      \\
i080-234   & 80    & 160     & 16    & 4274       & \numb{0.89790}       \\
i080-235   & 80    & 160     & 16    & 4487       & \numb{2.32598}       \\
i080-241   & 80    & 632     & 16    & 3538       & \numb{146.82699}     \\
i080-242   & 80    & 632     & 16    & 3458       & \numb{141.58934}     \\
i080-243   & 80    & 632     & 16    & 3474       & \numb{139.92190}     \\
i080-244   & 80    & 632     & 16    & 3466       & \numb{143.14732}     \\
i080-245   & 80    & 632     & 16    & 3467       & \numb{142.41802}     \\
i080-301   & 80    & 120     & 20    & 5519       & \numb{38.99112}      \\
i080-302   & 80    & 120     & 20    & 5944       & \numb{10.55645}      \\
i080-303   & 80    & 120     & 20    & 5777       & \numb{19.15451}      \\
i080-304   & 80    & 120     & 20    & 5586       & \numb{6.39249}       \\
i080-305   & 80    & 120     & 20    & 5932       & \numb{201.23488}     \\
i080-311   & 80    & 350     & 20    & \nosol{}   & \notime{}            \\
i080-312   & 80    & 350     & 20    & \nosol{}   & \notime{}            \\
i080-313   & 80    & 350     & 20    & \nosol{}   & \notime{}            \\
i080-314   & 80    & 350     & 20    & \nosol{}   & \notime{}            \\
i080-315   & 80    & 350     & 20    & \nosol{}   & \notime{}            \\
i080-321   & 80    & 3160    & 20    & \nosol{}   & \notime{}            \\
i080-322   & 80    & 3160    & 20    & \nosol{}   & \notime{}            \\
i080-323   & 80    & 3160    & 20    & \nosol{}   & \notime{}            \\
i080-324   & 80    & 3160    & 20    & \nosol{}   & \notime{}            \\
i080-325   & 80    & 3160    & 20    & \nosol{}   & \notime{}            \\
i080-331   & 80    & 160     & 20    & 5226       & \numb{814.50203}     \\
i080-332   & 80    & 160     & 20    & 5362       & \numb{902.23010}     \\
i080-333   & 80    & 160     & 20    & 5381       & \numb{541.44771}     \\
i080-334   & 80    & 160     & 20    & 5264       & \numb{920.97433}     \\
i080-335   & 80    & 160     & 20    & 4953       & \numb{1004.15581}    \\
i080-341   & 80    & 632     & 20    & \nosol{}   & \notime{}            \\
i080-342   & 80    & 632     & 20    & \nosol{}   & \notime{}            \\
i080-343   & 80    & 632     & 20    & \nosol{}   & \notime{}            \\
i080-344   & 80    & 632     & 20    & \nosol{}   & \notime{}            \\
i080-345   & 80    & 632     & 20    & \nosol{}   & \notime{}            \\
}{\typeIncidenceCost}

\testset{I160}{
i160-001   & 160   & 240     & 7     & 2490       & \numb{0.00440}       \\
i160-002   & 160   & 240     & 7     & 2158       & \numb{0.00145}       \\
i160-003   & 160   & 240     & 7     & 2297       & \numb{0.00206}       \\
i160-004   & 160   & 240     & 7     & 2370       & \numb{0.00211}       \\
i160-005   & 160   & 240     & 7     & 2495       & \numb{0.00241}       \\
i160-011   & 160   & 812     & 7     & 1677       & \numb{0.00709}       \\
i160-012   & 160   & 812     & 7     & 1750       & \numb{0.00750}       \\
i160-013   & 160   & 812     & 7     & 1661       & \numb{0.00465}       \\
i160-014   & 160   & 812     & 7     & 1778       & \numb{0.00794}       \\
i160-015   & 160   & 812     & 7     & 1768       & \numb{0.00938}       \\
i160-021   & 160   & 12720   & 7     & 1352       & \numb{0.05547}       \\
i160-022   & 160   & 12720   & 7     & 1365       & \numb{0.05741}       \\
i160-023   & 160   & 12720   & 7     & 1351       & \numb{0.05578}       \\
i160-024   & 160   & 12720   & 7     & 1371       & \numb{0.05781}       \\
i160-025   & 160   & 12720   & 7     & 1366       & \numb{0.05369}       \\
i160-031   & 160   & 320     & 7     & 2170       & \numb{0.00296}       \\
i160-032   & 160   & 320     & 7     & 2330       & \numb{0.00298}       \\
i160-033   & 160   & 320     & 7     & 2101       & \numb{0.00453}       \\
i160-034   & 160   & 320     & 7     & 2083       & \numb{0.00303}       \\
i160-035   & 160   & 320     & 7     & 2103       & \numb{0.00422}       \\
i160-041   & 160   & 2544    & 7     & 1494       & \numb{0.01275}       \\
i160-042   & 160   & 2544    & 7     & 1486       & \numb{0.01304}       \\
i160-043   & 160   & 2544    & 7     & 1549       & \numb{0.01609}       \\
i160-044   & 160   & 2544    & 7     & 1478       & \numb{0.01051}       \\
i160-045   & 160   & 2544    & 7     & 1554       & \numb{0.01480}       \\
i160-101   & 160   & 240     & 12    & 3859       & \numb{0.08167}       \\
i160-102   & 160   & 240     & 12    & 3747       & \numb{0.20505}       \\
i160-103   & 160   & 240     & 12    & 3837       & \numb{0.09262}       \\
i160-104   & 160   & 240     & 12    & 4063       & \numb{0.01440}       \\
i160-105   & 160   & 240     & 12    & 3563       & \numb{0.03449}       \\
i160-111   & 160   & 812     & 12    & 2869       & \numb{1.43022}       \\
i160-112   & 160   & 812     & 12    & 2924       & \numb{2.30419}       \\
i160-113   & 160   & 812     & 12    & 2866       & \numb{1.57464}       \\
i160-114   & 160   & 812     & 12    & 2989       & \numb{1.94856}       \\
i160-115   & 160   & 812     & 12    & 2937       & \numb{1.43577}       \\
i160-121   & 160   & 12720   & 12    & 2363       & \numb{5.39485}       \\
i160-122   & 160   & 12720   & 12    & 2348       & \numb{5.35543}       \\
i160-123   & 160   & 12720   & 12    & 2355       & \numb{5.41763}       \\
i160-124   & 160   & 12720   & 12    & 2352       & \numb{5.33008}       \\
i160-125   & 160   & 12720   & 12    & 2351       & \numb{5.45926}       \\
i160-131   & 160   & 320     & 12    & 3356       & \numb{0.30557}       \\
i160-132   & 160   & 320     & 12    & 3450       & \numb{0.20455}       \\
i160-133   & 160   & 320     & 12    & 3585       & \numb{0.34488}       \\
i160-134   & 160   & 320     & 12    & 3470       & \numb{0.09512}       \\
i160-135   & 160   & 320     & 12    & 3716       & \numb{0.28598}       \\
i160-141   & 160   & 2544    & 12    & 2549       & \numb{3.39813}       \\
i160-142   & 160   & 2544    & 12    & 2562       & \numb{3.43438}       \\
i160-143   & 160   & 2544    & 12    & 2557       & \numb{2.95534}       \\
i160-144   & 160   & 2544    & 12    & 2607       & \numb{3.23815}       \\
i160-145   & 160   & 2544    & 12    & 2578       & \numb{3.45143}       \\
i160-201   & 160   & 240     & 24    & \nosol{}   & \notime{}            \\
i160-202   & 160   & 240     & 24    & \nosol{}   & \notime{}            \\
i160-203   & 160   & 240     & 24    & 7243       & \numb{4173.67890}    \\
i160-204   & 160   & 240     & 24    & \nosol{}   & \notime{}            \\
i160-205   & 160   & 240     & 24    & \nosol{}   & \notime{}            \\
i160-211   & 160   & 812     & 24    & \nosol{}   & \notime{}            \\
i160-212   & 160   & 812     & 24    & \nosol{}   & \notime{}            \\
i160-213   & 160   & 812     & 24    & \nosol{}   & \notime{}            \\
i160-214   & 160   & 812     & 24    & \nosol{}   & \notime{}            \\
i160-215   & 160   & 812     & 24    & \nosol{}   & \notime{}            \\
i160-221   & 160   & 12720   & 24    & \nosol{}   & \notime{}            \\
i160-222   & 160   & 12720   & 24    & \nosol{}   & \notime{}            \\
i160-223   & 160   & 12720   & 24    & \nosol{}   & \notime{}            \\
i160-224   & 160   & 12720   & 24    & \nosol{}   & \notime{}            \\
i160-225   & 160   & 12720   & 24    & \nosol{}   & \notime{}            \\
i160-231   & 160   & 320     & 24    & \nosol{}   & \notime{}            \\
i160-232   & 160   & 320     & 24    & \nosol{}   & \notime{}            \\
i160-233   & 160   & 320     & 24    & \nosol{}   & \notime{}            \\
i160-234   & 160   & 320     & 24    & \nosol{}   & \notime{}            \\
i160-235   & 160   & 320     & 24    & \nosol{}   & \notime{}            \\
i160-241   & 160   & 2544    & 24    & \nosol{}   & \notime{}            \\
i160-242   & 160   & 2544    & 24    & \nosol{}   & \notime{}            \\
i160-243   & 160   & 2544    & 24    & \nosol{}   & \notime{}            \\
i160-244   & 160   & 2544    & 24    & \nosol{}   & \notime{}            \\
i160-245   & 160   & 2544    & 24    & \nosol{}   & \notime{}            \\
i160-301   & 160   & 240     & 40    & \nosol{}   & \nomem{}             \\
i160-302   & 160   & 240     & 40    & \nosol{}   & \nomem{}             \\
i160-303   & 160   & 240     & 40    & \nosol{}   & \nomem{}             \\
i160-304   & 160   & 240     & 40    & \nosol{}   & \nomem{}             \\
i160-305   & 160   & 240     & 40    & \nosol{}   & \nomem{}             \\
i160-311   & 160   & 812     & 40    & \nosol{}   & \nomem{}             \\
i160-312   & 160   & 812     & 40    & \nosol{}   & \nomem{}             \\
i160-313   & 160   & 812     & 40    & \nosol{}   & \nomem{}             \\
i160-314   & 160   & 812     & 40    & \nosol{}   & \nomem{}             \\
i160-315   & 160   & 812     & 40    & \nosol{}   & \nomem{}             \\
i160-321   & 160   & 12720   & 40    & \nosol{}   & \nomem{}             \\
i160-322   & 160   & 12720   & 40    & \nosol{}   & \nomem{}             \\
i160-323   & 160   & 12720   & 40    & \nosol{}   & \nomem{}             \\
i160-324   & 160   & 12720   & 40    & \nosol{}   & \nomem{}             \\
i160-325   & 160   & 12720   & 40    & \nosol{}   & \nomem{}             \\
i160-331   & 160   & 320     & 40    & \nosol{}   & \nomem{}             \\
i160-332   & 160   & 320     & 40    & \nosol{}   & \nomem{}             \\
i160-333   & 160   & 320     & 40    & \nosol{}   & \nomem{}             \\
i160-334   & 160   & 320     & 40    & \nosol{}   & \nomem{}             \\
i160-335   & 160   & 320     & 40    & \nosol{}   & \nomem{}             \\
i160-341   & 160   & 2544    & 40    & \nosol{}   & \nomem{}             \\
i160-342   & 160   & 2544    & 40    & \nosol{}   & \nomem{}             \\
i160-343   & 160   & 2544    & 40    & \nosol{}   & \nomem{}             \\
i160-344   & 160   & 2544    & 40    & \nosol{}   & \nomem{}             \\
i160-345   & 160   & 2544    & 40    & \nosol{}   & \nomem{}             \\
}{\typeIncidenceCost}

\testset{I320}{
i320-001   & 320   & 480     & 8     & 2672       & \numb{0.00262}       \\
i320-002   & 320   & 480     & 8     & 2847       & \numb{0.00399}       \\
i320-003   & 320   & 480     & 8     & 2972       & \numb{0.00411}       \\
i320-004   & 320   & 480     & 8     & 2905       & \numb{0.00727}       \\
i320-005   & 320   & 480     & 8     & 2991       & \numb{0.00570}       \\
i320-011   & 320   & 1845    & 8     & 2053       & \numb{0.03078}       \\
i320-012   & 320   & 1845    & 8     & 1997       & \numb{0.02330}       \\
i320-013   & 320   & 1845    & 8     & 2072       & \numb{0.03906}       \\
i320-014   & 320   & 1845    & 8     & 2061       & \numb{0.03501}       \\
i320-015   & 320   & 1845    & 8     & 2059       & \numb{0.03983}       \\
i320-021   & 320   & 51040   & 8     & 1553       & \numb{0.36556}       \\
i320-022   & 320   & 51040   & 8     & 1565       & \numb{0.41407}       \\
i320-023   & 320   & 51040   & 8     & 1549       & \numb{0.37291}       \\
i320-024   & 320   & 51040   & 8     & 1553       & \numb{0.36048}       \\
i320-025   & 320   & 51040   & 8     & 1550       & \numb{0.36022}       \\
i320-031   & 320   & 640     & 8     & 2673       & \numb{0.01092}       \\
i320-032   & 320   & 640     & 8     & 2770       & \numb{0.01048}       \\
i320-033   & 320   & 640     & 8     & 2769       & \numb{0.00958}       \\
i320-034   & 320   & 640     & 8     & 2521       & \numb{0.00471}       \\
i320-035   & 320   & 640     & 8     & 2385       & \numb{0.00688}       \\
i320-041   & 320   & 10208   & 8     & 1707       & \numb{0.06121}       \\
i320-042   & 320   & 10208   & 8     & 1682       & \numb{0.04977}       \\
i320-043   & 320   & 10208   & 8     & 1723       & \numb{0.05935}       \\
i320-044   & 320   & 10208   & 8     & 1681       & \numb{0.06713}       \\
i320-045   & 320   & 10208   & 8     & 1686       & \numb{0.05622}       \\
i320-101   & 320   & 480     & 17    & 5548       & \numb{16.13133}      \\
i320-102   & 320   & 480     & 17    & 5556       & \numb{10.28908}      \\
i320-103   & 320   & 480     & 17    & 6239       & \numb{178.46285}     \\
i320-104   & 320   & 480     & 17    & 5703       & \numb{108.73862}     \\
i320-105   & 320   & 480     & 17    & 5928       & \numb{84.74775}      \\
i320-111   & 320   & 1845    & 17    & 4273       & \numb{1706.67394}    \\
i320-112   & 320   & 1845    & 17    & 4213       & \numb{1872.72001}    \\
i320-113   & 320   & 1845    & 17    & 4205       & \numb{1464.60944}    \\
i320-114   & 320   & 1845    & 17    & 4104       & \numb{1581.73330}    \\
i320-115   & 320   & 1845    & 17    & 4238       & \numb{1747.82683}    \\
i320-121   & 320   & 51040   & 17    & 3321       & \numb{2381.93881}    \\
i320-122   & 320   & 51040   & 17    & 3314       & \numb{2336.60330}    \\
i320-123   & 320   & 51040   & 17    & 3332       & \numb{2379.99072}    \\
i320-124   & 320   & 51040   & 17    & 3323       & \numb{2342.52702}    \\
i320-125   & 320   & 51040   & 17    & 3340       & \numb{2353.65542}    \\
i320-131   & 320   & 640     & 17    & 5255       & \numb{343.94163}     \\
i320-132   & 320   & 640     & 17    & 5052       & \numb{33.23370}      \\
i320-133   & 320   & 640     & 17    & 5125       & \numb{44.19732}      \\
i320-134   & 320   & 640     & 17    & 5272       & \numb{374.93796}     \\
i320-135   & 320   & 640     & 17    & 5342       & \numb{211.98616}     \\
i320-141   & 320   & 10208   & 17    & 3606       & \numb{2246.12522}    \\
i320-142   & 320   & 10208   & 17    & 3567       & \numb{2305.36881}    \\
i320-143   & 320   & 10208   & 17    & 3561       & \numb{2244.35405}    \\
i320-144   & 320   & 10208   & 17    & 3512       & \numb{2244.79237}    \\
i320-145   & 320   & 10208   & 17    & 3601       & \numb{2261.94575}    \\
i320-201   & 320   & 480     & 34    & \nosol{}   & \nomem{}            \\
i320-202   & 320   & 480     & 34    & \nosol{}   & \nomem{}            \\
i320-203   & 320   & 480     & 34    & \nosol{}   & \nomem{}            \\
i320-204   & 320   & 480     & 34    & \nosol{}   & \nomem{}            \\
i320-205   & 320   & 480     & 34    & \nosol{}   & \nomem{}            \\
i320-211   & 320   & 1845    & 34    & \nosol{}   & \nomem{}            \\
i320-212   & 320   & 1845    & 34    & \nosol{}   & \nomem{}            \\
i320-213   & 320   & 1845    & 34    & \nosol{}   & \nomem{}            \\
i320-214   & 320   & 1845    & 34    & \nosol{}   & \nomem{}            \\
i320-215   & 320   & 1845    & 34    & \nosol{}   & \nomem{}            \\
i320-221   & 320   & 51040   & 34    & \nosol{}   & \nomem{}            \\
i320-222   & 320   & 51040   & 34    & \nosol{}   & \nomem{}            \\
i320-223   & 320   & 51040   & 34    & \nosol{}   & \nomem{}            \\
i320-224   & 320   & 51040   & 34    & \nosol{}   & \nomem{}            \\
i320-225   & 320   & 51040   & 34    & \nosol{}   & \nomem{}            \\
i320-231   & 320   & 640     & 34    & \nosol{}   & \nomem{}            \\
i320-232   & 320   & 640     & 34    & \nosol{}   & \nomem{}            \\
i320-233   & 320   & 640     & 34    & \nosol{}   & \nomem{}            \\
i320-234   & 320   & 640     & 34    & \nosol{}   & \nomem{}            \\
i320-235   & 320   & 640     & 34    & \nosol{}   & \nomem{}            \\
i320-241   & 320   & 10208   & 34    & \nosol{}   & \nomem{}            \\
i320-242   & 320   & 10208   & 34    & \nosol{}   & \nomem{}            \\
i320-243   & 320   & 10208   & 34    & \nosol{}   & \nomem{}            \\
i320-244   & 320   & 10208   & 34    & \nosol{}   & \nomem{}            \\
i320-245   & 320   & 10208   & 34    & \nosol{}   & \nomem{}            \\
}{\typeIncidenceCost}

\testset{I640}{
i640-001   & 640   & 960     & 9     & 4033       & \numb{0.04220}       \\
i640-002   & 640   & 960     & 9     & 3588       & \numb{0.03003}       \\
i640-003   & 640   & 960     & 9     & 3438       & \numb{0.02363}       \\
i640-004   & 640   & 960     & 9     & 4000       & \numb{0.07037}       \\
i640-005   & 640   & 960     & 9     & 4006       & \numb{0.05535}       \\
i640-011   & 640   & 4135    & 9     & 2392       & \numb{0.16333}       \\
i640-012   & 640   & 4135    & 9     & 2465       & \numb{0.27070}       \\
i640-013   & 640   & 4135    & 9     & 2399       & \numb{0.20514}       \\
i640-014   & 640   & 4135    & 9     & 2171       & \numb{0.03798}       \\
i640-015   & 640   & 4135    & 9     & 2347       & \numb{0.12587}       \\
i640-021   & 640   & 204480  & 9     & 1749       & \numb{4.42602}       \\
i640-022   & 640   & 204480  & 9     & 1756       & \numb{4.03080}       \\
i640-023   & 640   & 204480  & 9     & 1754       & \numb{4.49948}       \\
i640-024   & 640   & 204480  & 9     & 1751       & \numb{4.08132}       \\
i640-025   & 640   & 204480  & 9     & 1745       & \numb{4.45155}       \\
i640-031   & 640   & 1280    & 9     & 3278       & \numb{0.06546}       \\
i640-032   & 640   & 1280    & 9     & 3187       & \numb{0.05648}       \\
i640-033   & 640   & 1280    & 9     & 3260       & \numb{0.06201}       \\
i640-034   & 640   & 1280    & 9     & 2953       & \numb{0.01469}       \\
i640-035   & 640   & 1280    & 9     & 3292       & \numb{0.03329}       \\
i640-041   & 640   & 40896   & 9     & 1897       & \numb{0.95667}       \\
i640-042   & 640   & 40896   & 9     & 1934       & \numb{0.81063}       \\
i640-043   & 640   & 40896   & 9     & 1931       & \numb{0.88873}       \\
i640-044   & 640   & 40896   & 9     & 1938       & \numb{0.97524}       \\
i640-045   & 640   & 40896   & 9     & 1866       & \numb{0.40323}       \\
i640-101   & 640   & 960     & 25    & \nosol{}   & \notime{}            \\
i640-102   & 640   & 960     & 25    & \nosol{}   & \notime{}            \\
i640-103   & 640   & 960     & 25    & \nosol{}   & \notime{}            \\
i640-104   & 640   & 960     & 25    & \nosol{}   & \notime{}            \\
i640-105   & 640   & 960     & 25    & \nosol{}   & \notime{}            \\
i640-111   & 640   & 4135    & 25    & \nosol{}   & \nomem{}             \\
i640-112   & 640   & 4135    & 25    & \nosol{}   & \nomem{}             \\
i640-113   & 640   & 4135    & 25    & \nosol{}   & \nomem{}             \\
i640-114   & 640   & 4135    & 25    & \nosol{}   & \nomem{}             \\
i640-115   & 640   & 4135    & 25    & \nosol{}   & \nomem{}             \\
i640-121   & 640   & 204480  & 25    & \nosol{}   & \nomem{}             \\
i640-122   & 640   & 204480  & 25    & \nosol{}   & \nomem{}             \\
i640-123   & 640   & 204480  & 25    & \nosol{}   & \nomem{}             \\
i640-124   & 640   & 204480  & 25    & \nosol{}   & \nomem{}             \\
i640-125   & 640   & 204480  & 25    & \nosol{}   & \nomem{}             \\
i640-131   & 640   & 1280    & 25    & \nosol{}   & \notime{}            \\
i640-132   & 640   & 1280    & 25    & \nosol{}   & \notime{}            \\
i640-133   & 640   & 1280    & 25    & \nosol{}   & \notime{}            \\
i640-134   & 640   & 1280    & 25    & \nosol{}   & \notime{}            \\
i640-135   & 640   & 1280    & 25    & \nosol{}   & \notime{}            \\
i640-141   & 640   & 40896   & 25    & \nosol{}   & \nomem{}             \\
i640-142   & 640   & 40896   & 25    & \nosol{}   & \nomem{}             \\
i640-143   & 640   & 40896   & 25    & \nosol{}   & \nomem{}             \\
i640-144   & 640   & 40896   & 25    & \nosol{}   & \nomem{}             \\
i640-145   & 640   & 40896   & 25    & \nosol{}   & \nomem{}             \\
i640-201   & 640   & 960     & 50    & \nosol{}   & \nomem{}            \\
i640-202   & 640   & 960     & 50    & \nosol{}   & \nomem{}            \\
i640-203   & 640   & 960     & 50    & \nosol{}   & \nomem{}            \\
i640-204   & 640   & 960     & 50    & \nosol{}   & \nomem{}            \\
i640-205   & 640   & 960     & 50    & \nosol{}   & \nomem{}            \\
i640-211   & 640   & 4135    & 50    & \nosol{}   & \nomem{}            \\
i640-212   & 640   & 4135    & 50    & \nosol{}   & \nomem{}            \\
i640-213   & 640   & 4135    & 50    & \nosol{}   & \nomem{}            \\
i640-214   & 640   & 4135    & 50    & \nosol{}   & \nomem{}            \\
i640-215   & 640   & 4135    & 50    & \nosol{}   & \nomem{}            \\
i640-221   & 640   & 204480  & 50    & \nosol{}   & \nomem{}            \\
i640-222   & 640   & 204480  & 50    & \nosol{}   & \nomem{}            \\
i640-223   & 640   & 204480  & 50    & \nosol{}   & \nomem{}            \\
i640-224   & 640   & 204480  & 50    & \nosol{}   & \nomem{}            \\
i640-225   & 640   & 204480  & 50    & \nosol{}   & \nomem{}            \\
i640-231   & 640   & 1280    & 50    & \nosol{}   & \nomem{}            \\
i640-232   & 640   & 1280    & 50    & \nosol{}   & \nomem{}            \\
i640-233   & 640   & 1280    & 50    & \nosol{}   & \nomem{}            \\
i640-234   & 640   & 1280    & 50    & \nosol{}   & \nomem{}            \\
i640-235   & 640   & 1280    & 50    & \nosol{}   & \nomem{}            \\
i640-241   & 640   & 40896   & 50    & \nosol{}   & \nomem{}            \\
i640-242   & 640   & 40896   & 50    & \nosol{}   & \nomem{}            \\
i640-243   & 640   & 40896   & 50    & \nosol{}   & \nomem{}            \\
i640-244   & 640   & 40896   & 50    & \nosol{}   & \nomem{}            \\
i640-245   & 640   & 40896   & 50    & \nosol{}   & \nomem{}            \\
}{\typeIncidenceCost}

%% ARTIFICIAL INSTANCES

\testset{PUC}{
bipe2p     & 550   & 5013    & 50    & \nosol{}   & \nomem{}            \\
bipe2u     & 550   & 5013    & 50    & \nosol{}   & \nomem{}            \\
cc3-10p    & 1000  & 13500   & 50    & \nosol{}   & \nomem{}            \\
cc3-10u    & 1000  & 13500   & 50    & \nosol{}   & \nomem{}            \\
cc3-11p    & 1331  & 19965   & 61    & \nosol{}   & \nomem{}            \\
cc3-11u    & 1331  & 19965   & 61    & \nosol{}   & \nomem{}            \\
cc3-4p     & 64    & 288     & 8     & 2338       & \numb{0.00565}      \\
cc3-4u     & 64    & 288     & 8     & 23         & \numb{0.00819}      \\
cc3-5p     & 125   & 750     & 13    & 3661       & \numb{3.45007}      \\
cc3-5u     & 125   & 750     & 13    & 36         & \numb{4.63731}      \\
cc5-3p     & 243   & 1215    & 27    & \nosol{}   & \nomem{}            \\
cc5-3u     & 243   & 1215    & 27    & \nosol{}   & \nomem{}            \\
cc6-2p     & 64    & 192     & 12    & 3271       & \numb{0.16628}      \\
cc6-2u     & 64    & 192     & 12    & 32         & \numb{0.27873}      \\
hc6p       & 64    & 192     & 32    & \nosol{}   & \nomem{}            \\
hc6u       & 64    & 192     & 32    & \nosol{}   & \nomem{}            \\
}
{\typePUC}

\def\extraspace{0pt}

\testset{SPG-PUCN}{
cc3-10n    & 1000  & 13500   & 50    & \nosol{}   & \nomem{}             \\*
cc3-11n    & 1331  & 19965   & 61    & \nosol{}   & \nomem{}             \\*
cc3-4n     & 64    & 288     & 8     & 13         & \numb{0.00212}       \\*
cc3-5n     & 125   & 750     & 13    & 20         & \numb{0.56342}       \\*
cc5-3n     & 243   & 1215    & 27    & \nosol{}   & \nomem{}             \\*
cc6-2n     & 64    & 192     & 12    & 18         & \numb{0.04701}       \\*
}{\typePUCn}

% \vspace{\extraspace}

\testset{SP}{
antiwheel5 & 10    & 15      & 5     & 7          & \numb{0.00011}       \\
design432  & 8     & 20      & 4     & 9          & \numb{0.00007}       \\
oddcycle3  & 6     & 9       & 3     & 4          & \numb{0.00008}       \\
oddwheel3  & 7     & 9       & 4     & 5          & \numb{0.00009}       \\
se03       & 13    & 21      & 4     & 12         & \numb{0.00017}       \\
}
{\typeArt}

\testset{MC}{
mc2        & 120   & 7140    & 60    & \nosol{}   & \nomem{}             \\
mc3        & 97    & 4656    & 45    & \nosol{}   & \notime{}            \\
}{\typeArt}

% \vspace{\extraspace}

%% LP GAP INSTANCES

\testset{csd}{
csd02      & 3     & 2       & 1     & 0          & \numb{0.00004}          \\
csd03      & 6     & 6       & 3     & 4          & \numb{0.00007}          \\
csd04      & 10    & 12      & 6     & 8          & \numb{0.00017}          \\
csd05      & 15    & 20      & 10    & 13         & \numb{0.00721}          \\
csd06      & 21    & 30      & 15    & 19         & \numb{2.72433}          \\
csd07      & 28    & 42      & 21    & 26         & \numb{6138.67847}       \\
csd08      & 36    & 56      & 28    & \nosol{}   & \notime{}               \\
csd09      & 45    & 72      & 36    & \nosol{}   & \nomem{}                \\
csd10      & 55    & 90      & 45    & \nosol{}   & \nomem{}                \\
csd11      & 66    & 110     & 55    & \nosol{}   & \nomem{}                \\
}
{\typeGap}

\testset{goemans}{
g01-00     & 7     & 9       & 3     & 8          & \numb{0.00006}       \\*
g01-01     & 8     & 10      & 4     & 9          & \numb{0.00008}       \\*
g01-02     & 9     & 11      & 5     & 10         & \numb{0.00011}       \\*
g01-03     & 10    & 12      & 6     & 11         & \numb{0.00024}       \\
g01-04     & 11    & 13      & 7     & 12         & \numb{0.00029}       \\
g01-05     & 12    & 14      & 8     & 13         & \numb{0.00063}       \\
g01-06     & 13    & 15      & 9     & 14         & \numb{0.00424}       \\
g01-07     & 14    & 16      & 10    & 15         & \numb{0.00689}       \\
g01-08     & 15    & 17      & 11    & 16         & \numb{0.01592}       \\
g01-09     & 16    & 18      & 12    & 17         & \numb{0.06691}       \\
g01-10     & 17    & 19      & 13    & 18         & \numb{0.17197}       \\
g01-11     & 18    & 20      & 14    & 19         & \numb{0.49025}       \\
g01-12     & 19    & 21      & 15    & 20         & \numb{2.22014}       \\
g01-13     & 20    & 22      & 16    & 21         & \numb{18.63715}      \\
g01-14     & 21    & 23      & 17    & 22         & \numb{42.57551}      \\
g01-15     & 22    & 24      & 18    & 23         & \numb{39.91527}      \\
g02-00     & 9     & 16      & 4     & 14         & \numb{0.00010}       \\
g02-01     & 10    & 17      & 5     & 15         & \numb{0.00011}       \\
g02-02     & 11    & 18      & 6     & 16         & \numb{0.00018}       \\
g02-03     & 12    & 19      & 7     & 17         & \numb{0.00038}       \\
g02-04     & 13    & 20      & 8     & 18         & \numb{0.00131}       \\
g02-05     & 14    & 21      & 9     & 19         & \numb{0.00363}       \\
g02-06     & 15    & 22      & 10    & 20         & \numb{0.01315}       \\
g02-07     & 16    & 23      & 11    & 21         & \numb{0.03503}       \\
g02-08     & 17    & 24      & 12    & 22         & \numb{0.13502}       \\
g02-09     & 18    & 25      & 13    & 23         & \numb{0.35164}       \\
g02-10     & 19    & 26      & 14    & 24         & \numb{1.35370}       \\
g02-11     & 20    & 27      & 15    & 25         & \numb{6.34649}       \\
g02-12     & 21    & 28      & 16    & 26         & \numb{25.60194}      \\
g02-13     & 22    & 29      & 17    & 27         & \numb{114.36556}     \\
g02-14     & 23    & 30      & 18    & 28         & \numb{300.81812}     \\
g02-15     & 24    & 31      & 19    & 29         & \numb{1171.00970}    \\
g03-00     & 11    & 25      & 5     & 22         & \numb{0.00013}       \\
g03-01     & 12    & 26      & 6     & 23         & \numb{0.00019}       \\
g03-02     & 13    & 27      & 7     & 24         & \numb{0.00044}       \\
g03-03     & 14    & 28      & 8     & 25         & \numb{0.00135}       \\
g03-04     & 15    & 29      & 9     & 26         & \numb{0.00435}       \\
g03-05     & 16    & 30      & 10    & 27         & \numb{0.01566}       \\
g03-06     & 17    & 31      & 11    & 28         & \numb{0.06658}       \\
g03-07     & 18    & 32      & 12    & 29         & \numb{0.14494}       \\
g03-08     & 19    & 33      & 13    & 30         & \numb{0.41747}       \\
g03-09     & 20    & 34      & 14    & 31         & \numb{1.53606}       \\
g03-10     & 21    & 35      & 15    & 32         & \numb{6.41057}       \\
g03-11     & 22    & 36      & 16    & 33         & \numb{28.53643}      \\
g03-12     & 23    & 37      & 17    & 34         & \numb{106.67365}     \\
g03-13     & 24    & 38      & 18    & 35         & \numb{407.01956}     \\
g03-14     & 25    & 39      & 19    & 36         & \numb{1414.00988}    \\
g03-15     & 26    & 40      & 20    & 37         & \numb{4935.89724}    \\
g04-00     & 13    & 36      & 6     & 32         & \numb{0.00043}       \\
g04-01     & 14    & 37      & 7     & 33         & \numb{0.00075}       \\
g04-02     & 15    & 38      & 8     & 34         & \numb{0.00221}       \\
g04-03     & 16    & 39      & 9     & 35         & \numb{0.00711}       \\
g04-04     & 17    & 40      & 10    & 36         & \numb{0.02188}       \\
g04-05     & 18    & 41      & 11    & 37         & \numb{0.06189}       \\
g04-06     & 19    & 42      & 12    & 38         & \numb{0.15510}       \\
g04-07     & 20    & 43      & 13    & 39         & \numb{0.42395}       \\
g04-08     & 21    & 44      & 14    & 40         & \numb{1.57144}       \\
g04-09     & 22    & 45      & 15    & 41         & \numb{6.11440}       \\
g04-10     & 23    & 46      & 16    & 42         & \numb{26.87975}      \\
g04-11     & 24    & 47      & 17    & 43         & \numb{103.26262}     \\
g04-12     & 25    & 48      & 18    & 44         & \numb{379.41257}     \\
g04-13     & 26    & 49      & 19    & 45         & \numb{1380.37798}    \\
g04-14     & 27    & 50      & 20    & 46         & \numb{4854.75386}    \\
g04-15     & 28    & 51      & 21    & \nosol{}   & \notime{}            \\
g05-00     & 15    & 49      & 7     & 44         & \numb{0.00081}       \\
g05-01     & 16    & 50      & 8     & 45         & \numb{0.00214}       \\
g05-02     & 17    & 51      & 9     & 46         & \numb{0.00668}       \\
g05-03     & 18    & 52      & 10    & 47         & \numb{0.02329}       \\
g05-04     & 19    & 53      & 11    & 48         & \numb{0.06539}       \\
g05-05     & 20    & 54      & 12    & 49         & \numb{0.16358}       \\
g05-06     & 21    & 55      & 13    & 50         & \numb{0.41660}       \\
g05-07     & 22    & 56      & 14    & 51         & \numb{1.50248}       \\
g05-08     & 23    & 57      & 15    & 52         & \numb{5.83860}       \\
g05-09     & 24    & 58      & 16    & 53         & \numb{22.36820}      \\
g05-10     & 25    & 59      & 17    & 54         & \numb{89.85322}      \\
g05-11     & 26    & 60      & 18    & 55         & \numb{331.23040}     \\
g05-12     & 27    & 61      & 19    & 56         & \numb{1183.86655}    \\
g05-13     & 28    & 62      & 20    & 57         & \numb{4559.62705}    \\
g05-14     & 29    & 63      & 21    & \nosol{}   & \notime{}            \\
g05-15     & 30    & 64      & 22    & \nosol{}   & \notime{}            \\
g06-00     & 17    & 64      & 8     & 58         & \numb{0.00182}       \\
g06-01     & 18    & 65      & 9     & 59         & \numb{0.00643}       \\
g06-02     & 19    & 66      & 10    & 60         & \numb{0.02161}       \\
g06-03     & 20    & 67      & 11    & 61         & \numb{0.06072}       \\
g06-04     & 21    & 68      & 12    & 62         & \numb{0.14963}       \\
g06-05     & 22    & 69      & 13    & 63         & \numb{0.40265}       \\
g06-06     & 23    & 70      & 14    & 64         & \numb{1.44108}       \\
g06-07     & 24    & 71      & 15    & 65         & \numb{5.37414}       \\
g06-08     & 25    & 72      & 16    & 66         & \numb{20.74916}      \\
g06-09     & 26    & 73      & 17    & 67         & \numb{80.43537}      \\
g06-10     & 27    & 74      & 18    & 68         & \numb{307.30054}     \\
g06-11     & 28    & 75      & 19    & 69         & \numb{1129.36963}    \\
g06-12     & 29    & 76      & 20    & 70         & \numb{4294.38653}    \\
g06-13     & 30    & 77      & 21    & \nosol{}   & \notime{}            \\
g06-14     & 31    & 78      & 22    & \nosol{}   & \notime{}            \\
g06-15     & 32    & 79      & 23    & \nosol{}   & \notime{}            \\
g07-00     & 19    & 81      & 9     & 74         & \numb{0.00314}       \\
g07-01     & 20    & 82      & 10    & 75         & \numb{0.01147}       \\
g07-02     & 21    & 83      & 11    & 76         & \numb{0.03737}       \\
g07-03     & 22    & 84      & 12    & 77         & \numb{0.11196}       \\
g07-04     & 23    & 85      & 13    & 78         & \numb{0.39034}       \\
g07-05     & 24    & 86      & 14    & 79         & \numb{1.37036}       \\
g07-06     & 25    & 87      & 15    & 80         & \numb{5.03220}       \\
g07-07     & 26    & 88      & 16    & 81         & \numb{19.38299}      \\
g07-08     & 27    & 89      & 17    & 82         & \numb{76.22362}      \\
g07-09     & 28    & 90      & 18    & 83         & \numb{276.93132}     \\
g07-10     & 29    & 91      & 19    & 84         & \numb{1064.74504}    \\
g07-11     & 30    & 92      & 20    & 85         & \numb{4173.15536}    \\
g07-12     & 31    & 93      & 21    & \nosol{}   & \notime{}            \\
g07-13     & 32    & 94      & 22    & \nosol{}   & \notime{}            \\
g07-14     & 33    & 95      & 23    & \nosol{}   & \notime{}            \\
g07-15     & 34    & 96      & 24    & \nosol{}   & \notime{}            \\
g08-00     & 21    & 100     & 10    & 92         & \numb{0.00894}       \\
g08-01     & 22    & 101     & 11    & 93         & \numb{0.03214}       \\
g08-02     & 23    & 102     & 12    & 94         & \numb{0.10766}       \\
g08-03     & 24    & 103     & 13    & 95         & \numb{0.38327}       \\
g08-04     & 25    & 104     & 14    & 96         & \numb{1.30899}       \\
g08-05     & 26    & 105     & 15    & 97         & \numb{4.94207}       \\
g08-06     & 27    & 106     & 16    & 98         & \numb{18.08123}      \\
g08-07     & 28    & 107     & 17    & 99         & \numb{68.31416}      \\
g08-08     & 29    & 108     & 18    & 100        & \numb{262.76833}     \\
g08-09     & 30    & 109     & 19    & 101        & \numb{1019.46373}    \\
g08-10     & 31    & 110     & 20    & 102        & \numb{3956.30498}    \\
g08-11     & 32    & 111     & 21    & \nosol{}   & \notime{}            \\
g08-12     & 33    & 112     & 22    & \nosol{}   & \notime{}            \\
g08-13     & 34    & 113     & 23    & \nosol{}   & \notime{}            \\
g08-14     & 35    & 114     & 24    & \nosol{}   & \notime{}            \\
g08-15     & 36    & 115     & 25    & \nosol{}   & \notime{}            \\
g09-00     & 23    & 121     & 11    & 112        & \numb{0.02822}       \\
g09-01     & 24    & 122     & 12    & 113        & \numb{0.15044}       \\
g09-02     & 25    & 123     & 13    & 114        & \numb{0.38309}       \\
g09-03     & 26    & 124     & 14    & 115        & \numb{1.29057}       \\
g09-04     & 27    & 125     & 15    & 116        & \numb{4.62503}       \\
g09-05     & 28    & 126     & 16    & 117        & \numb{17.47903}      \\
g09-06     & 29    & 127     & 17    & 118        & \numb{66.58204}      \\
g09-07     & 30    & 128     & 18    & 119        & \numb{240.70567}     \\
g09-08     & 31    & 129     & 19    & 120        & \numb{939.55325}     \\
g09-09     & 32    & 130     & 20    & 121        & \numb{3748.63517}    \\
g09-10     & 33    & 131     & 21    & \nosol{}   & \notime{}            \\
g09-11     & 34    & 132     & 22    & \nosol{}   & \notime{}            \\
g09-12     & 35    & 133     & 23    & \nosol{}   & \notime{}            \\
g09-13     & 36    & 134     & 24    & \nosol{}   & \notime{}            \\
g09-14     & 37    & 135     & 25    & \nosol{}   & \notime{}            \\
g09-15     & 38    & 136     & 26    & \nosol{}   & \notime{}            \\
g10-00     & 25    & 144     & 12    & 134        & \numb{0.12626}       \\
g10-01     & 26    & 145     & 13    & 135        & \numb{0.35076}       \\
g10-02     & 27    & 146     & 14    & 136        & \numb{1.22802}       \\
g10-03     & 28    & 147     & 15    & 137        & \numb{4.47433}       \\
g10-04     & 29    & 148     & 16    & 138        & \numb{17.18805}      \\
g10-05     & 30    & 149     & 17    & 139        & \numb{63.64039}      \\
g10-06     & 31    & 150     & 18    & 140        & \numb{223.03530}     \\
g10-07     & 32    & 151     & 19    & 141        & \numb{878.75152}     \\
g10-08     & 33    & 152     & 20    & 142        & \numb{3612.40584}    \\
g10-09     & 34    & 153     & 21    & \nosol{}   & \notime{}            \\
g10-10     & 35    & 154     & 22    & \nosol{}   & \notime{}            \\
g10-11     & 36    & 155     & 23    & \nosol{}   & \notime{}            \\
g10-12     & 37    & 156     & 24    & \nosol{}   & \notime{}            \\
g10-13     & 38    & 157     & 25    & \nosol{}   & \notime{}            \\
g10-14     & 39    & 158     & 26    & \nosol{}   & \notime{}            \\
g10-15     & 40    & 159     & 27    & \nosol{}   & \notime{}            \\
g11-00     & 27    & 169     & 13    & 158        & \numb{0.36896}       \\
g11-01     & 28    & 170     & 14    & 159        & \numb{1.14736}       \\
g11-02     & 29    & 171     & 15    & 160        & \numb{4.52897}       \\
g11-03     & 30    & 172     & 16    & 161        & \numb{16.52451}      \\
g11-04     & 31    & 173     & 17    & 162        & \numb{61.29996}      \\
g11-05     & 32    & 174     & 18    & 163        & \numb{216.07445}     \\
g11-06     & 33    & 175     & 19    & 164        & \numb{845.01416}     \\
g11-07     & 34    & 176     & 20    & 165        & \numb{3514.33893}    \\
g11-08     & 35    & 177     & 21    & \nosol{}   & \notime{}            \\
g11-09     & 36    & 178     & 22    & \nosol{}   & \notime{}            \\
g11-10     & 37    & 179     & 23    & \nosol{}   & \notime{}            \\
g11-11     & 38    & 180     & 24    & \nosol{}   & \notime{}            \\
g11-12     & 39    & 181     & 25    & \nosol{}   & \notime{}            \\
g11-13     & 40    & 182     & 26    & \nosol{}   & \notime{}            \\
g11-14     & 41    & 183     & 27    & \nosol{}   & \notime{}            \\
g11-15     & 42    & 184     & 28    & \nosol{}   & \notime{}            \\
g12-00     & 29    & 196     & 14    & 184        & \numb{0.62605}       \\
g12-01     & 30    & 197     & 15    & 185        & \numb{4.49099}       \\
g12-02     & 31    & 198     & 16    & 186        & \numb{15.09835}      \\
g12-03     & 32    & 199     & 17    & 187        & \numb{59.82361}      \\
g12-04     & 33    & 200     & 18    & 188        & \numb{201.24998}     \\
g12-05     & 34    & 201     & 19    & 189        & \numb{790.25169}     \\
g12-06     & 35    & 202     & 20    & 190        & \numb{3361.07545}    \\
g12-07     & 36    & 203     & 21    & \nosol{}   & \notime{}            \\
g12-08     & 37    & 204     & 22    & \nosol{}   & \notime{}            \\
g12-09     & 38    & 205     & 23    & \nosol{}   & \notime{}            \\
g12-10     & 39    & 206     & 24    & \nosol{}   & \notime{}            \\
g12-11     & 40    & 207     & 25    & \nosol{}   & \notime{}            \\
g12-12     & 41    & 208     & 26    & \nosol{}   & \notime{}            \\
g12-13     & 42    & 209     & 27    & \nosol{}   & \notime{}            \\
g12-14     & 43    & 210     & 28    & \nosol{}   & \notime{}            \\
g12-15     & 44    & 211     & 29    & \nosol{}   & \notime{}            \\
g13-00     & 31    & 225     & 15    & 212        & \numb{1.81453}       \\
g13-01     & 32    & 226     & 16    & 213        & \numb{14.46799}      \\
g13-02     & 33    & 227     & 17    & 214        & \numb{56.67415}      \\
g13-03     & 34    & 228     & 18    & 215        & \numb{195.57240}     \\
g13-04     & 35    & 229     & 19    & 216        & \numb{728.65044}     \\
g13-05     & 36    & 230     & 20    & 217        & \numb{3214.44331}    \\
g13-06     & 37    & 231     & 21    & \nosol{}   & \notime{}            \\
g13-07     & 38    & 232     & 22    & \nosol{}   & \notime{}            \\
g13-08     & 39    & 233     & 23    & \nosol{}   & \notime{}            \\
g13-09     & 40    & 234     & 24    & \nosol{}   & \notime{}            \\
g13-10     & 41    & 235     & 25    & \nosol{}   & \notime{}            \\
g13-11     & 42    & 236     & 26    & \nosol{}   & \notime{}            \\
g13-12     & 43    & 237     & 27    & \nosol{}   & \notime{}            \\
g13-13     & 44    & 238     & 28    & \nosol{}   & \notime{}            \\
g13-14     & 45    & 239     & 29    & \nosol{}   & \notime{}            \\
g13-15     & 46    & 240     & 30    & \nosol{}   & \notime{}            \\
g14-00     & 33    & 256     & 16    & 242        & \numb{6.51242}       \\
g14-01     & 34    & 257     & 17    & 243        & \numb{53.63073}      \\
g14-02     & 35    & 258     & 18    & 244        & \numb{189.94231}     \\
g14-03     & 36    & 259     & 19    & 245        & \numb{696.40039}     \\
g14-04     & 37    & 260     & 20    & 246        & \numb{2918.27458}    \\
g14-05     & 38    & 261     & 21    & \nosol{}   & \notime{}            \\
g14-06     & 39    & 262     & 22    & \nosol{}   & \notime{}            \\
g14-07     & 40    & 263     & 23    & \nosol{}   & \notime{}            \\
g14-08     & 41    & 264     & 24    & \nosol{}   & \notime{}            \\
g14-09     & 42    & 265     & 25    & \nosol{}   & \notime{}            \\
g14-10     & 43    & 266     & 26    & \nosol{}   & \notime{}            \\
g14-11     & 44    & 267     & 27    & \nosol{}   & \notime{}            \\
g14-12     & 45    & 268     & 28    & \nosol{}   & \notime{}            \\
g14-13     & 46    & 269     & 29    & \nosol{}   & \notime{}            \\
g14-14     & 47    & 270     & 30    & \nosol{}   & \notime{}            \\
g14-15     & 48    & 271     & 31    & \nosol{}   & \notime{}            \\
g15-00     & 35    & 289     & 17    & 274        & \numb{22.45794}      \\
g15-01     & 36    & 290     & 18    & 275        & \numb{181.21805}     \\
g15-02     & 37    & 291     & 19    & 276        & \numb{653.42481}     \\
g15-03     & 38    & 292     & 20    & 277        & \numb{2738.78011}    \\
g15-04     & 39    & 293     & 21    & \nosol{}   & \notime{}            \\
g15-05     & 40    & 294     & 22    & \nosol{}   & \notime{}            \\
g15-06     & 41    & 295     & 23    & \nosol{}   & \notime{}            \\
g15-07     & 42    & 296     & 24    & \nosol{}   & \notime{}            \\
g15-08     & 43    & 297     & 25    & \nosol{}   & \notime{}            \\
g15-09     & 44    & 298     & 26    & \nosol{}   & \notime{}            \\
g15-10     & 45    & 299     & 27    & \nosol{}   & \notime{}            \\
g15-11     & 46    & 300     & 28    & \nosol{}   & \notime{}            \\
g15-12     & 47    & 301     & 29    & \nosol{}   & \notime{}            \\
g15-13     & 48    & 302     & 30    & \nosol{}   & \notime{}            \\
g15-14     & 49    & 303     & 31    & \nosol{}   & \notime{}            \\
g15-15     & 50    & 304     & 32    & \nosol{}   & \notime{}            \\
}
{\typeGap}

\testset{skutella}{
s1         & 15    & 35      & 8     & 10         & \numb{0.00282}       \\
s2         & 106   & 399     & 50    & \nosol{}   & \nomem{}             \\
}
{\typeGap}

\testset{smc}{
smc01      & 2     & 1       & 1     & 0          & \numb{0.00003}       \\
smc02      & 3     & 3       & 2     & 2          & \numb{0.00004}       \\
smc03      & 4     & 6       & 3     & 3          & \numb{0.00004}       \\
smc04      & 5     & 10      & 4     & 4          & \numb{0.00005}       \\
smc05      & 6     & 15      & 5     & 5          & \numb{0.00008}       \\
smc06      & 7     & 21      & 6     & 6          & \numb{0.00011}       \\
smc07      & 8     & 28      & 7     & 7          & \numb{0.00020}       \\
smc08      & 9     & 36      & 8     & 8          & \numb{0.00073}       \\
smc09      & 10    & 45      & 9     & 9          & \numb{0.00103}       \\
smc10      & 11    & 55      & 10    & 10         & \numb{0.00200}       \\
smc11      & 12    & 66      & 11    & 11         & \numb{0.00284}       \\
smc12      & 13    & 78      & 12    & 12         & \numb{0.01202}       \\
smc13      & 14    & 91      & 13    & 13         & \numb{0.04611}       \\
smc14      & 15    & 105     & 14    & 14         & \numb{0.13398}       \\
smc15      & 16    & 120     & 15    & 15         & \numb{0.31307}       \\
}
{\typeGap}

\newpage

\section{Results on Rectilinear Instances}\label{appendix:rect}
For these experiments, the setup is identical to the one described in Appendix~\ref{appendix:graph}.
For each instance, first, the Hanan grid was computed and written to an instance file
describing an instance of the Steiner tree problem in graphs. The latter instance was then solved. The reported run times only include the time to solve the graphic instance, in particular, the time needed to read in the Hanan grid is not included.

\def\extraspace{-6pt}
\vspace{5pt}

\testset{CARIOCA}
{
carioca\_3\_11\_01 & 1331  & 3630    & 11    & 311221222  & \numb{0.02015}         \\
carioca\_3\_11\_02 & 1331  & 3630    & 11    & 466149453  & \numb{0.01888}         \\
carioca\_3\_11\_03 & 1331  & 3630    & 11    & 439391117  & \numb{0.05784}         \\
carioca\_3\_11\_04 & 1331  & 3630    & 11    & 413409501  & \numb{0.05352}         \\
carioca\_3\_11\_05 & 1331  & 3630    & 11    & 387407782  & \numb{0.00877}         \\
carioca\_3\_12\_01 & 1728  & 4752    & 12    & 494141224  & \numb{0.03222}         \\
carioca\_3\_12\_02 & 1728  & 4752    & 12    & 443366694  & \numb{0.02421}         \\
carioca\_3\_12\_03 & 1728  & 4752    & 12    & 429706282  & \numb{0.08283}         \\
carioca\_3\_12\_04 & 1728  & 4752    & 12    & 486545112  & \numb{0.03070}         \\
carioca\_3\_12\_05 & 1728  & 4752    & 12    & 444438614  & \numb{0.01561}         \\
carioca\_3\_13\_01 & 2197  & 6084    & 13    & 452770450  & \numb{0.03688}         \\
carioca\_3\_13\_02 & 2197  & 6084    & 13    & 462700327  & \numb{0.23799}         \\
carioca\_3\_13\_03 & 2197  & 6084    & 13    & 474263794  & \numb{0.15298}         \\
carioca\_3\_13\_04 & 2197  & 6084    & 13    & 442802506  & \numb{0.01932}         \\
carioca\_3\_13\_05 & 2197  & 6084    & 13    & 547158862  & \numb{0.03185}         \\
carioca\_3\_14\_01 & 2744  & 7644    & 14    & 438382690  & \numb{0.14964}         \\
carioca\_3\_14\_02 & 2744  & 7644    & 14    & 495879854  & \numb{0.13565}         \\
carioca\_3\_14\_03 & 2744  & 7644    & 14    & 480652934  & \numb{0.04798}         \\
carioca\_3\_14\_04 & 2744  & 7644    & 14    & 473370979  & \numb{0.09734}         \\
carioca\_3\_14\_05 & 2744  & 7644    & 14    & 408691456  & \numb{0.04804}         \\
carioca\_3\_15\_01 & 3375  & 9450    & 15    & 603071413  & \numb{0.35178}         \\
carioca\_3\_15\_02 & 3375  & 9450    & 15    & 528575469  & \numb{0.27863}         \\
carioca\_3\_15\_03 & 3375  & 9450    & 15    & 490905559  & \numb{0.15394}         \\
carioca\_3\_15\_04 & 3375  & 9450    & 15    & 540300331  & \numb{0.20102}         \\
carioca\_3\_15\_05 & 3375  & 9450    & 15    & 535330648  & \numb{0.12554}         \\
carioca\_3\_16\_01 & 4096  & 11520   & 16    & 527653658  & \numb{0.18098}         \\
carioca\_3\_16\_02 & 4096  & 11520   & 16    & 500606262  & \numb{0.21274}         \\
carioca\_3\_16\_03 & 4096  & 11520   & 16    & 468414684  & \numb{0.13010}         \\
carioca\_3\_16\_04 & 4096  & 11520   & 16    & 539655260  & \numb{0.23914}         \\
carioca\_3\_16\_05 & 4096  & 11520   & 16    & 540126099  & \numb{0.14038}         \\
carioca\_3\_17\_01 & 4913  & 13872   & 17    & 555032119  & \numb{0.23618}         \\
carioca\_3\_17\_02 & 4913  & 13872   & 17    & 527737314  & \numb{0.93569}         \\
carioca\_3\_17\_03 & 4913  & 13872   & 17    & 562083809  & \numb{0.40511}         \\
carioca\_3\_17\_04 & 4913  & 13872   & 17    & 589257703  & \numb{0.23021}         \\
carioca\_3\_17\_05 & 4913  & 13872   & 17    & 612582645  & \numb{3.85211}         \\
carioca\_3\_18\_01 & 5832  & 16524   & 18    & 556545528  & \numb{0.29135}         \\
carioca\_3\_18\_02 & 5832  & 16524   & 18    & 549027056  & \numb{1.39242}         \\
carioca\_3\_18\_03 & 5832  & 16524   & 18    & 649390363  & \numb{0.70955}         \\
carioca\_3\_18\_04 & 5832  & 16524   & 18    & 554907444  & \numb{0.79962}         \\
carioca\_3\_18\_05 & 5832  & 16524   & 18    & 509234906  & \numb{0.40772}         \\
carioca\_3\_19\_01 & 6859  & 19494   & 19    & 561743715  & \numb{6.78763}         \\
carioca\_3\_19\_02 & 6859  & 19494   & 19    & 608506796  & \numb{1.95704}         \\
carioca\_3\_19\_03 & 6859  & 19494   & 19    & 574021347  & \numb{0.39806}         \\
carioca\_3\_19\_04 & 6859  & 19494   & 19    & 630525105  & \numb{0.34223}         \\
carioca\_3\_19\_05 & 6859  & 19494   & 19    & 650204402  & \numb{6.35446}         \\
carioca\_3\_20\_01 & 8000  & 22800   & 20    & 638376617  & \numb{1.61296}         \\
carioca\_3\_20\_02 & 8000  & 22800   & 20    & 477950448  & \numb{0.14770}         \\
carioca\_3\_20\_03 & 8000  & 22800   & 20    & 746979341  & \numb{8.89011}         \\
carioca\_3\_20\_04 & 8000  & 22800   & 20    & 653809733  & \numb{10.09628}        \\
carioca\_3\_20\_05 & 8000  & 22800   & 20    & 678171940  & \numb{1.64884}         \\
}
{Type: Random 3-d rectilinear instances with coordinates scaled by $10^8$.}

\testset{CARIOCA}
{
carioca\_4\_11\_01 & 14641 & 53240   & 11    & 627022001  & \numb{0.42525}      \\
carioca\_4\_11\_02 & 14641 & 53240   & 11    & 636772154  & \numb{0.22314}      \\
carioca\_4\_11\_03 & 14641 & 53240   & 11    & 607879790  & \numb{1.77909}      \\
carioca\_4\_11\_04 & 14641 & 53240   & 11    & 638743359  & \numb{1.16621}      \\
carioca\_4\_11\_05 & 14641 & 53240   & 11    & 545419447  & \numb{1.88278}      \\
carioca\_4\_12\_01 & 20736 & 76032   & 12    & 641297479  & \numb{0.35005}      \\
carioca\_4\_12\_02 & 20736 & 76032   & 12    & 619890840  & \numb{1.14255}      \\
carioca\_4\_12\_03 & 20736 & 76032   & 12    & 618169838  & \numb{1.15172}      \\
carioca\_4\_12\_04 & 20736 & 76032   & 12    & 573580734  & \numb{0.79095}      \\
carioca\_4\_12\_05 & 20736 & 76032   & 12    & 690707456  & \numb{0.90659}      \\
carioca\_4\_13\_01 & 28561 & 105456  & 13    & 668457902  & \numb{1.77587}      \\
carioca\_4\_13\_02 & 28561 & 105456  & 13    & 732093506  & \numb{5.15086}      \\
carioca\_4\_13\_03 & 28561 & 105456  & 13    & 667816953  & \numb{1.36938}      \\
carioca\_4\_13\_04 & 28561 & 105456  & 13    & 618023300  & \numb{1.46494}      \\
carioca\_4\_13\_05 & 28561 & 105456  & 13    & 816676045  & \numb{9.27424}      \\
carioca\_4\_14\_01 & 38416 & 142688  & 14    & 678015050  & \numb{1.96684}      \\
carioca\_4\_14\_02 & 38416 & 142688  & 14    & 770189931  & \numb{32.55969}     \\
carioca\_4\_14\_03 & 38416 & 142688  & 14    & 774041179  & \numb{7.65624}      \\
carioca\_4\_14\_04 & 38416 & 142688  & 14    & 753394327  & \numb{15.25636}     \\
carioca\_4\_14\_05 & 38416 & 142688  & 14    & 668149161  & \numb{6.90750}      \\
carioca\_4\_15\_01 & 50625 & 189000  & 15    & 867047412  & \numb{3.50454}      \\
carioca\_4\_15\_02 & 50625 & 189000  & 15    & 747749078  & \numb{2.62394}      \\
carioca\_4\_15\_03 & 50625 & 189000  & 15    & 797161324  & \numb{41.34427}     \\
carioca\_4\_15\_04 & 50625 & 189000  & 15    & 682701539  & \numb{9.47503}      \\
carioca\_4\_15\_05 & 50625 & 189000  & 15    & 816563142  & \numb{5.40698}      \\
carioca\_4\_16\_01 & 65536 & 245760  & 16    & 877600183  & \numb{53.43973}     \\
carioca\_4\_16\_02 & 65536 & 245760  & 16    & 840957543  & \numb{11.18493}     \\
carioca\_4\_16\_03 & 65536 & 245760  & 16    & 769487137  & \numb{7.46912}      \\
carioca\_4\_16\_04 & 65536 & 245760  & 16    & 883810994  & \numb{40.60337}     \\
carioca\_4\_16\_05 & 65536 & 245760  & 16    & 844364805  & \numb{26.79172}     \\
carioca\_4\_17\_01 & 83521 & 314432  & 17    & 778812798  & \numb{46.32228}     \\
carioca\_4\_17\_02 & 83521 & 314432  & 17    & 885687619  & \numb{96.92606}     \\
carioca\_4\_17\_03 & 83521 & 314432  & 17    & 866496021  & \numb{345.27471}    \\
carioca\_4\_17\_04 & 83521 & 314432  & 17    & 961783573  & \numb{22.30273}     \\
carioca\_4\_17\_05 & 83521 & 314432  & 17    & 926268906  & \numb{107.80819}    \\
carioca\_4\_18\_01 & 104976 & 396576  & 18    &951149709  & \numb{99.35341}      \\
carioca\_4\_18\_02 & 104976 & 396576  & 18    &842404966  & \numb{98.72303}      \\
carioca\_4\_18\_03 & 104976 & 396576  & 18    &848447422  & \numb{1431.37407}    \\
carioca\_4\_18\_04 & 104976 & 396576  & 18    &956109307  & \numb{212.73207}     \\
carioca\_4\_18\_05 & 104976 & 396576  & 18    &893733310  & \numb{143.34199}     \\
carioca\_4\_19\_01 & 130321 & 493848  & 19    & 917950857  & \numb{186.38252}     \\
carioca\_4\_19\_02 & 130321 & 493848  & 19    & 825014078  & \numb{49.40921}      \\
carioca\_4\_19\_03 & 130321 & 493848  & 19    & 945521812  & \numb{25.66043}      \\
carioca\_4\_19\_04 & 130321 & 493848  & 19    & 912019383  & \numb{160.59101}     \\
carioca\_4\_19\_05 & 130321 & 493848  & 19    & 963415391  & \numb{1625.46514}    \\
carioca\_4\_20\_01 & 160000 & 608000  & 20    & 889180827  & \numb{82.72426}      \\
carioca\_4\_20\_02 & 160000 & 608000  & 20    & 822698792  & \numb{101.10654}     \\
carioca\_4\_20\_03 & 160000 & 608000  & 20    & 884633836  & \numb{125.99493}     \\
carioca\_4\_20\_04 & 160000 & 608000  & 20    & 948878450  & \numb{2618.04508}    \\
carioca\_4\_20\_05 & 160000 & 608000  & 20    & 984006649  & \numb{82.54523}      \\
}
{Type: Random 4-d rectilinear instances with coordinates scaled by $10^8$.}

\testset{CARIOCA}
{
carioca\_5\_11\_01 & 161051 & 732050  & 11    & 925163690  & \numb{34.54747}      \\
carioca\_5\_11\_02 & 161051 & 732050  & 11    & 844673618  & \numb{13.01897}      \\
carioca\_5\_11\_03 & 161051 & 732050  & 11    & 867510918  & \numb{6.29731}       \\
carioca\_5\_11\_04 & 161051 & 732050  & 11    & 906103201  & \numb{14.04304}      \\
carioca\_5\_11\_05 & 161051 & 732050  & 11    & 795198510  & \numb{30.52142}      \\
carioca\_5\_12\_01 & 248832 & 1140480 & 12    & 953491398  & \numb{59.40753}      \\
carioca\_5\_12\_02 & 248832 & 1140480 & 12    & 985601088  & \numb{102.10617}     \\
carioca\_5\_12\_03 & 248832 & 1140480 & 12    & 844385082  & \numb{57.49894}      \\
carioca\_5\_12\_04 & 248832 & 1140480 & 12    & 879014839  & \numb{93.15067}      \\
carioca\_5\_12\_05 & 248832 & 1140480 & 12    & 815604529  & \numb{15.59538}      \\
carioca\_5\_13\_01 & 371293 & 1713660 & 13    & 881473517  & \numb{132.66157}     \\
carioca\_5\_13\_02 & 371293 & 1713660 & 13    & 873559091  & \numb{177.28734}     \\
carioca\_5\_13\_03 & 371293 & 1713660 & 13    & 1005775838 & \numb{184.83810}     \\
carioca\_5\_13\_04 & 371293 & 1713660 & 13    & 922258018  & \numb{237.88241}     \\
carioca\_5\_13\_05 & 371293 & 1713660 & 13    & 879174698  & \numb{46.43080}      \\
carioca\_5\_14\_01 & 537824 & 2497040 & 14    & 1080307930 & \numb{1038.20928}    \\
carioca\_5\_14\_02 & 537824 & 2497040 & 14    & 1082279116 & \numb{150.91310}     \\
carioca\_5\_14\_03 & 537824 & 2497040 & 14    & 931463937  & \numb{284.22063}     \\
carioca\_5\_14\_04 & 537824 & 2497040 & 14    & 1037634219 & \numb{821.46846}     \\
carioca\_5\_14\_05 & 537824 & 2497040 & 14    & 1072793454 & \numb{1224.01345}    \\
carioca\_5\_15\_01 & 759375 & 3543750 & 15    & 1011895745 & \numb{1046.36109}    \\
carioca\_5\_15\_02 & 759375 & 3543750 & 15    & 1067623193 & \numb{888.80817}     \\
carioca\_5\_15\_03 & 759375 & 3543750 & 15    & 1093631593 & \numb{1258.17189}    \\
carioca\_5\_15\_04 & 759375 & 3543750 & 15    & 890715927  & \numb{66.47183}      \\
carioca\_5\_15\_05 & 759375 & 3543750 & 15    & 1112392828 & \numb{638.51534}     \\
carioca\_5\_16\_01 & 1048576 & 4915200 & 16   & 1140155635 & \numb{3259.37667}    \\
carioca\_5\_16\_02 & 1048576 & 4915200 & 16   & 1114675222 & \numb{2047.70027}    \\
carioca\_5\_16\_03 & 1048576 & 4915200 & 16   & 1097447396 & \numb{464.25447}     \\
carioca\_5\_16\_04 & 1048576 & 4915200 & 16   & \nosol{}   & \notime{}            \\
carioca\_5\_16\_05 & 1048576 & 4915200 & 16   & 1034250551 & \numb{815.79351}     \\
carioca\_5\_17\_01 & 1419857 & 6681680 & 17   & 1084906998 & \numb{803.61294}     \\
carioca\_5\_17\_02 & 1419857 & 6681680 & 17   & \nosol{}   & \nomem{}             \\
carioca\_5\_17\_03 & 1419857 & 6681680 & 17   & 1030965254 & \numb{845.50173}     \\
carioca\_5\_17\_04 & 1419857 & 6681680 & 17   & 1154984533 & \numb{3195.59585}    \\
carioca\_5\_17\_05 & 1419857 & 6681680 & 17   & \nosol{}   & \notime{}            \\
carioca\_5\_18\_01 & 1889568 & 8922960 & 18   & \nosol{}   & \nomem{}             \\
carioca\_5\_18\_02 & 1889568 & 8922960 & 18   & \nosol{}   & \nomem{}             \\
carioca\_5\_18\_03 & 1889568 & 8922960 & 18   & 1177091608 & \numb{1081.01490}    \\
carioca\_5\_18\_04 & 1889568 & 8922960 & 18   & \nosol{}   & \notime{}            \\
carioca\_5\_18\_05 & 1889568 & 8922960 & 18   & \nosol{}   & \nomem{}             \\
}
{Type: Random 5-d rectilinear instances with coordinates scaled by $10^8$. Instances with 19 and 20 terminals omitted.}

\testset{bonn-3d}
{
bonn\_3\_21\_1 & 8820  & 25179   & 21    & 6217       & \numb{1.54232}       \\
bonn\_3\_21\_2 & 8820  & 25179   & 21    & 6729       & \numb{1.94449}       \\
bonn\_3\_21\_3 & 8820  & 25179   & 21    & 5738       & \numb{0.94769}       \\
bonn\_3\_22\_1 & 10648 & 30492   & 22    & 6681       & \numb{842.79205}     \\
bonn\_3\_22\_2 & 10648 & 30492   & 22    & 6797       & \numb{3.24157}       \\
bonn\_3\_22\_3 & 10164 & 29084   & 22    & 6941       & \numb{4.23854}       \\
bonn\_3\_23\_1 & 12167 & 34914   & 23    & 6195       & \numb{1.53011}       \\
bonn\_3\_23\_2 & 12167 & 34914   & 23    & 6094       & \numb{2.27521}       \\
bonn\_3\_23\_3 & 11638 & 33373   & 23    & 6398       & \numb{12.39354}      \\
bonn\_3\_24\_1 & 13824 & 39744   & 24    & 6622       & \numb{4.94802}       \\
bonn\_3\_24\_2 & 13248 & 38064   & 24    & 7136       & \numb{14.76881}      \\
bonn\_3\_24\_3 & 13248 & 38064   & 24    & 7014       & \numb{7.01045}       \\
bonn\_3\_25\_1 & 15625 & 45000   & 25    & 6928       & \numb{12.24094}      \\
bonn\_3\_25\_2 & 15625 & 45000   & 25    & 7249       & \numb{9.19540}       \\
bonn\_3\_25\_3 & 15625 & 45000   & 25    & 7504       & \numb{36.02245}      \\
bonn\_3\_26\_1 & 17576 & 50700   & 26    & 8184       & \numb{11.84886}      \\
bonn\_3\_26\_2 & 17576 & 50700   & 26    & 7172       & \numb{6.53602}       \\
bonn\_3\_26\_3 & 16900 & 48724   & 26    & 7732       & \numb{7.60750}       \\
bonn\_3\_27\_1 & 18252 & 52676   & 27    & 7504       & \numb{12.94104}      \\
bonn\_3\_27\_2 & 19683 & 56862   & 27    & 7032       & \numb{20.36483}      \\
bonn\_3\_27\_3 & 18954 & 54729   & 27    & 7688       & \numb{21.09389}      \\
bonn\_3\_28\_1 & 21168 & 61208   & 28    & 8106       & \numb{1795.93082}    \\
bonn\_3\_28\_2 & 21168 & 61208   & 28    & 7774       & \numb{384.24007}     \\
bonn\_3\_28\_3 & 21952 & 63504   & 28    & 7307       & \numb{22.96362}      \\
bonn\_3\_29\_1 & 22736 & 65800   & 29    & 8231       & \numb{248.49025}     \\
bonn\_3\_29\_2 & 23548 & 68179   & 29    & 8542       & \numb{37.71728}      \\
bonn\_3\_29\_3 & 22736 & 65800   & 29    & 8053       & \numb{23.26990}      \\
bonn\_3\_30\_1 & 26100 & 75660   & 30    & 8141       & \numb{181.24496}     \\
bonn\_3\_30\_2 & 26100 & 75660   & 30    & 7897       & \numb{25.81434}      \\
bonn\_3\_30\_3 & 26100 & 75660   & 30    & 8527       & \numb{63.45128}      \\
bonn\_3\_31\_1 & 27869 & 80848   & 31    & 8664       & \numb{16.91324}      \\
bonn\_3\_31\_2 & 28830 & 83669   & 31    & 8409       & \numb{328.87897}     \\
bonn\_3\_31\_3 & 28830 & 83669   & 31    & 7583       & \numb{45.30390}      \\
bonn\_3\_32\_1 & 31744 & 92224   & 32    & 9483       & \numb{59.04668}      \\
bonn\_3\_32\_2 & 31744 & 92224   & 32    & 8828       & \numb{2371.74264}    \\
bonn\_3\_32\_3 & 31744 & 92224   & 32    & 8261       & \numb{8.53939}       \\
bonn\_3\_33\_1 & 33792 & 98240   & 33    & 9588       & \numb{46.32895}      \\
bonn\_3\_33\_2 & 35937 & 104544  & 33    & 9658       & \numb{465.67574}     \\
bonn\_3\_33\_3 & 31744 & 92224   & 33    & 8902       & \numb{83.40442}      \\
bonn\_3\_34\_1 & 37026 & 107745  & 34    & 9045       & \numb{410.33255}     \\
bonn\_3\_34\_2 & 38148 & 111044  & 34    & 9664       & \numb{1822.08714}    \\
bonn\_3\_34\_3 & 37026 & 107745  & 34    & 9105       & \numb{959.83567}     \\
bonn\_3\_35\_1 & 41650 & 121345  & 35    & \nosol{}   & \notime{}            \\
bonn\_3\_35\_2 & 40460 & 117844  & 35    & 9372       & \numb{513.59911}     \\
bonn\_3\_35\_3 & 40460 & 117844  & 35    & 9803       & \numb{528.18720}     \\
bonn\_3\_36\_1 & 42768 & 124632  & 36    & 9353       & \numb{2906.58774}    \\
bonn\_3\_36\_2 & 45360 & 132264  & 36    & 9118       & \numb{87.81291}      \\
bonn\_3\_36\_3 & 44100 & 128555  & 36    & \nosol{}   & \notime{}            \\
bonn\_3\_37\_1 & 49284 & 143819  & 37    & 9379       & \numb{811.91222}     \\
bonn\_3\_37\_2 & 47915 & 139786  & 37    & 9596       & \numb{1602.80635}    \\
bonn\_3\_37\_3 & 49284 & 143819  & 37    & 8768       & \numb{113.70363}     \\
bonn\_3\_38\_1 & 53428 & 156028  & 38    & \nosol{}   & \notime{}            \\
bonn\_3\_38\_2 & 52022 & 151885  & 38    & 9895       & \numb{317.96666}     \\
bonn\_3\_38\_3 & 52022 & 151885  & 38    & \nosol{}   & \notime{}            \\
bonn\_3\_39\_1 & 54834 & 160171  & 39    & 9757       & \numb{581.36245}     \\
bonn\_3\_39\_2 & 54834 & 160171  & 39    & \nosol{}   & \notime{}            \\
bonn\_3\_39\_3 & 57798 & 168909  & 39    & \nosol{}   & \notime{}            \\
bonn\_3\_40\_1 & 64000 & 187200  & 40    & 9024       & \numb{25.43975}      \\
bonn\_3\_40\_2 & 57798 & 168909  & 40    & 9633       & \numb{710.09162}     \\
bonn\_3\_40\_3 & 59280 & 173278  & 40    & 10696      & \numb{1183.49634}    \\
bonn\_3\_45\_1 & 89100  & 261315 & 45    & \nosol{}   & \notime{}            \\
bonn\_3\_45\_2 & 89100  & 261315 & 45    & \nosol{}   & \notime{}            \\
bonn\_3\_45\_3 & 85140  & 249613 & 45    & \nosol{}   & \notime{}            \\
bonn\_3\_50\_1 & 110544 & 324721 & 50    & \nosol{}   & \nomem{}             \\
bonn\_3\_50\_2 & 117600 & 345598 & 50    & \nosol{}   & \notime{}            \\
bonn\_3\_50\_3 & 112800 & 331394 & 50    & \nosol{}   & \nomem{}             \\
bonn\_3\_55\_1 & 160380 & 472284 & 55    & \nosol{}   & \notime{}            \\
bonn\_3\_55\_2 & 151470 & 445881 & 55    & \nosol{}   & \notime{}            \\
bonn\_3\_55\_3 & 154548 & 455004 & 55    & 12138      & \numb{6201.10443}    \\
bonn\_3\_60\_1 & 198417 & 585044 & 60    & \nosol{}   & \notime{}            \\
bonn\_3\_60\_2 & 201898 & 595369 & 60    & \nosol{}   & \notime{}            \\
bonn\_3\_60\_3 & 198476 & 585220 & 60    & \nosol{}   & \nomem{}             \\

}
{Type: Random 3-d rectilinear instances. Coordinates were chosen uniformly at random from $\{0, 1, \ldots, 999\}$.}

\end{document}